\documentclass[12pt]{article}
\usepackage{amsfonts,amssymb,amsmath,amsthm,latexsym,a4wide,amsfonts,mathrsfs,bm,color,graphicx,amsthm,epsfig,multirow}
\usepackage{rotating}
\usepackage{lscape}
\usepackage{graphics}
\usepackage{mathrsfs,color}
\usepackage[round]{natbib}
\usepackage{booktabs}
\usepackage{setspace}
\usepackage{amsmath}
\usepackage{amsfonts}
\usepackage{amssymb}
\usepackage{graphicx,threeparttable}%
\usepackage{graphicx}
\usepackage{epstopdf}
\usepackage{epsfig}	
\usepackage{float}
\usepackage{amsfonts}
\usepackage{rotating}
\usepackage{fullpage}
\usepackage{setspace}
\usepackage{enumerate}
\usepackage{booktabs,threeparttable}
\usepackage{authblk}
\usepackage{subfigure}
\usepackage[dvipsnames]{xcolor}

\setlength{\parskip}{1ex plus 1.0ex minus 0.2ex}
\newtheorem{theorem}{Theorem}
\newtheorem{condition}{Condition}

\def\var{\hbox{$\mathrm{var}$}}
\newcommand{\e}{\mathbb{E}}
\newcommand{\tr}{\textrm{tr}}

\newcommand{\n}{\textrm{N}}

\newcommand{\hqed}{\hfill\qed}

\allowdisplaybreaks

\def\A{{\bf A}}

\def\B{{\bf B}}
\def\V{{\bf V}}

\def\N{\mbox{ $\mathcal{N}$}}

\def\bc{{\bf c}}
\def\w{{\bf w}}
\def\X{{\bf X}}
\def\x{{\bf x}}
\def\I{{\bf I}}

\def\Q{{\bf Q}}
\def\P{{\bf P}}
\def\E{{\mathbb{E}}}
\def\Ra{{\cal{D}}}

\def\Y{{\bf Y}}

\def\Z{{\bf Z}}
\def\M{{\bf M}}

\def\u{{\bf u}}
\def\e{{\bf e}}

\def\bmu{{\boldsymbol{\mu}}}

\def\blam{{\boldsymbol{\lambda}}}
\def\bthe{{\boldsymbol{\theta}}}

\def\bSig{{\boldsymbol{\Sigma}}}
\def\bLam{{\boldsymbol{\Lambda}}}
\def\beps{{\boldsymbol{\epsilon}}}
\def\bOme{{\boldsymbol{\Omega}}}
\def\bet{{\boldsymbol{\eta}}}
\def\brho{{\boldsymbol{\rho}}}
\def\biota{{\boldsymbol{\iota}}}
\def\E{\mathbb{E}}
\def\supw{\underset{\w\in \mathcal H}{\sup}}
\def\supwd{\underset{\w\in \mathcal H,d\in\Ra}{\sup}}
\def\infw{\underset{\w\in \mathcal H}{\inf}}

\def\calA{{\cal{A}}}

\def\calJ{{\cal{J}}}
\def\calH{{\cal{H}}}

\def\calS{{\cal{S}}}

\def\calT{{\cal{T}}}

\def\wh{\widehat}
\def\wt{\widetilde}

\def\n{\nonumber}
\def\be{\begin{eqnarray}}
\def\ee{\end{eqnarray}}

\def\sumt{\sum_{t\in\calT_0}}
\def\sumT{\sum_{t\in\calT_1}}
\def\sumi{\sum_{i=1}^J}
\def\sumj{\sum_{j=1}^J}

\pagebreak[4]

\begin{document}
	\begin{singlespace}
		\title{Asymptotic Properties of the Synthetic Control Method\thanks{We thank Stephane Bonhomme, Dick van Dijk, Yagan Hazard, and participants of the 16th International Symposium on Econometric Theory and Applications, the 2nd International Econometrics PhD conference, and the seminar at Erasmus University Rotterdam for valuable comments.}}
		
\author{Xiaomeng Zhang\footnote{Academy of Mathematics and Systems Science, Chinese Academy of Sciences},\quad Wendun Wang\footnote{Econometric Institute, Erasmus University Rotterdam; Tinbergen Institute}, \quad Xinyu Zhang\footnote{Academy of Mathematics and Systems Science, Chinese Academy of Sciences}}

		\date{\today}
		\maketitle
		
\begin{abstract}
\noindent 
This paper provides new insights into the asymptotic properties of the synthetic control method (SCM). We show that the synthetic control (SC) weight converges to a limiting weight that minimizes the mean squared prediction risk of the treatment-effect estimator when the number of pretreatment periods goes to infinity, and we also quantify the rate of convergence. Observing the link between the SCM and model averaging, we further establish the asymptotic optimality of the SC estimator under imperfect pretreatment fit, in the sense that it achieves the lowest possible squared prediction error among all possible treatment effect estimators that are based on an average of control units, such as matching, inverse probability weighting and difference-in-differences. The asymptotic optimality holds regardless of whether the number of control units is fixed or divergent. Thus, our results provide justifications for the SCM in a wide range of applications. The theoretical results are verified via simulations.
			\newline
			\newline
			\textbf{JEL classification}: C13, C21, C23  \newline
			\textbf{Keywords}:  Synthetic control method; Model averaging; Asymptotic optimality; Linear factor model; Policy evaluation
		\end{abstract}
	\end{singlespace}
	\thispagestyle{empty}
	
	\newpage
	\section{Introduction}
	The synthetic control method (SCM), proposed by
        \citet{abadie2003} and \citet{abadie2010}, has become one of
        the most popular approaches for policy evaluation. 
        The idea of the SCM is to
        construct a synthetic control (SC) unit intended to mimic the
        behavior of pretreatment outcomes of the treated unit to the greatest possible, such that one can use the posttreatment outcome of the synthetic control as the counterfactual of the treated outcome. Thus, in a seminal paper, \citet{abadie2010} regards a good pretreatment fit as a prerequisite for the standard SCM to work. 
        %
        Despite its wide range of applications and generalizations, the theoretical (asymptotic) properties of the SCM, especially under imperfect pretreatment fit, are relatively less studied; see \citet{abadie2021jel} for a review.
	
	Assuming that outcomes are generated from a linear factor model, \cite{ferman2021QE} investigates the asymptotic behavior of the SC weight when the pretreatment fit is not perfect. They demonstrate that the SC weight converges to a limit that generally does not recover the factor structure of the treated unit as the number of pretreatment periods increases, implying an asymptotic bias of the SC estimate of the treatment effect. 
	\cite{ferman2021} further examines the case of imperfect pretreatment fit when the number of control units goes to infinity and derives the conditions under which the factor structure of the treated unit can be asymptotically recovered by the SC unit, such that the SC estimator is asymptotically unbiased. A crucial condition here is that there exist weights that are diluted among an increasing number of control units and at the same time can produce an SC unit whose factor structure asymptotically reconstructs that of the treated unit. While the analysis is very informative, the existence of such weights is not automatically guaranteed. It is not yet clear in which cases these presumed diluting weights exist and which quantities affect the convergence of the SC weight. Moreover, it is also unclear how the SC estimator performs compared to other popularly used treatment-effect estimators when the pretreatment fit is imperfect.
	
	In this paper, we provide new insights into the asymptotic properties of the SCM. First, we show that the SC weight
    converges to a limiting weight that minimizes the mean squared prediction risk of the treatment-effect estimator, and we also
    quantify the rate of convergence. Under a widely studied linear factor model \citep[see, e.g.,][]{abadie2010,hsiao2019,botosaru2019}, we demonstrate that our limiting weight is compatible with the infeasible weight studied in \cite{ferman2021QE}, and thus it also
    balances the two parts of errors in fitting the factor
    structure and idiosyncratic shocks. This result confirms the
    finding in \citet{ferman2021QE} that the SC estimator
    is asymptotically biased under imperfect pretreatment fit,  and is also in line with the argument of \citet{bottmer2021} that the SC estimator is generally biased in a design-based framework. The property of being biased while asymptotically reaching the minimum mean squared prediction risk suggests that there is a bias-variance tradeoff in the SC estimator under imperfect pretreatment fit.  
    We also complement \citet{ferman2021QE} by quantifying the rate of
    convergence. We find that a better pre- and posttreatment fit
    both facilitate the convergence of the SC weight as expected. The role of the number of pretreatment periods  is mixed, depending on the goodness of fit before and after the treatment. We also find that a larger number of control units is associated with a slower convergence rate.
        
    Second, our derivation also offers a method to verify the existence of the diluting weight assumed in \cite{ferman2021}. We provide a sufficient condition under which our limiting weight is diluted and can asymptotically reconstruct the factor structure of the treated unit when the number of control units diverges. Intuitively, it requires that the information contained in the factor structure neither diminishes nor diverges. Nondiminishing guarantees that the factor structure of the treated unit can be reconstructed by the SC unit and nondivergence guarantees that weights can dilute among control units.
	
	Finally, motivated by the bias-variance tradeoff of the SC estimator, we explore its (expected) mean squared prediction error (MSPE), namely the (expected) mean squared loss of the SC estimator in the posttreatment periods.\footnote{We emphasize the prediction aspect of errors to highlight the out-of-sample feature of the treatment-effect estimate as it employs pretreatment information to extrapolate posttreatment counterfactual outcomes.} We show that under imperfect pretreatment fit\footnote{Here an imperfect pretreatment fit refers to the situation where there are no weights such that the outcome and observed covariates of the SC unit precisely equal those of the treated unit at each pretreatment period. Further discussion on the definition of imperfect fit is provided below Condition~\ref{as:xi1}.}, the SC estimator is asymptotically optimal in the sense that it achieves the lowest possible prediction risk (and loss) among all possible treatment effect estimators that are based on a (weighted) average of control units, when the number of pretreatment periods goes to infinity. Importantly, the asymptotic optimality holds regardless of
    whether the number of control units is fixed or divergent. In fact, under a fixed number of control units, the SCM makes the optimal tradeoff between bias and variance. With an increasing number of control units, the SC estimator is asymptotically unbiased, and thus the variance of the SC estimator converges to its lower bound. 
	Note that there are several treatment effect estimators that are based on a (weighted) average of the control units, such as the matching estimator, the inverse probability weighting (IPW), and difference-in-differences (DID) when an intercept is introduced \citep{doudchenko2016}, our result implies that these methods cannot outperform the SCM in terms of MSPE under imperfect pretreatment fit, at least asymptotically. The asymptotic optimality of SCM provides theoretical foundation for the numerical finding of \citet{bottmer2021} that SC estimators produce lower root mean squared errors than difference-in-means or DID in their simulation studies. 
	It is worth noting that while we base on linear factor models to demonstrate the main results, the asymptotic optimality continues to hold in a model-free setup, namely without assuming that outcomes are generated by a linear factor model (see Section~\ref{sec:general-model}). This generalization alleviates the concern that linear factor models are not sufficiently general to depict the data generating process (DGP) of potential outcomes, and provides guarantee for the SCM in a broad range of situations.

	Our analysis offers a justification for the SCM under imperfect pretreatment fit. \citet{abadie2010} requires perfect pretreatment fit to guarantee the unbiasedness of the SC estimator. \cite{botosaru2019} relaxes the perfect fit of covariates and shows that the asymptotic unbiasedness still holds as long as the pretreatment fit of outcomes is perfect. This relaxation comes at the cost of imposing stronger assumptions on the effects of covariates. \cite{ferman2021QE} shows that the SC estimator is
    biased under imperfect pretreatment fit, and \cite{ferman2021} further argues that this bias is diminishing as the number of
        control units increases. Our result of asymptotic optimality
        of the SCM does not rely on the divergence of the number of
        control units or the effect of covariates. It suggests that
        although the SC estimator may be biased under imperfect
        pretreatment fit, it is still the best choice among all other
        estimators of a similar construction. Considering the fact that an absolutely perfect fit is hardly achievable in real data, our result significantly widens the applicability of the SCM.

	Our analysis of asymptotic optimality is inspired by and contributes to the optimal model averaging literature
        \citep[see, e.g.,][among many others]{hansen2007, wan2010,
          liu2013, zhang2021}, which seeks to achieve the best
        prediction by optimally combining estimators obtained from
        candidate models (with different specifications). Observing
        the link between the SCM and optimal model averaging, we build
        upon the asymptotic optimality of averaging estimators to examine the properties of the SC estimator. Unlike model averaging that combines the estimates of candidate models, the SCM synthesizes  the realized (posttreatment) observations of the control units. Thus, our analysis does not constrain the behavior of candidate controls.
	An important contribution of this paper to the model averaging literature is that we provide the first attempt to prove the asymptotic optimality of the out-of-sample \emph{prediction} risk, which accounts for the randomness of both outcomes and weights. 
	In contrast, the majority of optimal averaging studies examine the risk while assuming that the weights are fixed.
	The only exception is \cite{zhang2021}, who analyze the risk accounting for the randomness of weights, but they only consider the in-sample risk. 
	
	In an independent and parallel study, \citet{chen2022} associates SCM with online learning and shows that SCM can perform almost as well as the best weighted average in a worse-case scenario. While their overall conclusion is somewhat related with our asymptotic optimality, 
	we differ from this study from several major perspectives. First, to link with online learning, \cite{chen2022} considers a thought experiment, where the outcomes are generated by an adversary and become sequentially available over time, such that the SC weights are calculated in a time-varying manner. While the adversarial framework is useful in several senses, it restricts the analysis to one-step-ahead prediction of the outcome, which is regarded as an undesirable departure from the standard SCM as the author points out in his conclusion. In contrast, we consider the standard setup of SCM with pretreatment outcomes fully accessible for prediction. This framework facilitates multiple-step-ahead prediction, and thus allows estimation of the treatment effect in multiple posttreatment periods simultaneously, a more common practice for SCM. Second, the targeting best weight in \citet{chen2022} is defined to minimize the mean squared loss over both pre- and posttreatment periods. Such an objective function also deviates from the goal of SCM (and other methods for treatment evaluation) that focuses on only the posttreatment prediction accuracy. With the pretreatment fit also accounted for in the evaluation, the overall regret guarantee in \citet{chen2022} does not necessarily imply the minimum loss in the posttreatment periods. In contrast, our asymptotic optimality explicitly concerns the mean squared loss in the posttreatment periods. Finally, \cite{chen2022} illustrates the advantage of SCM in terms of the regret bound,  while our asymptotic optimality concerns the ratio of (expected) MSPE over its infimum. Under certain regularity conditions, the asymptotic optimality can imply the bound convergence of the corresponding regret (see the end of Section~\ref{sec:opt-no-intercept} for details).  

	The remainder of this paper is organized as follows. Section
    \ref{sec:model framework} describes the model setup. Section
        \ref{sec:theoretical results} presents the main theoretical
        results, where we examine the convergence and establish the
        asymptotic optimality. The theory is verified via simulations
        in Section \ref{sec:simulation}, and Section
        \ref{sec:discussion} concludes the paper. The proofs of the main theorems are provided in the Appendix. The Online Appendix contains additional theoretical results.

	\section{Model framework} 
	\label{sec:model framework}
	
	Our framework considers the canonical SCM panel data, where there are $i\in\{0,1,\ldots,J\}$ units with $i=0$ being the only treated unit and the remaining $\{1,\ldots,J\}$ being the control units. The set of potential control units is also referred to as the ``donor pool''.
	Suppose that the treatment is assigned after time 0; then, we denote $\calT_0=\{-T_0+1,\ldots,0\}$ and $\calT_1=\{1,\ldots,T_1\}$ as $T_0$ pretreatment and $T_1$ posttreatment periods, respectively.
	Potential outcomes are denoted by $y_{i,t}^I$ when unit $i$ is
    treated at time $t$ and by $y_{i,t}^N$ when not treated.

	Following~\citet{ferman2021QE}, we assume that the potential outcomes are generated from a factor structure, i.e., 
	\be
	y_{i,t}^N&=&\bthe^{\top}_t\Z_i+\blam^{\top}_t\bmu_i+\delta_t+c_i+\epsilon_{i,t},\label{DGP-factor model}\\
	y_{i,t}^I&=&\alpha_{i,t}+y_{i,t}^N,
	\label{DGP-treatment}
	\ee
	where $\blam_t=\left(\lambda_{1,t},\ldots,\lambda_{F,t}\right)^\top$ is an $F\times1$ vector of unobserved common factors with unknown factor loadings $\bmu_i=\left(\mu_{i,1},\ldots,\mu_{i,F}\right)^\top$ $(F\times1)$, 	$\Z_i$ is an $r\times1$ vector of observed time-invariant covariates that are not affected by the treatment, $\bthe_t$ $(r\times1)$ represents the associated unknown parameters, $\delta_t$ can be interpreted as an unknown common factor with homogeneous loadings across units, $c_i$ captures the unobserved individual heterogeneity, and $\epsilon_{i,t}$ is the idiosyncratic shock. Here both $\delta_t$ and $c_i$ can be included in the linear factor structure $\blam^{\top}_t\bmu_i$; thus, our setup also encompasses  \citet{abadie2010}, \citet{botosaru2019}, \citet{Powell2022} and \citet{ferman2021}, among others. In this article, we treat $r$ and $F$ as fixed constants.
	We denote $\M=\left(\bmu_1,\ldots,\bmu_J\right)$, $\beps_t=\left(\epsilon_{0,t},\ldots,\epsilon_{J,t}\right)^\top$ and $\bc=\left(c_1,\ldots,c_J\right)^\top$. All matrices and vectors are marked in bold. The linear factor model seems a general and benchmark setup for most theoretical analysis of SCM and are particularly useful to illustrate some properties. Nonetheless, considering that the SCM may still be applicable even without specifying an outcome model, we shall relax this model assumption and analyze the properties of SCM in a model-free setup in Section~\ref{sec:general-model}.
	
	The interest is in estimating the treatment effect $\alpha_{i,t}$ of the single treated unit ($i=0$) for $t\in\mathcal{T}_1$. However, in practice, one cannot observed all potential outcomes but only the realized outcomes given as  
	\be
	y_{i,t}=d_{i,t}y_{i,t}^I+(1-d_{i,t})y_{i,t}^N, \quad i\in\{0,1,\ldots,J\}\ \textrm{and}\ t\in\calT_0\cup\calT_1,
	\label{observation}
	\ee 
	with $d_{i,t}$ being a dummy that equals 1  for the treated unit $i=0$ at $t\in\calT_1$, zero otherwise.
	The idea of the SCM is to construct an SC unit out of the donor pool, whose outcome serves as a proxy for the counterfactual outcome of the treated unit, i.e.,
	\be
	\wh\alpha_{0,t}(\wh\w)=y_{0,t}^I-\wh y_{0,t}^N(\wh\w)=y_{0,t}-\sumj \wh w_j y_{j,t}, \quad t\in\calT_1,
	\n
	\ee
	where $\wh\w$ is the SC weight determined by some predictors. We focus on the popular specification where the predictors include all pretreatment outcomes \citep[see, e.g.,][]{doudchenko2016,ferman2021QE,ferman2021}, and we also allow for observed covariates.
	Let $\X_i=\left(y^N_{i,-T_0+1},\ldots,y^N_{i,0},\Z_i^\top\right)^\top=\left(y_{i,-T_0+1},\ldots,y_{i,0},\Z_i^\top\right)^\top$ be a $(T_0+r)\times1$ vector of the pretreatment characteristics of unit $i$ for all $i$, and define $\boldsymbol{\mathcal{X}}_c=\left(\X_1,\ldots,\X_J\right)$ with subscript $c$ representing control units. The original SCM restricts the weights to be in the set $\calH_{\text{orig}}=\left\{\w=(w_1,\ldots,w_J)^\top\in[0,1]^J\mid\sumj w_j=1\right\}$, such that control units are combined convexly. To allow for possible extrapolation, we relax the positivity constraint and consider $\calH=\left\{\w=(w_1,\ldots,w_J)^\top\in[-C_\text{L},C_\text{U}]^J\mid\sumj w_j=1\right\}$, where $C_\text{L}$ and $C_\text{U}$ are two nonnegative constants. Clearly, $\calH\supseteq\calH_{\text{orig}}$, and thus our results also hold for the original SC weight. Note that since $\calH$ is bounded, the range of extrapolation it permits is bounded, and the sum-up-to-unity restriction still plays a crucial role.
	
	For any given positive definite matrix $\V$, the SC weight can be obtained by solving the following optimization:
	$$
	\wh\w(\V)=\underset{\w\in\mathcal{H}_{\text{orig}}}{\arg\min}\sqrt{\left(\X_0-\boldsymbol{\mathcal{X}}_c\w\right)^\top\V\left(\X_0-\boldsymbol{\mathcal{X}}_c\w\right)}.
	$$
	For the sake of simplicity, we follow \cite{ferman2021QE} and \cite{ferman2021} to set $\V=\I_{T_0+r}$, where $\I_{T_0+r}$ denotes the $(T_0+r)\times(T_0+r)$ identity matrix.
	Then, the SC weight can be obtained by
	\be
	\wh\w&=&\underset{\w\in\mathcal{H}}{\arg\min} L_{T_0}(\w)\n\\
	&=&\underset{\w\in\mathcal{H}}{\arg\min}\frac{1}{T_0}\left\|\X_0-\boldsymbol{\mathcal{X}}_c\w\right\|^2\n\\
	&=&\underset{\w\in\mathcal{H}}{\arg\min} \frac{1}{T_0}
	\left\{\sum_{t\in\calT_0}\left(y_{0,t}-\sumj w_j y_{j,t}\right)^2+\left\|\Z_0-\sumj w_j\Z_j\right\|^2\right\},
	\label{SC weight}
	\ee 
	where $\left\|\x\right\|=\sqrt{\x^\top\x}$ for any vector $\x$. 
	We first analyze the properties of the SC weight defined as above. In Section~\ref{sec:optimality with intercept}, we consider adding an intercept to~\eqref{SC weight}, such that the SCM can also be compared with the difference-in-differences (DID) estimator.

	\section{Theoretical results}\label{sec:theoretical results}
	
	This section presents the main theoretical results. Considering the fact that MSPE is widely used to evaluate the accuracy of treatment effect estimates, we focus on the (expectation of) MSPE and study how the SC estimator and weights are related to their optimal counterparts in terms of minimizing the MSPE.

	We first show that the SC weights converge to the infeasible optimal weights that minimize the risk of MSPEs. Our convergence
        results 
        suggest that the SCM generally leads to a biased treatment effect estimator but potentially with a reduced variance. We also complement the existing asymptotic analysis of the SC weight by quantifying the rate of convergence. Furthermore, we establish the asymptotic optimality of the SC estimator in the sense that it achieves the lowest possible loss among all possible averaging estimators of the control units.  
	Unless otherwise stated, all limiting properties hold when the number of pretreatment periods goes to infinity, i.e., $T_0\to\infty$.

	\subsection{Convergence of the SC weight}\label{sec:convergence}
	
	To show the convergence, we assume that the following regularity conditions hold. 
	\begin{condition}
		\label{as:data stochastic}\  
		\begin{enumerate}[(i)]
			\item \label{as:data fixed}
			We treat $\left\{ \bmu_i, c_i, \Z_i\mid i\in\{0,1,\ldots,J\} \right\}$, $\left\{ \blam_t, \delta_t\mid t\in\calT_0\cup\calT_1 \right\}$ and $\left\{\alpha_{0,t}\mid t\in\calT_1\right\}$ as fixed and $\left\{ \epsilon_{i,t}\mid i\in\{0,1,\ldots,J\}, t\in\calT_0\cup\calT_1 \right\}$ as stochastic.
			\item \label{as:expection of error term}
			$\E\epsilon_{i,t}=0$ for $i\in\{0,1,\ldots,J\}$ and $t\in\calT_0\cup\calT_1$.
		\end{enumerate}
	\end{condition}
	Condition~\ref{as:data stochastic} concerns the sampling of data. Condition~\ref{as:data stochastic}~\eqref{as:data fixed} is to simplify the proof, and a similar assumption is also used in \citeauthor{ferman2021QE} (2021, Assumption~2.1) and \citeauthor{ferman2021} (2021, Assumption~2). 
	Condition~\ref{as:data stochastic}~\eqref{as:expection of error term} requires the idiosyncratic shocks to have a zero-mean, which appears as a rather standard assumption in the SCM literature \citep[see, e.g.,][among others]{xu2017, botosaru2019, ferman2021, eli2021}. It can be interpreted as ``selection on unobservables'' because it allows for any form of dependence between the treatment assignment and the factor structure as long as the assignment is uncorrelated with idiosyncratic shocks \citep{ferman2021QE}.

	\begin{condition}
		\label{as:factor bound}\ 
		\begin{enumerate}[(i)]
			\item\label{lambda}	${T_1}^{-1}\sumT\left({T_0}^{-1}\sum_{k\in\calT_0}\blam_k^\top\blam_k-\blam_t^\top\blam_t\right)=O\left(T_0^{-1/2}\right).$
			\item \label{delta}  ${T_1}^{-1}\sumT\left({T_0}^{-1}\sum_{k\in\calT_0}\delta_k^2-\delta_t^2\right)=O\left(T_0^{-1/2}\right).$
			
		\end{enumerate}
	\end{condition}
	Condition~\ref{as:factor bound} requires that the variation of
        the common factors does not change substantially after
        treatment. This condition is in line with the rationality of the SCM since it implies that the main difference between the pre- and posttreatment outcomes is exclusively due to the treatment effect. Only in this case can the pretreatment data be used to construct the SC unit and estimate the treatment effect for the posttreatment periods.  
	If we treat $\left\{\blam_t\right\}$ and $\left\{\delta_t\right\}$ as stochastic and assume $T_1$ to be divergent at rate $O(T_0)$, then using the central limit theorem for dependent observations \citep[see, e.g., Theorems~5.16 and 5.20 in][]{white1984}, we can obtain that
	\begin{eqnarray}
	&&\frac{1}{\sqrt{T_0}}\sumt\blam_t^{\top}\blam_t={T_0}^{-1/2}\sumt\E\blam_t^{\top}\blam_t+O_p(1),  \ \frac{1}{\sqrt{T_0}}\sumT\blam_t^{\top}\blam_t={T_1}^{-1/2}\sumT\E\blam_t^{\top}\blam_t+O_p(1),\nonumber\\
	&&\frac{1}{\sqrt{T_0}}\sumt\delta_t^2={T_0}^{-1/2}\sumt\E\delta_t^2+O_p(1),\ \quad\qquad
	\frac{1}{\sqrt{T_0}}\sumT\delta_t^2={T_1}^{-1/2}\sumT\E\delta_t^2+O_p(1)\nonumber.
	\end{eqnarray} 
	In this case, Condition~\ref{as:factor bound} is satisfied in probability under certain stability assumptions, such as Assumptions~4--5 in \cite{ferman2021QE}.
	This condition also allows for ``divergent'' factors, i.e., $T_0^{-1}\sumt\blam_t\to\infty$. To see this more explicitly, consider a simple example where $T_1=T_0$, $F=1$ and $\lambda_{1,t}=|t-t_0|^{1/2}$ for $t\in\calT_0\cup\calT_1$, where $t_0$ denotes the time of treatment and $t_0=0$ with our notation.
	In this example, $T_0^{-1}\sumt\blam_t=T_0^{-1}\sum_{t=1}^{T_0}t^{1/2}$ goes to infinity, 
	and thus $\blam_{t}$ is a divergent factor.
	The condition is satisfied because ${T_1}^{-1}\sumT\left({T_0}^{-1}\sum_{k\in\calT_0}\blam_k^{\top}\blam_k-\blam_t^{\top}\blam_t\right)=0$. 
	Nevertheless, not all kinds of divergent factors satisfy this condition, e.g., $\lambda_{1,t}=t$. 

	\begin{condition}
		There exists a constant $C_0$ such that $\mu_{il}<C_0$ and $c_i<C_0$ for $i\in\{0,1,\ldots,J\}$ and $l\in\{1,\ldots,F\}$.
		\label{as:factor loading}
	\end{condition}
	Condition~\ref{as:factor loading} requires the uniform boundedness of the factor loadings and time-invariant fixed effects. 
	This condition easily holds under the often assumed conditions for factor identifiability, e.g., $\M\M^\top$ being a diagonal matrix \citep{xu2017} or $T_0^{-1}\M\M^\top\rightarrow\I_{F}$ \citep{stock2002}, where we recall that $\M=\left(\bmu_1,\ldots,\bmu_J\right)$. The same condition is also used in \citeauthor{ferman2021} (2021, Assumption~3.2(b)).  
	The uniform boundedness of $c_i$ is also a mild condition once we note that $c_i$ can be regarded as the loading of the factor $\lambda_t=1$.

	\begin{condition}
		\label{as:covariates}\ 
		\begin{enumerate}[(i)]
			\item \label{theta}
			${T_1}^{-1}\sumT\left({T_0}^{-1}\sum_{k\in\calT_0}\bthe_k^\top\bthe_k-\bthe_t^\top\bthe_t\right)=O(T_0^{-1/2}).$
			\item \label{Z}
			There exists a constant $C_z$ such that $\left\|\Z_i\right\|<C_z$ for $i\in\{0,1,\ldots,J\}$.
		\end{enumerate}
	\end{condition}
	Condition~\ref{as:covariates} concerns the variability of the observed covariates and their corresponding coefficients, and Conditions~\ref{as:covariates}~\eqref{theta} and~\eqref{Z} can be viewed as covariate versions of Conditions~\ref{as:factor bound} and~\ref{as:factor loading}, respectively. Specifically,  Condition~\ref{as:covariates}~\eqref{theta} requires that the change in the coefficients before and after the treatment is limited, and it obviously holds for any time-invariant coefficients. Condition~\ref{as:covariates}~\eqref{Z} requires uniformly bounded covariates across units. Both parts of Condition~\ref{as:covariates} can be justified in the same way as  
	Conditions~\ref{as:factor bound} and~\ref{as:factor loading}. Note that \cite{ferman2021} assumes that the observed covariates and their coefficients satisfy the same set of conditions for factors and loadings when he proves the results with covariates (see Section~A.2.5 of Supplementary Material in \cite{ferman2021}), which resembles our strategy here.

	We also need some restrictions on the relation between the idiosyncratic shock of the treated and control units. Let $e_{t,\epsilon}^{(i)}=\epsilon_{0,t}-\epsilon_{i,t}$ for $i\in\{1,\ldots,J\}$ and $t\in\calT_0\cup\calT_1$, 
	$R_{T_0}(\w)=\E L_{T_0}(\w)$ and $\xi_{T_0}=\inf_{\w \in \calH}R_{T_0}(\w)$. Intuitively, we can interpret $\xi_{T_0}$ as a measure of pretreatment fit.
	\begin{condition}
		\label{as:post error term bias} 
		$\supw\left|{T_0}^{-1}\sumt\E\left(\sumj w_j  e_{t,\epsilon}^{(j)}\right)^2-{T_1}^{-1}\sumT\E\left(\sumj w_j  e_{t,\epsilon}^{(j)}\right)^2\right|=o\left( \xi_{T_0}\right)$.
	\end{condition}
	This condition means that, with respect to the goodness of pretreatment fit, the difference in idiosyncratic shocks between the treated and any weighted average of control units does not change substantially after treatment. Generally, it is more likely to hold when the idiosyncratic shocks are more stable over time or more alike across units. For example, in the setting of \citet{ferman2021} where $\{\epsilon_{0,t},\epsilon_{1t},\ldots,\epsilon_{j,t}\}$ are of zero-mean and independent across units, we have 
	\be
	\E\left(\sumj w_j  e_{t,\epsilon}^{(j)}\right)^2
	=\E\left(\epsilon_{0,t}^2\right)+\sumj w_j^2\E\left(\epsilon_{j,t}^2\right),\quad t\in\calT_0\cup\calT_1.\n
	\ee
	If we further assume that the sequence $\{\epsilon_{i,t}\}_{t\in\calT_0\cup\calT_1}$ is stationary, then Condition~\ref{as:post error term bias} holds, because
	\be
	&&\supw\left| 
	\frac{1}{T_0}\sumt\E\left(\sumj w_j  e_{t,\epsilon}^{(j)}\right)^2-\frac{1}{T_1}\sumT\E\left(\sumj w_j  e_{t,\epsilon}^{(j)}\right)^2\right|
	\n\\
	&=&\supw\left| \frac{1}{T_0}\sumt\E\left(\epsilon_{0,t}^2\right)+\frac{1}{T_0}\sumt\sumj w_j^2\E\left(\epsilon_{j,t}^2\right)\right.
	\n\\
	&&\left.
	-\frac{1}{T_1}\sumT\E\left(\epsilon_{0,t}^2\right)-\frac{1}{T_1}\sumT\sumj w_j^2\E\left(\epsilon_{j,t}^2\right) \right|
	\n\\
	&=&0.
	\n
	\ee
	
	To state the next condition, define  $\Y_i=\left(y_{i,-T_0+1},\ldots,y_{i,0}\right)^\top$ for $i\in\{0,1,\ldots,J\}$,
	${\boldsymbol{\mathcal{Y}}_c}=\left(\Y_1,\ldots,\Y_J\right)$ and $\bSig=T_0^{-1}\E\left({\boldsymbol{\mathcal{Y}}_c}^\top{\boldsymbol{\mathcal{Y}}_c}\right)$, where the subscript ``$c$'' indicates control units. We use $\lambda_{min}(\cdot)$ and $\lambda_{max}(\cdot)$ to represent the minimum and maximum eigenvalue of a matrix.
	
	\begin{condition}
		\label{as:Y_c}
		There exist constants $\kappa_1$ and $\kappa_2$ such that
		$0<\kappa_1\leq\lambda_{min}\left(\bSig\right)\leq\lambda_{max}\left(\bSig\right)\leq\kappa_2$.
	\end{condition}
	This condition bounds the variability of the pretreatment
        outcomes of control units from both below and above. It can
        also be satisfied under many standard setups of the SCM. For example, consider pure factor models as the DGP, as in \cite{hsiao2019} and \cite{eli2021}, i.e., $y_{i,t}^N=\blam^{\top}_t\bmu_i+\epsilon_{i,t}$, and it can be written in a matrix form as ${\boldsymbol{\mathcal{Y}}_c}=\bLam^{-\top}\M+\beps^{-}$, where 
	$\bLam^{-}=\left(\blam_{-T_0+1},\ldots,\blam_{0}\right), \M=\left(\bmu_1,\ldots,\bmu_J\right)$
	and  $\beps^-=\left(\beps_{-T_0+1},\ldots,\beps_{0}\right)$ being a $T_0\times J$ matrix of idiosyncratic shocks with $\beps_t=\left(\epsilon_{0,t},\ldots,\epsilon_{J,t}\right)^\top$. In this case, we have $\bSig=T_0^{-1}\M^\top\bLam^{-}\bLam^{{-}\top}\M+T_0^{-1}\E\beps^{{-}\top}\beps^{-}$.
	Note
        that  $$\lambda_{min}\left(\A\right)+\lambda_{min}\left(\B\right)\leq\lambda_{min}\left(\A+\B\right)\leq\lambda_{max}\left(\A+\B\right)\leq\lambda_{max}\left(\A\right)+\lambda_{max}\left(\B\right),$$
        for any Hermite matrices $\A$ and $\B$ of the same
        order. Thus, we only need to study
        $\lambda_{min}\left(T_0^{-1}\M^\top\bLam^-\bLam^{-\top}\M\right)$
        and
        $\lambda_{min}\left(T_0^{-1}\E\beps^{-\top}\beps^{-}\right)$
        for the lower bound of $\lambda_{min}\left(\bSig\right)$ and study $\lambda_{max}\left(T_0^{-1}\M^\top\bLam^{-}\bLam^{{-}\top}\M\right)$ and $\lambda_{max}\left(T_0^{-1}\E\beps^{-\top}\beps^{-}\right)$ for the upper bound of $\lambda_{max}\left(\bSig\right)$.
	Under Condition~\ref{as:data stochastic}~\eqref{as:expection
          of error term},
        $\lambda_{min}\left(T_0^{-1}\E\beps^{-\top}\beps^{-}\right)$
        and
        $\lambda_{max}\left(T_0^{-1}\E\beps^{-\top}\beps^{-}\right)$
        are both finite as long as the idiosyncratic shock has a
        finite variance. For the other two terms involving factors and
        loadings, it is common to assume that
        $T_0^{-1}\bLam^{-}\bLam^{-\top}=\I_{F}$ when the factor is not
        divergent \citep[see, e.g.,][]{bai2009,xu2017,eli2021}. Thus, $\lambda_{min}\left( T_0^{-1}\M^\top\bLam^{-}\bLam^{-\top}\M \right)=\lambda_{min}\left( \M^\top\M \right)$ and $\lambda_{max}\left( T_0^{-1}\M^\top\bLam^{-}\bLam^{-\top}\M \right)=\lambda_{max}\left( \M^\top\M \right)$. Therefore, Condition~\ref{as:Y_c} can be simplified to stating that there exist constants $\wt\kappa_1$ and $\wt\kappa_2$ such that 
	\be
	0<\wt\kappa_1\leq\lambda_{min}\left(\M^\top\M\right)\leq\lambda_{max}\left(\M^\top\M\right)<\wt\kappa_2,
	\n
	\ee
	which is often used in factor models and resembles the rank condition in standard regressions \citep{bai2009,xu2017}. As for the diverging factors, Condition~\ref{as:Y_c} can be satisfied with a more restrictive assumption on the loadings than $T_0^{-1}\bLam^{-}\bLam^{-\top}=\I_{F}$.
	
	To evaluate the performance of the SC treatment effect estimator, we consider the MSPE for some weight $\w\in\calH$, defined as 
	\be
	L_{T_1}(\w)&=&\frac{1}{T_1}\sumT\left\{\alpha_{0,t}-\wh\alpha_{0,t}(\w)\right\}^2\n\\
	&=&\frac{1}{T_1}\sumT\left\{y_{0,t}^N-\wh y_{0,t}^N(\w)\right\}^2\\
	&=&\frac{1}{T_1}\sumT\left(y_{0,t}^N-\sumj  w_j y_{j,t}\right)^2,
	\n
	\label{LT1}
	\ee
	and its risk is $R_{T_1}(\w)=\E L_{T_1}(\w)$. The optimal
        weight vector for a given $T_1$ is defined as the minimizer of the risk, i.e.,
	\be
	\w^{\text{opt}}_{T_1}=\underset{\w\in\calH}{\arg\min} R_{T_1}(\w).
	\label{w0}
	\ee 
	
	\begin{theorem} \label{th:con}	
		Given any $T_1$, if $\w^{\text{opt}}_{T_1}$ is an interior point of $\calH$ and Conditions~\ref{as:data stochastic}--\ref{as:Y_c} hold, {then} 
		\begin{equation}
		\left\|\wh\w-\w^{\text{opt}}_{T_1}\right\|=O_p\left( T_0^{\nu}\xi_{T_0}^{1/2}+T_0^{\nu}\xi_{T_1}^{1/2}+T_0^{-1/4+\nu}J\right),
		\n
		\end{equation}
		where $\nu>0$ is a sufficiently small constant, $\xi_{T_0}=\inf_{\w \in \calH}R_{T_0}(\w)$, and $\xi_{T_1}=\inf_{\w \in \calH}R_{T_1}(\w)$.
	\end{theorem}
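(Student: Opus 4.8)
The plan is to read Theorem~\ref{th:con} as an argmin-consistency-with-rate statement in which the estimator $\wh\w$ minimizes the empirical pretreatment criterion $L_{T_0}$ while the target $\w^{\text{opt}}_{T_1}$ minimizes the posttreatment risk $R_{T_1}$; the whole argument then consists of reconciling the three objects $L_{T_0}$, $R_{T_0}=\E L_{T_0}$ and $R_{T_1}$. The first step is algebraic: because every $\w\in\calH$ has $\sumj w_j=1$, the common loading $\delta_t$ cancels and $y_{0,t}^N-\sumj w_j y_{j,t}=b_t(\w)+\eta_t(\w)$, where $b_t(\w)=\bthe_t^{\top}(\Z_0-\sumj w_j\Z_j)+\blam_t^{\top}(\bmu_0-\sumj w_j\bmu_j)+(c_0-\sumj w_j c_j)$ is nonrandom under Condition~\ref{as:data stochastic} and $\eta_t(\w)=\sumj w_j e_{t,\epsilon}^{(j)}$. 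Using $\E\epsilon_{i,t}=0$, both $R_{T_0}$ and $R_{T_1}$ are then convex quadratics in $\w$: each equals the period-average of $b_t(\w)^2$ plus the period-average of $\E\eta_t(\w)^2$, with $R_{T_0}$ additionally carrying the covariate penalty $T_0^{-1}\|\Z_0-\sumj w_j\Z_j\|^2$, which is $O(J^2/T_0)$ uniformly on $\calH$ by Condition~\ref{as:covariates} (since $\|\w\|_1=O(J)$ on $\calH$).

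Second, I would prove a quadratic-growth bound for $R_{T_1}$ at its interior minimizer. The Hessian of $R_{T_0}$ in $\w$ equals $2\bSig+2T_0^{-1}\Z_c^{\top}\Z_c\succeq 2\kappa_1\I$ by Condition~\ref{as:Y_c}, where $\Z_c=(\Z_1,\ldots,\Z_J)$; restricting to the constraint plane $\{v:\sumj v_j=0\}$ and using $\sumj v_j\delta_t=0$, the form $v^{\top}\bSig v$ splits as a signal part $T_0^{-1}\sumt(\sumj v_j g_{j,t})^2$ (with $g_{j,t}=\bthe_t^{\top}\Z_j+\blam_t^{\top}\bmu_j+c_j$) plus an idiosyncratic part $T_0^{-1}\sumt\E(\sumj v_j\epsilon_{j,t})^2$. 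Conditions~\ref{as:factor bound} and~\ref{as:covariates} transfer the signal part to $\calT_1$: one rewrites $T_0^{-1}\sumt(\sumj v_j g_{j,t})^2$ as a quadratic form in the fixed-length vector $(\bthe_t^{\top},\blam_t^{\top},1)^{\top}$ whose coefficients are $\w$-dependent but $O(J)$ in norm (using the boundedness of loadings and covariates from Conditions~\ref{as:factor loading} and~\ref{as:covariates}), so the difference over $\calT_0$ and $\calT_1$ is controlled by the operator norm of a fixed-size matrix difference times $O(J)$; and Condition~\ref{as:post error term bias}, after polarizing over $\calH$ to pass from $\E\eta_t(\w)^2$ to $\E(\sumj v_j\epsilon_{j,t})^2$, transfers the idiosyncratic part. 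This gives $\lambda_{\min}$ of the plane-restricted Hessian of $R_{T_1}$ bounded below by a positive constant for $T_0$ large, and since $\w^{\text{opt}}_{T_1}$ is interior the first-order term vanishes on the affine hull of $\calH$, so $R_{T_1}(\w)-R_{T_1}(\w^{\text{opt}}_{T_1})\gtrsim\|\w-\w^{\text{opt}}_{T_1}\|^2$ for all $\w\in\calH$. It therefore suffices to bound $R_{T_1}(\wh\w)-R_{T_1}(\w^{\text{opt}}_{T_1})$.

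Third, from the optimality $L_{T_0}(\wh\w)\le L_{T_0}(\w^{\text{opt}}_{T_1})$ and inserting $R_{T_0},R_{T_1}$ one gets
$$
R_{T_1}(\wh\w)-R_{T_1}(\w^{\text{opt}}_{T_1})\ \le\ 2\supw\big|L_{T_0}(\w)-R_{T_0}(\w)\big|+2\supw\big|R_{T_0}(\w)-R_{T_1}(\w)\big|,
$$
and both terms are refined. The deterministic gap $\supw|R_{T_0}(\w)-R_{T_1}(\w)|$ splits into a bias part of order $O(J^2T_0^{-1/2})$ by Conditions~\ref{as:factor bound}--\ref{as:covariates}, an idiosyncratic part of order $o(\xi_{T_0})$ by Condition~\ref{as:post error term bias}, and the $O(J^2/T_0)$ covariate penalty; after square roots these produce the $\xi_{T_0}^{1/2}$ and $J\,T_0^{-1/4}$ pieces of the rate. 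For the stochastic term, write $L_{T_0}(\w)-R_{T_0}(\w)=T_0^{-1}\sumt\{2b_t(\w)\eta_t(\w)+\eta_t(\w)^2-\E\eta_t(\w)^2\}$; rather than bounding it uniformly on $\calH$, I would localize by a peeling argument to the slice $\{\w:R_{T_0}(\w)\le\xi_{T_0}+\delta\}$, which contains $\wh\w$ once $\delta$ dominates the process's own fluctuation there. On that slice a Bernstein-type maximal inequality over the $J$-dimensional $\w$ controls the linear-in-$\w$ part by $T_0^{\nu}\{J(\xi_{T_0}+\delta)/T_0\}^{1/2}$ and the quadratic $\eta_t(\w)^2$ part by $T_0^{\nu}J/T_0$; solving the resulting fixed-point inequality in $\delta$, with AM--GM absorbing the square root into an $\xi_{T_1}$ term (the posttreatment risk of the optimal weight, around which fluctuations of that order are unavoidable) and a $J/T_0$ term and using $|\xi_{T_0}-\xi_{T_1}|\le\supw|R_{T_0}-R_{T_1}|$, yields $R_{T_1}(\wh\w)-R_{T_1}(\w^{\text{opt}}_{T_1})\lesssim T_0^{2\nu}(\xi_{T_0}+\xi_{T_1})+J^2T_0^{-1/2+2\nu}$. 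Combining with the quadratic-growth bound and taking square roots gives the stated rate; the constant $\nu>0$ is precisely the slack that absorbs the slowly varying/polylogarithmic factors generated by the maximal inequalities.

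The hard part will be this last stochastic step: bounding a $J$-dimensional empirical process whose summands are quadratic in $\w$ (through $\eta_t(\w)^2$ and $b_t(\w)\eta_t(\w)$) under only weak-dependence and moment assumptions on $\{\epsilon_{i,t}\}$, and doing so with the localization needed to convert a crude $O_p$ bound into one expressed through $\xi_{T_0}$ and $\xi_{T_1}$ rather than through the diameter of $\calH$. A secondary obstacle is the pre/post transfer of both the curvature and the bias gap, because Conditions~\ref{as:factor bound} and~\ref{as:covariates} only constrain scalar averages such as $T_0^{-1}\sum_{\calT_0}\blam_t^{\top}\blam_t$ versus $T_1^{-1}\sum_{\calT_1}\blam_t^{\top}\blam_t$: one must recast every period-average of $b_t(\w)^2$ and of $(\sumj v_j g_{j,t})^2$ as a quadratic form in the fixed-length vector $(\bthe_t^{\top},\blam_t^{\top},1)^{\top}$ — exploiting that $r$ and $F$ are fixed — and bound the operator norm of the resulting fixed-size matrix difference, carefully handling the $\bthe_t$--$\blam_t$ cross terms; the $\|\w\|_1=O(J)$ blow-up of these coefficients is what ultimately produces the factor $J$ in the rate.
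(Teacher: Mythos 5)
Your overall architecture (argmin consistency via a basic inequality, quadratic growth of the population criterion, and uniform control of the empirical process) is a legitimate M-estimation route, and your deterministic pre/post transfer of the risk gap matches the paper's own computation of $\supw|R_{T_0}(\w)-R_{T_1}(\w)|=O(T_0^{-1/2}J^2)+o(\xi_{T_0})$ almost exactly. But the step you yourself flag as "the hard part" is a genuine gap, and not merely a technical one: Theorem~\ref{th:con} assumes only Conditions~\ref{as:data stochastic}--\ref{as:Y_c}, which give $\E\epsilon_{i,t}=0$ and second-moment information on the outcomes (through $\lambda_{\max}(\bSig)\le\kappa_2$) but no fourth moments and no dependence restriction on $\{\epsilon_{i,t}\}$. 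Any Bernstein-type maximal inequality, or even a pointwise LLN, for $L_{T_0}(\w)-R_{T_0}(\w)=T_0^{-1}\sumt\{\eta_t(\w)^2-\E\eta_t(\w)^2+2b_t(\w)\eta_t(\w)\}$ requires controlling the variance of centered squares of the shocks, i.e.\ exactly the mixing and fourth-moment assumptions (Conditions~\ref{as:data mixing} and~\ref{as:error term}) that the paper introduces only later, for Theorem~\ref{th:opt}. So your localization/peeling step cannot be executed under the stated hypotheses, and the proposal as written proves a weaker theorem with stronger assumptions.

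The paper avoids uniform stochastic control entirely by working with the sample objective's exact quadratic structure around the target point (a Fan--Li-type local argument): writing $D(\u)=L_{T_0}(\w^{\text{opt}}_{T_1}+\tau_{T_0}\u)-L_{T_0}(\w^{\text{opt}}_{T_1})=\Delta_1+\Delta_2$ with no remainder, the quadratic term is bounded below by $\kappa_1\tau_{T_0}^2\|\u\|^2$ via the Gram-matrix eigenvalue bound, and the linear term is bounded by Cauchy--Schwarz through the \emph{single} random quantity $\|\Y_0-{\boldsymbol{\mathcal{Y}}_c}\w^{\text{opt}}_{T_1}\|$, whose order follows from $\E\|\Y_0-{\boldsymbol{\mathcal{Y}}_c}\w^{\text{opt}}_{T_1}\|^2\le T_0R_{T_0}(\w^{\text{opt}}_{T_1})$, Markov's inequality, and the deterministic identity $R_{T_0}(\w^{\text{opt}}_{T_1})-\xi_{T_1}=O(T_0^{-1/2}J^2)+o(\xi_{T_0})$. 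This is where the three pieces $T_0^{\nu}\xi_{T_1}^{1/2}$, $T_0^{-1/4+\nu}J$ and $T_0^{\nu}\xi_{T_0}^{1/2}$ of the rate actually come from --- the $T_0^{\nu}$ slack absorbs the gap between $O_p(\tau_{T_0}\cdot)$ in $\Delta_1$ and $\tau_{T_0}^2$ in $\Delta_2$, not polylogarithmic factors from a maximal inequality. A second, smaller issue with your plan: you place the curvature in $R_{T_1}$ and must therefore transfer the Hessian lower bound from pre- to posttreatment, which is only valid when $T_0^{-1/2}J^2+o(\xi_{T_0})=o(1)$; the paper sidesteps this by taking the curvature from $T_0^{-1}{\boldsymbol{\mathcal{Y}}_c}^\top{\boldsymbol{\mathcal{Y}}_c}$ directly, so the pre/post transfer is needed only at zeroth order and only at the single point $\w^{\text{opt}}_{T_1}$.
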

	Theorem~\ref{th:con} shows that as $T_0$ and $T_1\to\infty$, the SC weight $\wh\w$ converges to the optimal weight vector sequence $\w^{\text{opt}}_{T_1}$ at a rate that depends on $\xi_{T_0}$, $\xi_{T_1}$ and $J$.\footnote{The posttreatment periods $T_1$ plays a role in convergence via $\xi_{T_1}$.} We discuss the roles of $\xi_{T_0}$, $\xi_{T_1}$, $T_0$ and $J$ in turn. First, a faster rate of $\xi_{T_0}$ and $\xi_{T_1}$ going to zero implies quicker convergence of $\wh\w$. Recall that $\xi_{T_0}$ is a measure of pretreatment fit, and thus the theorem clearly links a good pretreatment fit with accurate weight estimation. 
	Second, the number of pretreatment periods $T_0$ plays a mixed
        role in the convergence rate, depending on the goodness of
        pre- and posttreatment fit. If the fit is good in both
        regimes, then the dominant term is $T_0^{-1/4+\nu}J$, and an
        increase in $T_0$ improves the accuracy of estimated
        weights. However, when the fit is poor, the net effect of
        increasing $T_0$ may be detrimental to weight
        convergence. This result is not surprising because a poor fit
        implies that the predictors are not informative for
        constructing the SC unit. Thus, increasing the sample size of uninformative or even misleading predictors drives the SC weight further from optimality.
	Finally, a larger $J$ is associated with a slower convergence rate. 
	Note that this result does not conflict with the asymptotic
        unbiasedness of the SC estimator when $J\to\infty$ shown by
        \citet{ferman2021}, because the SC weight may still converge to
        the unbiased limiting weight but just at a lower rate (see
        below for further detail on the relation with \citet{ferman2021}). Considering that $J$ corresponds to the number of weight parameters to be estimated, less accurate estimates are expected when the dimension of parameters increases. The negative role of $J$ in the rate of convergence is consistent with the conjecture of \citet{ferman2021} that stronger assumptions on the moments of $\epsilon_{i,t}$ are required when $J$ diverges at a faster rate.

	We discuss how Theorem~\ref{th:con} is related to the asymptotic result of \cite{ferman2021QE}.
	First, we show that the limit of the SC weight defined in~\eqref{w0} is compatible with the infeasible weight studied in \cite{ferman2021QE}, which we denoted as $\w^*_{\text{FP}}$ and satisfies $(\bmu_0^\top,c_0)^\top\neq(\M^\top,\bc)^\top\w^*_{\text{FP}}$. 
	To remain close to the benchmark setup of \citet{ferman2021QE}, we consider a simplified version of our DGP without observed covariates and assume that $T_0^{-1}\sumt\beps_t\beps_t^\top{\rightarrow}_p \sigma^2_{\beps}\I_{J+1}$ and $T_0^{-1}\sumt\blam_t\blam_t^\top{\rightarrow}_p \bOme_0$, where $\bOme_0$ is a positive semidefinite matrix.
	\citet{ferman2021QE} shows that the original SC weight ($\w\in\mathcal{H}_{\text{orig}}$) converges in probability to   $\w^*_{\text{FP}}$ that minimizes the following quantity
	$$Q_0(\w)=\left\{\left(c_0-\bc^\top\w\right)^2+\left(\bmu_0-\M\w\right)^\top\bOme_0\left(\bmu_0-\M\w\right)\right\}+\sigma^2_{\epsilon}\left(1+\w^\top\w\right).
	$$
	Recall that the limiting weight we study minimizes $R_{T_1}(\w)$. Thus, we examine how $R_{T_1}(\w)$ is related to $Q_0(\w)$. Denote $\beps_{c,t}=(\epsilon_{1,t},\ldots,\epsilon_{J,t})^\top$ as the error vector of control units. Then, $R_{T_1}(\w)$ can be decomposed as
	\be
	R_{T_1}(\w)&=&\left\{\left(c_0-\bc^\top\w\right)^2+\frac{1}{T_1}\sumT\left(\bmu_0-\M\w\right)^\top\blam_t\blam_t^\top\left(\bmu_0-\M\w\right)\right\}
	\n\\
	&&+\frac{1}{T_1}\sumT\E\left(\epsilon_{0,t}-\beps_{c,t}^\top\w\right)^2.
	\label{R1 decomposition}
	\ee
	We can show that $R_{T_1}(\w){\rightarrow}_p Q_0(\w)$ for any $\w\in\calH$ if $\{\blam_{t}\}$ and $\{\epsilon_{i,t}\}$ are both stationary for all $t$ such that $T_1^{-1}\sumT\beps_t\beps_t^\top{\rightarrow}_p \sigma^2_{\beps}\I_{J+1}$ and $T_1^{-1}\sumT\blam_t\blam_t^\top{\rightarrow}_p \bOme_0$. 
	This result implies that minimizing $R_{T_1}(\w)$ is asymptotically equivalent to minimizing $Q_0(\w)$, and thus our weight limit $\w^{\text{opt}}_{T_1}$ is asymptotically identical to the limit $\w^*_{\text{FP}}$ considered in \citet{ferman2021QE}  if the factors and idiosyncratic shocks are stationary.
	Due to this asymptotic equivalence, $\w_{T_1}^{\text{opt}}$ also fails to recover the factor structure. To see this more explicitly, note from~\eqref{R1 decomposition} that $R_{T_1}(\w)$ is  composed of two parts of errors when approximating the treated unit using the SC unit: the error of approximating the factors and the error of approximating the idiosyncratic shock, similar to the decomposition of $Q_0(\w)$. 
	As a result, $\w^{\text{opt}}_{T_1}$ needs to balance the two parts of errors in $R_{T_1}(\w)$, thus deviating from the minimizer of the first part and failing to recover the true factor structure, i.e., $(\bmu_0^\top,c_0)^\top\neq(\M^\top,\bc)^\top\w_{T_1}^{\text{opt}}$. It further implies that the resulting SC estimator generally does not converge to the targeted treatment effect $\alpha_{0,t}$, confirming the conclusion of \citet{ferman2021QE}. We complement \citet{ferman2021QE} by quantifying the rate of convergence. 
	
	
	The result in Theorem~\ref{th:con} also offers a verification of the important assumption (Assumption~3.2) of \cite{ferman2021} to guarantee the asymptotic unbiasedness of the SC estimator, i.e., there exists a weight vector $\w^*\in\mathcal{H}_{\text{orig}}$ such that $\left\|\w^{*}\right\|\to_p0$ and $\left\|\bmu_0-\M\w^{*}\right\|\to_p0$. 
	We show that as $J$ diverges, the optimal weight $\w^{\text{opt}}_{T_1}$ defined by~\eqref{w0} is a candidate choice that could satisfy Assumption 3.2 of \citet{ferman2021}, i.e., $\left\|\w^{\text{opt}}_{T_1}\right\|\to0$ and $\left\|\bmu_0-\M\w^{\text{opt}}_{T_1}\right\|\to0$.  
	To see this more explicitly, we focus on the weight $\w_{T_1}^\text{opt}\in\calH_{\text{orig}}$ and follow \citet{ferman2021} to assume that $\{\epsilon_{i,t}\}_{t\in\calT_0\cup\calT_1}$ are independent across $i$ and $\var(\epsilon_{i,t})=\sigma_{\epsilon}^2$ for the sake of simplification. We can (re-)define the factors and loadings, such that $c_i$ is absorbed into the factor structure $\blam_t^\top\bmu_i$ as \cite{ferman2021}.  Thus, $R_{T_1}(\w)$ can be written as
	\be
	R_{T_1}(\w)&=&\frac{1}{T_1}\sumT\left(\bmu_0-\M\w\right)^\top\blam_t\blam_t^\top\left(\bmu_0-\M\w\right)+\sigma^2_{\epsilon}\left(1+\w^\top\w\right).
	\label{R1 decomposition simplifed}
	\n
	\ee
	Denote $\Q=T_1^{-1}\sumT^\top\blam_t\blam_t^\top$. 
	We can analytically obtain the solution of $\w_{T_1}^\text{opt}$ by solving the optimization~\eqref{w0} with Karush-Kuhn-Tucker conditions as
	\be
	\w_{T_1}^\text{opt}=\left(\M^\top\Q\M+\sigma_{\epsilon}^2\I_{J}\right)^{-1}\left(\M^\top\Q\bmu_{0}+\frac{1}{2}\brho_1-\frac{1}{2}\rho_2\biota_{J}\right),
	\label{w_opt}
	\ee
	where $\brho_1$ is a nonnegative $J\times1$ constant vector,  $\rho_2$ is a constant, and $\biota_{J}$ is a $J\times 1$ vector of ones.  In the Online Appendix,
	we show that  $\left\|\bmu_0-\M\w^{\text{opt}}_{T_1}\right\|\to0$ and $\left\|\w^{\text{opt}}_{T_1}\right\|\to0$ hold if there exist positive constants $c_1$, $c_2$, $c_3$ and $c_4$ such that
	\be
	c_1 \leq \lambda_{min}\left(J^{-1}\M\M^\top\right) \leq  \lambda_{max}\left(J^{-1}\M\M^\top\right) \leq c_2
	\label{bound of MM}
	\ee
	and
	\be
	c_3 \leq \lambda_{min}\left(\Q\right) \leq  \lambda_{max}\left(\Q\right) \leq c_4.
	\label{bound of Q}
	\ee
Hence, our analysis of the convergence of SC weights provides conditions under which Assumption 3.2 of \citet{ferman2021} holds such that the asymptotic unbiasedness of the SC estimator can be achieved when $J$ diverges. Intuitively, it requires that the information contained in the factor structure neither diminishes nor diverges. Such upper and lower bounds guarantee that the factors are informative (to be able to reconstruct) and do not dominate the idiosyncratic shocks (so that the weights can dilute among control units), respectively.

	Theorem~\ref{th:con} also provides a new perspective for
        understanding the role of covariates. \cite{botosaru2019}
        shows that a perfect fit of observed covariates is not
        essential to achieve the asymptotic unbiasedness of SC
        estimators as long as the fit of outcomes is good, but a
        better fit of covariatesis associated with tighter
        bounds. Theorem~\ref{th:con} illustrates the role of
        covariates via the convergence rate, and it suggests that a
        better fit of covariates (hence a smaller $\xi_{T_0}$) helps promote the convergence of the SC weight. This result is in line with the study of bounds by \citet{botosaru2019}. 

	To conclude this subsection, the above discussion shows that the SC estimator constructed using the limiting optimal weight $\w_{T_1}^{\text{opt}}$ minimizes the expected MSPE but also suffers from an asymptotic bias under fixed $J$. This result suggests a bias-variance tradeoff when using the averaged outcome of the control units for treatment effect evaluation and motivates us to further study how the SC weight $\wh\w$ compare with other weighting schemes in terms of MSPE.
	
	\subsection{Asymptotic optimality of SC estimators} \label{sec:optimality}
	In this subsection, we investigate how the SC weights balance the bias and variance. We first establish the asymptotic optimality for SC estimators without an intercept and then consider the case with an intercept. We also compare the SC estimator with other treatment effect estimators that also involve a weighted average of control units.
	
	\subsubsection{Asymptotic optimality of SCM without an intercept}\label{sec:opt-no-intercept}
	We need some additional conditions.
	\begin{condition}
		\label{as:data mixing}
		For any $i\in\{0,1,\ldots,J\}$, $\left\{\epsilon_{i,t}\right\}$ is either $\alpha$-mixing with the mixing coefficient $\alpha=-r/(r-2)$ or $\phi$-mixing with the mixing coefficient $\phi=-r/(2r-1)$ for $r\geq 2$.
	\end{condition}
	Condition~\ref{as:data mixing} restricts the dependence of the idiosyncratic shocks. A similar assumption is needed in \cite{ferman2021} (see Assumption~3.1~(b)), while \cite{abadie2010} imposes a stronger requirement that $\left\{\epsilon_{i,t}\right\}$ are independent across units and over time.
	
	\begin{condition}\
		\begin{enumerate}[(i)]
			\item \label{as:moment of error term}
			There exists a constant $C_1$ such that $\E\epsilon^4_{i,t}\leq C_1<\infty$ for $i\in\{0,1,\ldots,J\}$ and $t\in\calT_0\cup\calT_1$.
			\item \label{as:var of pre error term}
			There exists a constant $C_2$ such that $\var\left(T_0^{-1/2}\sumt  e_{t,\epsilon}^{(i)}e_{t,\epsilon}^{(j)}\right)\geq C_2>0$ for all $T_0$ sufficiently large and any $i, j\in\{1,\ldots,J\}$.
			\item \label{as:var of post error term}
			There exists a constant $C_3$ such that  $\var\left(T_1^{-1/2}\sumT  e_{t,\epsilon}^{(i)}e_{t,\epsilon}^{(j)}\right)\geq C_3>0$ for all $T_1$ sufficiently large and any $i, j\in\{1,\ldots,J\}$.
		\end{enumerate}
		\label{as:error term}	
	\end{condition}
	Condition~\ref{as:error term} provides a set of regularity conditions to apply the central limit theorem for the dependent process; see \cite{schonfeld1971}, \cite{scott1973} 
	and \cite{wooldridge1988}. 
	Condition~\ref{as:error term}~\eqref{as:moment of error term} requires that all idiosyncratic shocks should not have heavy tails such that their fourth moments can be uniformly bounded. The same condition is also imposed in \citeauthor{xu2017} (2017, Assumption~4),  \citeauthor{botosaru2019} (2019, Assumption~3.1~(c)) and \citeauthor{ferman2021} (2021, Assumption~3.1~(c)). 
	Conditions~\ref{as:error term}~\eqref{as:var of pre error term}--\eqref{as:var of post error term} concern the difference between the idiosyncratic shock of the treated and control units. They guarantee that the variances of shocks do not degenerate as $T_0$ and $T_1$ increase, such that the asymptotic distributions can be properly defined. If the variances are degenerating, then the sample MS(P)E of the SC estimator $L_{T_0}(\w)$ and $L_{T_1}(\w)$ converge to their expectation $R_{T_0}(\w)$ and $R_{T_1}(\w)$, respectively, at a faster rate (see~\eqref{a0.1} and \eqref{a1.3} in the Appendix), which cannot be quantified by the standard central limit theorems, but the conclusion is then expected to hold more easily.

	\begin{condition}
		$\xi_{T_0}^{-1}T_0^{-1/2}J^2=o(1)$.
		\label{as:xi1}
	\end{condition}
	Condition~\ref{as:xi1} restricts the relative rate of several quantities going to infinity, i.e., $\xi_{T_0}$, $T_0$ and $J$. Importantly, note that this condition implies that $\xi_{T_0}\neq0$, which turns out to be a crucial condition to establish the asymptotic optimality of the SC weight. Intuitively, $\xi_{T_0}\neq0$ means that it is not possible to perfectly fit the pretreatment outcomes and observed covariates of the treated unit using a linear combination of the covariates and outcomes of the control units, and we refer to this situation as imperfect pretreatment fit. Imperfect pretreatment fit can result from multiple sources. To see this, note that $\xi_{T_0}=\inf_{\w \in \calH} R_{T_0}(\w)$, and we can decompose $R_{T_0}(\w)$ as
		\be
		R_{T_0}(\w)	&=&\left(\wt\bmu_{0}-\sumj w_j\wt\bmu_{j}\right)^\top \frac{1}{T_0}\sumt \wt\blam_{t}\wt\blam_{t}^\top \left(\wt\bmu_{0}-\sumj w_j\wt\bmu_{j}\right) \n\\
		&&+\frac{1}{T_0}\E\left\| \Z_0-\sumj w_j\Z_j \right\|^2+\frac{1}{T_0}\sumt\E\left(\epsilon_{0,t}-\sumj w_j\epsilon_{j,t}\right)^2,\n
		\ee
		where $\wt\blam_{t}=(\delta_t,1,\bthe_t^\top,\blam_{t}^\top)^\top$ for $t\in\calT_0\cup\calT_1$ and $\wt\bmu_i=(1,c_i,\Z_i^\top,\bmu_{i}^\top)^\top$ for $i\in\{0,\ldots,J\}$. A poor fit of any of the three components in $R_{T_0}(\w)$ can lead to an imperfect pretreatment fit. For example, an irrelevant donor pool can lead to a poor reconstruction of the factor structure and covariates, which further deteriorates the fit. A poor pretreatment fit can also be caused by sizeable idiosyncratic shocks since the last part of $R_{T_0}(\w)$ is substantial when the variance of shocks is large. Moreover, imperfect fit also occurs when a weight vector does not simultaneously kill the approximation errors of factor structure, covariates and shocks.

	We also discuss how our definition of imperfect pretreatment fit is related to those in the literature. This concept was first formally presented by \cite{abadie2010}, in which they define a perfect pretreatment fit as the existence of a weight $\w\in\calH_{\text{orig}}$ that satisfies $y_{0,t}=\sum_{j=1}^Jw_j y_{j,t}$ for all $t\in\mathcal{T}_0$ and $\Z_0=\sum_{j=1}^Jw_j\Z_j$, implying that $\xi_{T_0}=0$. The same definition is also used by \citet{botosaru2019} and \citet{eli2021}, among others. Thus our definition is in line with these studies.
	\cite{ferman2021QE} defines ``imperfect pre-treatment fit'' as  the (possible) nonexistence of $\w^\ast\in\calH_{\text{orig}}$ that satisfies $y_{0,t}=\sum_{j=1}^J w^\ast_j y_{jt}$ for every $t\in\calT_0$, implying that $\xi_{T_0}$ can be non-zero. Hence, our definition is also compatible with theirs.

%

\begin{theorem} \label{th:opt}
	If $T_1$ is finite, then under Conditions~\ref{as:data stochastic}--\ref{as:post error term bias}, \ref{as:data mixing},
	\ref{as:error term}~\eqref{as:moment of error term}--\eqref{as:var of pre error term} and \ref{as:xi1}, we have 	
	\begin{equation}
		\frac{ R_{T_1}(\wh \w) }
		{\inf_{\w \in \calH} { R_{T_1}(\w)} } \overset{p}{\rightarrow}1.
		\label{opt0}
	\end{equation}
	If $T_1$ diverges at rate $O(T_0)$, then under Conditions~\ref{as:data stochastic}--\ref{as:post error term bias} and \ref{as:data mixing}--\ref{as:xi1}, we have \eqref{opt0} and
	\begin{equation}
		\frac{ L_{T_1}(\wh \w) }
		{\inf_{\w \in \calH} { L_{T_1}(\w)} } \overset{p}{\rightarrow}1;
		\label{opt1}
	\end{equation}
	furthermore, if $\xi_{T_1}^{-1}\left\{L_{T_1}(\wh\w)-\xi_{T_1}\right\}$ is uniformly integrable, then
	\begin{equation}
		\frac{\E L_{T_1}(\wh \w) }
		{\inf_{\w \in \calH} { R_{T_1}(\w)} } \rightarrow 1.
		\label{opt2}
	\end{equation}
\end{theorem}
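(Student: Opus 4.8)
\noindent\emph{Proof strategy.} The plan is to follow the classical route for the asymptotic optimality of averaging estimators \citep[see, e.g.,][]{hansen2007,wan2010}, adapting it to the out-of-sample problem here: $\wh\w$ minimizes the pretreatment objective $L_{T_0}(\cdot)$ over $\calH$, whereas the benchmark is the posttreatment risk $R_{T_1}(\cdot)$, so the idea is to show that $L_{T_0}$ hugs $R_{T_0}$ uniformly on $\calH$ relative to $\xi_{T_0}$, that $R_{T_0}$ and $R_{T_1}$ agree on $\calH$ up to $o(\xi_{T_0})$, and then to exploit that $\wh\w$ minimizes $L_{T_0}$. The starting point is that, because $\sumj w_j=1$ cancels the common term $\delta_t$, the prediction residual splits as
\[
y_{0,t}-\sumj w_j y_{j,t}=b_t(\w)+\sumj w_j e^{(j)}_{t,\epsilon},
\]
where $b_t(\w)=\left(c_0-\sumj w_jc_j\right)+\bthe_t^\top\left(\Z_0-\sumj w_j\Z_j\right)+\blam_t^\top\left(\bmu_0-\sumj w_j\bmu_j\right)=\wt\blam_{t}^\top\left(\wt\bmu_0-\sumj w_j\wt\bmu_j\right)$ is deterministic, in the notation of the display below Condition~\ref{as:xi1}. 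Hence $L_{T_s}(\w)-R_{T_s}(\w)=A_{T_s}(\w)+B_{T_s}(\w)$ for $s\in\{0,1\}$, where $A_{T_s}(\w)=2T_s^{-1}\sum_{t\in\calT_s}b_t(\w)\sumj w_je^{(j)}_{t,\epsilon}$ is a bias--noise cross term and $B_{T_s}(\w)=\w^\top\bS_s\w$ is a quadratic noise term, with $\bS_s$ the $J\times J$ matrix of centred sample second moments of the $e^{(j)}_{t,\epsilon}$ over $\calT_s$.

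\medskip
\noindent\emph{Step 1 (pretreatment control).} The first task is $\sup_{\w\in\calH}|A_{T_0}(\w)|=o_p(\xi_{T_0})$ and $\sup_{\w\in\calH}|B_{T_0}(\w)|=o_p(\xi_{T_0})$. For $B_{T_0}$: Conditions~\ref{as:data mixing} and~\ref{as:error term}~\eqref{as:moment of error term} with a variance bound for mixing sequences \citep[see, e.g.,][]{white1984} give $\E(\bS_0)_{ij}^2=O(T_0^{-1})$, hence $\|\bS_0\|_F=O_p(JT_0^{-1/2})$; since $\|\w\|^2\le J[\max(C_\text{L},C_\text{U})]^2=O(J)$ on $\calH$, $\sup_{\w\in\calH}|B_{T_0}(\w)|\le\left(\sup_{\w\in\calH}\|\w\|^2\right)\|\bS_0\|_F=O_p(J^2T_0^{-1/2})$, which is $o_p(\xi_{T_0})$ by Condition~\ref{as:xi1}. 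For $A_{T_0}$: $A_{T_0}(\w)=2\left(\wt\bmu_0-\sumj w_j\wt\bmu_j\right)^\top\sum_{k=1}^J w_k\h_k$ with $\h_k=T_0^{-1}\sumt\wt\blam_{t}e^{(k)}_{t,\epsilon}$; Conditions~\ref{as:factor loading} and~\ref{as:covariates}~\eqref{Z} bound $\|\wt\bmu_0-\sumj w_j\wt\bmu_j\|=O(J)$ uniformly on $\calH$, the same mixing arguments give $\E\|\h_k\|^2=O(T_0^{-1})$ hence $\|\sum_{k=1}^J w_k\h_k\|\le\|\w\|\left(\sum_{k=1}^J\|\h_k\|^2\right)^{1/2}=O_p(JT_0^{-1/2})$, so $\sup_{\w\in\calH}|A_{T_0}(\w)|=O_p(J^2T_0^{-1/2})=o_p(\xi_{T_0})$ by Condition~\ref{as:xi1}. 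Let $\w^\ast\in\calH$ attain $\inf_{\w\in\calH}R_{T_0}(\w)=\xi_{T_0}$, a minimum since $\calH$ is compact and $R_{T_0}$ continuous; set $r_A=\sup_{\w\in\calH}|A_{T_0}(\w)|$ and $r_B=\sup_{\w\in\calH}|B_{T_0}(\w)|$, both $o_p(\xi_{T_0})$. Using that $\wh\w$ minimizes $L_{T_0}$,
\begin{align*}
R_{T_0}(\wh\w)&\le L_{T_0}(\wh\w)+r_A+r_B\le L_{T_0}(\w^\ast)+r_A+r_B\\
&\le R_{T_0}(\w^\ast)+2r_A+2r_B=\xi_{T_0}\{1+o_p(1)\},
\end{align*}
while $R_{T_0}(\wh\w)\ge\xi_{T_0}$ trivially; hence $R_{T_0}(\wh\w)/\xi_{T_0}\cp1$, and the same two-sided bound gives $L_{T_0}(\wh\w)/\xi_{T_0}\cp1$.

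\medskip
\noindent\emph{Step 2 (posttreatment transfer; loss and expectation versions).} Decompose $R_{T_s}(\w)$ into factor, covariate and idiosyncratic parts as in the display below Condition~\ref{as:xi1}. Conditions~\ref{as:factor bound} and~\ref{as:covariates}~\eqref{theta} bound the change in the factor and covariate parts from $\calT_0$ to $\calT_1$ by $O(T_0^{-1/2})$ times $\|\wt\bmu_0-\sumj w_j\wt\bmu_j\|^2=O(J^2)$, and Condition~\ref{as:post error term bias} bounds the change in the idiosyncratic part by $o(\xi_{T_0})$, so $\sup_{\w\in\calH}|R_{T_1}(\w)-R_{T_0}(\w)|=o(\xi_{T_0})$ by Condition~\ref{as:xi1}; in particular $\xi_{T_1}=\xi_{T_0}\{1+o(1)\}$, and, with Step~1, $R_{T_1}(\wh\w)=R_{T_0}(\wh\w)+o_p(\xi_{T_0})=\xi_{T_1}\{1+o_p(1)\}$, which is \eqref{opt0}; this argument does not require $T_1$ to diverge, so it covers the finite as well as the divergent case. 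When $T_1\to\infty$ at rate $O(T_0)$, the argument of Step~1 applies over $\calT_1$ under Conditions~\ref{as:data mixing} and~\ref{as:error term} (Condition~\ref{as:error term}~\eqref{as:var of post error term} being the posttreatment counterpart of~\eqref{as:var of pre error term}), giving $\sup_{\w\in\calH}|L_{T_1}(\w)-R_{T_1}(\w)|=o_p(\xi_{T_1})$; hence $\inf_{\w\in\calH}L_{T_1}(\w)=\xi_{T_1}\{1+o_p(1)\}$ and $L_{T_1}(\wh\w)=R_{T_1}(\wh\w)+o_p(\xi_{T_1})=\xi_{T_1}\{1+o_p(1)\}$ by \eqref{opt0}, which is \eqref{opt1}. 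Finally \eqref{opt1} reads $\xi_{T_1}^{-1}\{L_{T_1}(\wh\w)-\xi_{T_1}\}\cp0$; under the assumed uniform integrability this upgrades to $L^1$ convergence, so $\xi_{T_1}^{-1}\{\E L_{T_1}(\wh\w)-\xi_{T_1}\}\to0$, i.e.\ \eqref{opt2}, since its denominator equals $\inf_{\w\in\calH}R_{T_1}(\w)=\xi_{T_1}$.

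\medskip
\noindent\emph{Main obstacle.} The crux is Step~1: obtaining the uniform-in-$\w$ rates for $A_{T_s}(\w)$ and $B_{T_s}(\w)$ with a sharp (here $J^2$) dependence on the number of control units --- which is precisely what Condition~\ref{as:xi1} is calibrated against --- under only the mixing/fourth-moment restrictions of Conditions~\ref{as:data mixing} and~\ref{as:error term} rather than independence across units or over time, so that central-limit and variance bounds for dependent sequences must be deployed on a parameter set whose dimension grows with $T_0$. A secondary difficulty is that the common factors $\blam_t$ (and the covariate coefficients $\bthe_t$) may be divergent, so $b_t(\w)$ cannot be bounded pathwise: then $\E\|\h_k\|^2$ acquires an extra factor $T_0^{-1}\sumt\|\wt\blam_{t}\|^2$, which has to be absorbed by controlling $A_{T_s}(\w)$ against the bias component $T_s^{-1}\sum_{t\in\calT_s}b_t(\w)^2\le R_{T_s}(\w)$ via a $\left(T_s^{-1}\sum_{t\in\calT_s}\wt\blam_{t}\wt\blam_{t}^\top\right)$-weighted Cauchy--Schwarz inequality rather than the unweighted bound used above.
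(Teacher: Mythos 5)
Your proposal is correct and its architecture coincides with the paper's: both reduce the theorem to (i) $\sup_{\w\in\calH}\left|L_{T_0}(\w)-R_{T_0}(\w)\right|=o_p(\xi_{T_0})$, (ii) $\sup_{\w\in\calH}\left|R_{T_0}(\w)-R_{T_1}(\w)\right|=o(\xi_{T_0})$, (iii) the analogue of (i) over $\calT_1$ when $T_1$ diverges, and (iv) uniform integrability for \eqref{opt2}, with Condition~\ref{as:xi1} absorbing the $O_p(T_0^{-1/2}J^2)$ rates exactly as you calibrate them. The differences are in execution rather than strategy. Where the paper converts (i)--(ii) into \eqref{opt0} by citing Lemma~1 of \cite{gao2019}, you give the elementary basic-inequality chain $R_{T_0}(\wh\w)\le L_{T_0}(\wh\w)+r_A+r_B\le L_{T_0}(\w^\ast)+r_A+r_B\le\xi_{T_0}+2r_A+2r_B$, which is self-contained and equivalent. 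Where the paper applies Theorem~5.20 of \cite{white1984} to the products $e^{(i)}_{t,\epsilon}e^{(j)}_{t,\epsilon}$, you use second-moment (Chebyshev-type) bounds for mixing sequences; both yield the same $O_p(T_0^{-1/2}J^2)$ rate, and your route in fact needs only variance upper bounds rather than the nondegeneracy in Condition~\ref{as:error term}~\eqref{as:var of pre error term}. The most substantive difference is your split $L_{T_s}-R_{T_s}=A_{T_s}+B_{T_s}$: the paper's displayed rewriting of $\Psi_{T_0}(i,j)$ as a function of $e^{(i)}_{t,\epsilon}e^{(j)}_{t,\epsilon}$ alone silently drops the mean-zero bias--noise cross terms coming from the deterministic part of $y_{0,t}-y_{i,t}$, which your $A_{T_s}$ handles explicitly (they are of the same order when the factors are bounded, so the paper's conclusion is unaffected, but your write-up is the more careful one). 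Your closing observation that divergent factors force a weighted Cauchy--Schwarz/variance treatment of $A_{T_s}$, because $b_t(\w)$ is then not pathwise bounded, correctly identifies the one place where the unweighted bound in your Step~1 would not suffice; note only that the weighted Cauchy--Schwarz alone gives $|A_{T_s}(\w)|\le 2R_{T_s}(\w)^{1/2}\bigl(T_s^{-1}\sum_{t\in\calT_s}(\sumj w_je^{(j)}_{t,\epsilon})^2\bigr)^{1/2}$, whose second factor is not small, so the fix must go through the variance of $T_s^{-1}\sum_{t\in\calT_s}b_t(\w)\sumj w_je^{(j)}_{t,\epsilon}$, which is $O\bigl(T_s^{-1}R_{T_s}(\w)J^2\bigr)$ under mixing and is then killed by Condition~\ref{as:xi1}.
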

Theorem~\ref{th:opt} establishes the asymptotic optimality of the SC estimator, and the form differs depending on whether $T_1$ is divergent and the randomness of weights are incorporated. Specifically, when $T_1$ is finite, 	
\eqref{opt0} shows that the SC weight is asymptotically optimal among all possible weighting schemes in the sense that the risk of the SC estimator $R_{T_1}(\wh \w)$ is asymptotically identical to that of the infeasible best estimator. When $T_1$ goes to infinity at the same rate as $T_0$, we can state a similar optimality but in terms of the squared error $L_{T_1}(\w)$, a sample counterpart of $R_{T_1}(\w)$, as in \eqref{opt1}. Further examination of the proof reveals that \eqref{opt0} is one of the sufficient conditions of \eqref{opt1}. Finally, note that $R_{T_1}(\wh\w)$ and $L_{T_1}(\wh\w)$ are both obtained by replacing the unknown weight with the estimated SC weight $\wh\w$, i.e., $R_{T_1}(\wh\w)=R_{T_1}(\w)\big|_{\w=\wh\w}$ and $L_{T_1}(\wh\w)=L_{T_1}(\w)\big|_{\w=\wh\w}$, 
and thus neither \eqref{opt0} nor \eqref{opt1} accounts for the randomness of $\wh\w$. Therefore, \eqref{opt2} establishes the asymptotic optimality regarding $\E L_{T_1}(\wh\w)$, where the expectation is taken with respect to $y_{i,t}^N$ and $\wh\w$, such that the randomness in $\wh\w$ is explicitly incorporated. Overall, the result in~\eqref{opt2} shows that the expected squared error of the SC estimator, accounting for the randomness of the SC weight, is asymptotically identical to the minimum risk achieved by the infeasible best weight among all possible weighting schemes in the set $\mathcal{H}$. Theorem~\ref{th:opt} also holds in the absence of $\Z$, implying that balancing between pretreatment outcomes and covariates is not essential as long as $\xi_{T_0}\neq 0$.

The asymptotic optimality of the SC weight is inspired by optimal
model averaging, once we observe the link between the SC estimator and
the model averaging estimator. Optimal model averaging concerns the
bias-variance tradeoff in the presence of model uncertainty and is
intended to obtain the best prediction by optimally combining
estimators obtained from candidate models with different
specifications. While asymptotic optimality is one of the most
important properties in optimal model averaging studies, almost all
works focus on the risk assuming that the weights and data are
fixed. The only exception is \cite{zhang2021}, which analyzes the risk
while accounting for the randomness of weights, but they only consider in-sample risk. Since we average the \emph{posttreatment} outcomes of control units, which are treated as random and can be regarded as an out-of-sample extension of the pretreatment outcomes, we need to incorporate the randomness of data and weights and study out-of-sample prediction risk. Thus, we contribute to the model averaging literature by providing the first out-of-sample asymptotic optimality accounting for the randomness of data and weights. 
Moreover, we also generalize existing optimality analysis in the model averaging by allowing for negative weights. See \citet{radchenko2021} for detailed discussions on the effect of negative weights in the combination.

In practice, there are other treatment effect estimators that construct the counterfactual outcome also using a weighted average of the control units. Two popular examples include the matching estimator and IPW.
Specifically, the matching estimator constructs the counterfactual outcome of the treated unit based on a set of matched units \citep[see, e.g.,][]{rosenbaum1983,dehejia2002,abadie2006}. In the case of only one treated unit, denote the fixed constant $K$ as the number of matches, and let $\calJ_K$ be the set of matches for the treated unit (determined based on, e.g., covariates or propensity scores). Then, the matching counterfactual estimator can be written as
$\wh y_{0,t}^N(\wh\w^{\text{match}})=\sum_{j=1}^J \wh w^{\text{match}}_{j} y_{j,t}^N$, where  
\be\label{eq:matching}
\wh w^{\text{match}}_j=\left\{
\begin{array}{ll}
	1/K & \text{if}\ j\in\calJ_K(0)\\
	0 & \text{otherwise}
\end{array}
\right..
\n
\ee
Obviously, $\wh\w^{\text{match}}$ also belongs to $\calH$.

The normalized IPW method constructs the weights based on the propensity score \citep[see, e.g.,][]{imbens2004}. 
Denote $\calS$ as the set of indices of treated units and $\mathbb{I}(\cdot)$ as an indicator function. In our framework, $\calS=\{0\}$ and $\mathbb{I}(i\in\calS)=1$ only when $i=0$.
For some prespecified characteristics $\X_i$ for unit $i$, e.g., $\X_i=\left(y_{i,-T_0+1},\ldots,y_{i,0},\Z_i^\top\right)^\top$ in our setup, the propensity score of unit $i$ is defined as $\pi(\X_i)=\Pr\left\{ \mathbb{I}(i\in\calA) \mid \X_i\right\}$, that is, the conditional probability of unit $i$ being treated.
Then, the IPW estimator can be written as $\wh y_{0,t}^N(\wh\w^{\text{IPW}})=\sum_{j=1}^J \wh w^{\text{IPW}}_j y_{j,t}^N$, where 
\be\label{eq:ipw}
\wh w^{\text{IPW}}_j&=&\left\{\sum_{i=1}^J\frac{(1-d_{i,t})\pi\left(\X_i\right)}{1-\pi\left(\X_i\right)}\right\}^{-1}{ \frac{(1-d_{j,t})\pi\left(\X_j\right)}{1-\pi\left(\X_j\right)} }\n\\
&=&\left\{\sum_{i=1}^J\frac{\pi\left(\X_i\right)}{1-\pi\left(\X_i\right)}\right\}^{-1}{ \frac{\pi\left(\X_j\right)}{1-\pi\left(\X_j\right)} },
\ee
and this weight also satisfies $\wh\w^{\text{IPW}}\in\calH$. 
From Theorem~\ref{th:opt}, we know that the SC weight is potentially
more desirable than other weights if one's aim is to achieve the minimum MSPE. This implies that matching and IPW estimators cannot outperform the SC estimator in terms of achieving the lowest risk, at least in the asymptotic sense. 

Theorem~\ref{th:opt} provides another justification for the SC
estimator, especially under imperfect pretreatment fit and a finite
number of control units. Specifically, \citet{ferman2021QE} shows that
the SC estimator is biased under imperfect pretreatment fit, and
\citet{ferman2021} shows that this bias disappears only when the
number of control units goes to infinity (see also the discussion in Section~\ref{sec:convergence}). Our result in Theorem~\ref{th:opt} suggests that although the SC estimator is biased, it is asymptotically optimal in terms of achieving the minimum (expected) squared prediction error among all possible estimators that construct counterfactual outcomes based on averaging control units. Such asymptotic optimality holds regardless of whether the number of control units $J$ is finite. Thus, our results significantly widen the range of applicability of the SC estimator, showing that it is still a recommended method even under imperfect pretreatment fit with a finite number of control units. 
Note further that the MSPE can be decomposed into the variance and the square of bias. Thus, in the case where the SC estimator is unbiased due to diverging $J$ (as shown by \cite{ferman2021} and Theorem~\ref{th:con}), the asymptotic optimality in Theorem~\ref{th:opt} implies that the variance of the SC estimator converges to the lower bound. 
Related to the regret analysis in \citet{chen2022}, our asymptotic optimality also implies the bound convergence of the corresponding regret (with a potentially different rate from \citet{chen2022}) in some situations. This is because the denominators in~\eqref{opt0}--\eqref{opt2}, i.e., the infimum of the (expected) MSPEs, are typically of a constant (or even lower) order under certain regularity conditions, and thus the convergence of the ratios of (expected) MSPEs in~\eqref{opt0}--\eqref{opt2} implies that the bound of the corresponding regret, e.g., $R_{T_1}(\wh\w)-\inf_{\w \in \calH} R_{T_1}(\w)$, converges. 
Besides, Theorem~\ref{th:opt} also provides theoretical foundation for the numerical finding of \citet{bottmer2021} that the root mean squared errors of SC estimators are substantially lower than difference-in-means in their simulation studies, but they do not theoretically prove this.

\subsubsection{Asymptotic optimality of SCM with an intercept}
\label{sec:optimality with intercept}
\citet{doudchenko2016} argues that many of the treatment effect estimators in the literature employ a common linear structure to construct the counterfactual outcome of the treated unit, i.e.,
\be
\wh y_{0,t}^N(\w,d)=\sum_{j=1}^J w_j y_{j,t}^N+d,
\label{estimation with intercept}
\ee
where $\w=(w_1,\ldots,w_J)^\top\in\calH$ and $d$ is an intercept whose parameter space is $\Ra$. In this framework, the standard SC estimator is to set $d=0$ and chooses the optimal weight within $\calH_{\text{orig}}$. An alternative popular estimator is DID, which relaxes the restriction of $d=0$ but imposes that all weights are equal across control units \citep{athey2006,doudchenko2016}, i.e., 
$\wh w^\text{DID}_{j}=1/J$ for $j\in\{1,\ldots,J\}$ and $$\wh d^{\text{DID}}=\frac{1}{T_0}\sumt y_{0,t}^N-\frac{1}{T_0J}\sumt\sumj y_{j,t}^N.$$
To enjoy the flexibility of non-zero intercepts in DID but still
maintain the advantage of data-driven weights in the SCM, \citet{doudchenko2016} and \citet{ferman2021QE} propose a demeaned SC (DSC) estimator, given by
$$
\wh\alpha^{\text{DSC}}_{0,t}=y_{0,t}-\sumj \wh w_j^{\text{DSC}} y_{j,t}-\wh d^{\text{DSC}},
$$
where $\wh w^{\textrm{DSC}}_j$ and $\wh d^{\text{DSC}}$ are the demeaned version of the SC weight and intercept and can be obtained by 
\be
(\wh w^{\textrm{DSC}}_j,\wh d^{\text{DSC}})=\underset{\w\in\calH,\ d\in\Ra}{\arg\min}L_{T_0}(\w,d),
\label{SC weight with intercept}
\ee
where 
\be
L_{T_0}(\w,d)=\frac{1}{T_0}
\left\{\sum_{t\in\calT_0}\left(y_{0,t}-\sumj w_j y_{j,t}-d\right)^2+\left\|\Z_0-\sumj w_j\Z_j-d\biota_{r}\right\|^2\right\},
\n
\ee
with $\biota_{r}$ being an $r\times 1$ vector of ones. Note that the
resulting intercept estimate from~\eqref{SC weight with intercept}
coincides the estimated intercept of \citet{ferman2021QE} in standard
cases.\footnote{To illustrate the relation between the intercept
  estimated by~\eqref{SC weight with intercept} and that of
  \citet{ferman2021QE}, we assume that $\Ra=[D_\text{L},D_\text{U}]$ with $D_\text{L}$ and $D_\text{U}$ being some constants. \cite{ferman2021QE} calculate $\wh w_j^\text{DSC}$ by
$\underset{\w\in\calH}{\min}1/T_0
\sum_{t\in\calT_0}\left[y_{0,t}-\sumj w_j y_{j,t} - \left(y^-_0-\sumj w_j y^-_j \right) \right]^2$, with $\bar y^-_i=T_0^{-1}\sumt y_{i,t}$. In the absence of covariates, this is equivalent to solving~\eqref{SC weight with intercept} subject to $d=y^-_0-\sumj w_j y^-_j$ as long as $D_\text{L}<\wh d^\text{DSC} < D_\text{U}$. Since it is not difficult to define $D_\text{L}$ and $D_\text{U}$ such that $\wh d^\text{DSC}$ is an interior point of $\mathcal{D}$, the equivalence holds in most standard cases.
}

\citet{ferman2021QE} shows that incorporating the intercept relaxes
the condition for the SC estimator to be unbiased. Specifically, when
the treatment assignment is not correlated with time-varying
unobservables, the DSC estimator is asymptotically unbiased like DID,
despite that its weights still fail to recover the time-invariant fixed effects of the treated unit. The DSC estimator also leads to a lower MSPE than DID. In contrast, when the treatment assignment does correlate with time-varying unobservables with $\E\blam_t\neq0$ for $t\in\mathcal{T}_1$, neither the DSC nor the DID estimator is asymptotically unbiased. \citet{ferman2021QE} also claims that in general it is impossible to rank these two estimators in terms of bias and MSPE in this case. 

In this section, we investigate the asymptotic optimality of the DSC estimator. Some additional conditions are needed. 
    
\begin{condition}
	There exists a sufficiently large constant $C_d$ such that $|d|\leq C_d$ for any $d\in\Ra\subset\mathbb{R}$.
	\label{as:intercept bound}
\end{condition}
This condition restricts the range of extrapolation caused by allowing
for the intercept term. Since $C_d$ can be sufficiently large, this
restriction is mild. Additionally, note that this boundedness
requirement can be satisfied under a compact parameter space, which is commonly used in the econometric literature. 

\begin{condition}\ 
	\begin{enumerate}[(i)]
		\item \label{as:var of pre error term with intercept}
		There exists a constant $\wt C_2$ such that $\var\left\{T_0^{-1/2}\sumt  e_{t,\epsilon}^{(j)}\right\}\geq\wt C_2>0$ for all $T_0$ sufficiently large and any $ j\in\{1,\ldots,J\}$.
		\item \label{as:var of post error term with intercept}
		There exists a constant $\wt C_3$ such that  $\var\left\{T_1^{-1/2}\sumT  e_{t,\epsilon}^{(j)}\right\}\geq\wt C_3>0$ for all $T_1$ sufficiently large and any $j\in\{1,\ldots,J\}$.
	\end{enumerate}
	\label{as:error term with intercept}	
\end{condition}
This condition resembles Conditions~\ref{as:error term}~\eqref{as:var of pre error term}--\eqref{as:var of post error term}, needed 
for the central limit theorem for dependent processes. It guarantees that the variance of the (difference in) idiosyncratic shocks is nondegenerate as $T_0$ and $T_1$ increase.
\begin{condition}\
	\label{as:factor bound with intercept}
	\begin{enumerate}[(i)]
		\item \label{lambda with intercept}	${T_1}^{-1}\sumT\left({T_0}^{-1}\sum_{k\in\calT_0}\blam_k-\blam_t\right)=O\left(T_0^{-1/2}\right).$
		\item \label{delta with intercept}  ${T_1}^{-1}\sumT\left({T_0}^{-1}\sum_{k\in\calT_0}\delta_k-\delta_t\right)=O\left(T_0^{-1/2}\right).$
		\item \label{theta with intercept}	${T_1}^{-1}\sumT\left({T_0}^{-1}\sum_{k\in\calT_0}\bthe_k-\bthe_t\right)=O\left(T_0^{-1/2}\right).$
	\end{enumerate}
\end{condition}
Condition~\ref{as:factor bound with intercept} requires that the factor loadings, time fixed effects and the slope coefficients do not change substantially after the treatment. It plays a similar role as Conditions~\ref{as:factor bound} and~\ref{as:covariates}~\eqref{theta}. Note that the ``diverging'' factors in the example discussed below Condition~\ref{as:factor bound} also satisfy Condition~\ref{as:factor bound with intercept}.

Let $R_{T_0}(\w,d)=\E L_{T_0}(\w,d)$ and $\wt\xi_{T_0}=\inf_{\w \in \calH,d\in\Ra}R_{T_0}(\w,d)$.
\begin{condition}
	\label{as:post error term bias with intercept}
	$\wt\xi_{T_0}^{-1}\supw\left| {T_0}^{-1}\sumt\E\left(\sumj w_j  e_{t,\epsilon}^{(j)}\right)^2-{T_1}^{-1}\sumT\E\left(\sumj w_j  e_{t,\epsilon}^{(j)}\right)^2\right|=o(1)$.
\end{condition}

\begin{condition}
	$\wt\xi_{T_0}^{-1}T_0^{-1/2}J^2=o(1)$.
	\label{as:xi1 with intercept}
\end{condition}
Conditions~\ref{as:post error term bias with intercept} and~\ref{as:xi1 with intercept} are demeaned version of Conditions~\ref{as:post error term bias} and~\ref{as:xi1}, respectively. Note that $\wt\xi_{T_0}=\inf_{\w \in \calH,d\in\Ra}R_{T_0}(\w,d)\leq\inf_{\w \in \calH} R_{T_0}(\w,0)=\xi_{T_0}$,
and thus Conditions~\ref{as:post error term bias with intercept} and~\ref{as:xi1 with intercept} are both slightly stronger than their  counterparts without demeaning. Nevertheless, most standard setups of the SCM would still satisfy these two conditions, including the example described below Condition~\ref{as:post error term bias}. 

\begin{theorem} \label{th:opt with intercept}
	If $T_1$ is finite and Conditions~\ref{as:data stochastic}--\ref{as:covariates}, \ref{as:data mixing},
	\ref{as:error term}~\eqref{as:moment of error term}--\eqref{as:var of pre error term}, \ref{as:intercept bound}, \ref{as:error term with intercept}~\eqref{as:var of pre error term with intercept} and \ref{as:factor bound with intercept}--\ref{as:xi1 with intercept} hold, then 	
	\begin{equation}
		\frac{ R_{T_1}(\wh\w^{\emph{DSC}},\wh d^{\emph{DSC}}) }
		{\inf_{\w \in \calH,d\in\Ra} { R_{T_1}(\w,d)} } \overset{p}{\rightarrow}1.
		\label{opt0 with intercept}
	\end{equation}
	If $T_1$ diverges at rate $O(T_0)$ and
	Conditions~\ref{as:data stochastic}--\ref{as:covariates}, \ref{as:data mixing}--\ref{as:error term} and \ref{as:intercept bound}--\ref{as:xi1 with intercept} hold, then we have \eqref{opt0 with intercept} and
	\begin{equation}
		\frac{ L_{T_1}(\wh \w^{\emph{DSC}},\wh d^{\emph{DSC}}) }
		{\inf_{\w \in \calH,d\in\Ra} { L_{T_1}(\w,d)} } \overset{p}{\rightarrow}1;
		\label{opt1 with intercept}
	\end{equation}
	furthermore, if $\wt\xi_{T_1}^{-1}\left\{L_{T_1}(\wh\w^{\emph{DSC}},\wh d^{\emph{DSC}})-\wt\xi_{T_1}\right\}$ is uniformly integrable, then
	\begin{equation}
		\frac{\E L_{T_1}(\wh \w^{\emph{DSC}},\wh d^{\emph{DSC}}) }
		{\inf_{\w \in \calH,d\in\Ra} { R_{T_1}(\w,d)} } \rightarrow 1.
		\label{opt2 with intercept}
	\end{equation}
\end{theorem}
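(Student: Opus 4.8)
The plan is to mirror the proof of Theorem~\ref{th:opt}, treating the intercept $d$ as one extra \emph{bounded} coordinate (Condition~\ref{as:intercept bound} makes $\calH\times\Ra$ compact) and replacing $\xi_{T_0},\xi_{T_1}$ throughout by $\wt\xi_{T_0},\wt\xi_{T_1}$. Everything reduces to three uniform comparisons over $\calH\times\Ra$: (i) $\supwd|L_{T_0}(\w,d)/R_{T_0}(\w,d)-1|\cp0$; (ii) $\supwd|R_{T_1}(\w,d)-R_{T_0}(\w,d)|=o(\wt\xi_{T_0})$; and (iii), when $T_1\to\infty$ at rate $O(T_0)$, $\supwd|L_{T_1}(\w,d)/R_{T_1}(\w,d)-1|\cp0$. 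Granting these, the rest is routine bookkeeping. Since $(\wh\w^{\textrm{DSC}},\wh d^{\textrm{DSC}})$ minimizes $L_{T_0}(\cdot,\cdot)$ over $\calH\times\Ra$, (i) combined with $R_{T_0}(\w,d)\ge\wt\xi_{T_0}$ forces $R_{T_0}(\wh\w^{\textrm{DSC}},\wh d^{\textrm{DSC}})=\wt\xi_{T_0}\{1+o_p(1)\}$; then (ii) shows that $R_{T_1}(\wh\w^{\textrm{DSC}},\wh d^{\textrm{DSC}})$ and $\wt\xi_{T_1}=\inf_{\w\in\calH,d\in\Ra}R_{T_1}(\w,d)$ both equal $\wt\xi_{T_0}\{1+o(1)\}$ (in probability for the former), which is~\eqref{opt0 with intercept}; when $T_1$ diverges, (iii) converts the ratio into one about the loss, so that $L_{T_1}(\wh\w^{\textrm{DSC}},\wh d^{\textrm{DSC}})$ and $\inf_{\w\in\calH,d\in\Ra}L_{T_1}(\w,d)$ both equal $\wt\xi_{T_1}\{1+o_p(1)\}$, which is~\eqref{opt1 with intercept}; and~\eqref{opt2 with intercept} follows on combining $\wt\xi_{T_1}^{-1}\{L_{T_1}(\wh\w^{\textrm{DSC}},\wh d^{\textrm{DSC}})-\wt\xi_{T_1}\}\cp0$ with the assumed uniform integrability and $\wt\xi_{T_1}=\inf_{\w\in\calH,d\in\Ra}R_{T_1}(\w,d)$.

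For (i) I would use the factor model (Condition~\ref{as:data stochastic}) and $\sumj w_j=1$ to write, for $t\in\calT_0$,
\[
y_{0,t}-\sumj w_j y_{j,t}-d=\wt\blam_t^\top\Bigl(\wt\bmu_0-\sumj w_j\wt\bmu_j\Bigr)-d+\sumj w_j e_{t,\epsilon}^{(j)},
\]
with $\wt\blam_t=(\delta_t,1,\bthe_t^\top,\blam_t^\top)^\top$ and $\wt\bmu_i=(1,c_i,\Z_i^\top,\bmu_i^\top)^\top$ as in the decomposition below Condition~\ref{as:xi1}, plus an analogous identity for the covariate block. Squaring and subtracting the expectation, $L_{T_0}(\w,d)-R_{T_0}(\w,d)$ splits into a centered quadratic form $\sum_{i,j}w_iw_j\,T_0^{-1}\sumt\{e_{t,\epsilon}^{(i)}e_{t,\epsilon}^{(j)}-\E(e_{t,\epsilon}^{(i)}e_{t,\epsilon}^{(j)})\}$ and cross terms linear in the $\epsilon$'s, including the intercept piece $2d\sumj w_j\,T_0^{-1}\sumt e_{t,\epsilon}^{(j)}$. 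Each of the $O(J^2)$ scalar averages in the quadratic form is $O_p(T_0^{-1/2})$ by the central limit theorem for mixing sequences (Condition~\ref{as:data mixing}, with the fourth-moment bound of Condition~\ref{as:error term}~\eqref{as:moment of error term} and the non-degeneracy of Conditions~\ref{as:error term}~\eqref{as:var of pre error term} and~\ref{as:error term with intercept}~\eqref{as:var of pre error term with intercept} making the standard CLT applicable), and uniform boundedness of $\calH$, of the loadings (Condition~\ref{as:factor loading}), of the covariates (Condition~\ref{as:covariates}~\eqref{Z}) and of $d$ (Condition~\ref{as:intercept bound}) lets me pull all parameters out of the supremum; the linear cross terms are of the smaller order $O_p(J\,T_0^{-1/2})$. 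Hence $\supwd|L_{T_0}(\w,d)-R_{T_0}(\w,d)|=O_p(T_0^{-1/2}J^2)$, and dividing by $R_{T_0}(\w,d)\ge\wt\xi_{T_0}$ and invoking Condition~\ref{as:xi1 with intercept} gives (i). Step (iii) repeats the same argument over $\calT_1$, using Conditions~\ref{as:error term}~\eqref{as:var of post error term} and~\ref{as:error term with intercept}~\eqref{as:var of post error term with intercept} and $T_1\asymp T_0$.

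For (ii) I would expand $R_{T_0}(\w,d)$ and $R_{T_1}(\w,d)$ along the same decomposition, so that their difference is governed by three pre-versus-post discrepancies: in the averaged second moments $T_0^{-1}\sum_{k\in\calT_0}\wt\blam_k\wt\blam_k^\top$ versus $T_1^{-1}\sumT\wt\blam_t\wt\blam_t^\top$, controlled by Conditions~\ref{as:factor bound} and~\ref{as:covariates}~\eqref{theta}; in the first moments $T_0^{-1}\sum_{k\in\calT_0}\wt\blam_k$ versus $T_1^{-1}\sumT\wt\blam_t$, which enter only through the intercept cross term $-2d\,\wt\blam_t^\top(\wt\bmu_0-\sumj w_j\wt\bmu_j)$ and are exactly what Condition~\ref{as:factor bound with intercept} controls; and in the idiosyncratic-variance part $T_0^{-1}\sumt\E(\sumj w_j e_{t,\epsilon}^{(j)})^2$ versus its $\calT_1$ analogue, controlled by Condition~\ref{as:post error term bias with intercept}. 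The first two discrepancies are $O(T_0^{-1/2})=o(\wt\xi_{T_0})$ by Condition~\ref{as:xi1 with intercept} once multiplied by the bounded weights, loadings, covariates and $|d|$, and the third is $o(\wt\xi_{T_0})$ by assumption, so $\supwd|R_{T_1}(\w,d)-R_{T_0}(\w,d)|=o(\wt\xi_{T_0})$, which closes the bookkeeping above.

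The main obstacle is step~(i), equivalently (iii): obtaining a fluctuation bound for $L_{T_0}(\w,d)-R_{T_0}(\w,d)$ uniform over the entire set $\calH\times\Ra$ while the number of control units $J$ is allowed to diverge. Reducing the supremum over an infinite, growing-dimensional weight set to $O(J^2)$ sample averages of products $e_{t,\epsilon}^{(i)}e_{t,\epsilon}^{(j)}$, each stabilizing only at the slow rate $T_0^{-1/2}$, is what forces the scheme to hinge on Condition~\ref{as:xi1 with intercept}, $\wt\xi_{T_0}^{-1}T_0^{-1/2}J^2=o(1)$, which makes the aggregate error negligible relative to the pretreatment-fit level $\wt\xi_{T_0}$; since $\wt\xi_{T_0}\le\xi_{T_0}$, this is genuinely stronger than the no-intercept Condition~\ref{as:xi1}, and verifying that the intercept contributes no worse order — so that only the mild extra Conditions~\ref{as:intercept bound},~\ref{as:error term with intercept},~\ref{as:factor bound with intercept},~\ref{as:post error term bias with intercept} and~\ref{as:xi1 with intercept} are needed — is the delicate part of the accounting. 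A secondary point is that~\eqref{opt0 with intercept} must hold for finite $T_1$ as well, where step~(iii) is unavailable, so there one argues only about the risk $R_{T_1}$ and cannot pass to the loss $L_{T_1}$ or to~\eqref{opt1 with intercept} and~\eqref{opt2 with intercept}.
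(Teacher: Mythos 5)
Your proposal is correct and follows essentially the same route as the paper's proof: it reduces to the same three uniform comparisons (the paper's displays for $\sup|L_{T_0}-R_{T_0}|/R_{T_0}$, $\sup|R_{T_0}-R_{T_1}|/R_{T_0}$ and $\sup|L_{T_1}-R_{T_1}|/R_{T_0}$), isolates the intercept's extra contribution as $2dT_0^{-1}\sumt\sumj w_j e_{t,\epsilon}^{(j)}=O_p(JT_0^{-1/2})$ via the mixing CLT with Conditions~\ref{as:intercept bound} and~\ref{as:error term with intercept}, controls the pre/post risk discrepancy through Conditions~\ref{as:factor bound with intercept} and~\ref{as:post error term bias with intercept}, and closes with Condition~\ref{as:xi1 with intercept} and the sandwich/uniform-integrability step. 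The only differences are cosmetic: you normalize step~(iii) by $R_{T_1}$ rather than $R_{T_0}$ (equivalent given step~(ii)), and you spell out the final ratio bookkeeping that the paper delegates to Lemma~1 of \cite{gao2019}.
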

Theorem~\ref{th:opt with intercept} establishes the asymptotic optimality of the DSC estimator.
It shows that the DSC estimator can still achieve the minimum
(expected) squared prediction risk (and loss) asymptotically even if
it is biased when the treatment assignment is correlated with
time-varying unobservables. Moreover, it also suggests that the DSC
estimator is not worse than the DID estimator in terms of the squared
prediction error, at least asymptotically. Note that this result does
not conflict with the statement of \citet{ferman2021QE} that it is in general impossible to compare the bias and MSPE between the DSC and the DID estimators. Our optimality is in the asymptotics, and we focus on the squared error, and the finite sample bias and MSPE comparison between these two estimators remains unclear.

Intuitively, the asymptotic optimality of the SC estimator (with or without an intercept) is a consequence of the fact that the objective function to estimate the SC weight is consistent with the evaluation criterion, namely the (expected) squared error, while other estimators, such as matching, IPW and DID, construct weights using a different objective function or simply impose equal weights. 
For this reason, the SCM can also be viewed as a task-based approach, whose advantages in prediction are illustrated in \citep{donti2017}.
The property of asymptotic optimality is also precisely aligned with the goal of the SCM---to synthesize a good control unit to predict the counterfactual of the treated---and it provides the conditions under which the SCM may outperform other estimators. 


\subsection{Asymptotic properties in a model-free setup}
\label{sec:general-model}
While the linear factor model covers a wide range of random processes to generate potential outcomes, one may still be reluctant to completely disregard other DGPs. Yet, the applicability of SCM does not seem to be restrained by certain specific forms of models. To reconcile these two perspectives, we 
investigate the asymptotic behavior of SCM without specifying a linear factor model as the outcome process. We show that the convergence result and the asymptotic optimality of SCM continue to hold in a model-free setup. By imposing assumptions directly on outcomes (rather than factor structures), some arguments even simplify. To save space, we only provide the assumptions and (re)state the general versions of Theorems~\ref{th:con} and~\ref{th:opt} here but relegate the detailed discussions, the case with intercepts, and proofs to the Online Appendix. 

First, we provide conditions needed to examine the convergence of SC weights in a model-free steup.
\begin{condition}
	\label{as:data stochastic relax dgp}
	We treat $\left\{ \Z_i\mid i\in\{0,1,\ldots,J\} \right\}$ as fixed and $\left\{ y^N_{i,t}\mid i\in\{0,1,\ldots,J\}, t\in\calT_0\cup\calT_1 \right\}$ as stochastic.
\end{condition}

\begin{condition}
	\label{as:risk bound} 
	$\supw\left| T_0^{-1}\sumt\E\left(y^N_{0,t}-\sumj w_j y^N_{j,t}\right)^2-R_{T_1}(\w) \right|=O(T_0^{-1/2}J^2)+o(\xi_{T_0}).$
\end{condition}

Condition~\ref{as:data stochastic relax dgp} is a general version of Condition~\ref{as:data stochastic}~\eqref{as:data fixed}. Condition~\ref{as:risk bound} restricts the difference between the fits in the pre- and posttreatment periods. Its direct implication is that the main difference between the pre- and posttreatment outcomes is exclusively due to the treatment effect. This condition can be viewed as a generalization of Conditions~\ref{as:factor bound}--\ref{as:post error term bias}. 

\begin{theorem} \label{th:con relax dgp}	
	Given any $T_1$, if $\w^{\text{opt}}_{T_1}$ is an interior point of $\calH$ and Conditions~\ref{as:covariates}~\eqref{Z}, \ref{as:Y_c} and \ref{as:data stochastic relax dgp}--\ref{as:risk bound} hold, then
	$\left\|\wh\w-\w^{\text{opt}}_{T_1}\right\|=O_p\left( T_0^{\nu}\xi_{T_0}^{1/2}+T_0^{\nu}\xi_{T_1}^{1/2}+T_0^{-1/4+\nu}J\right)$, where $\nu>0$ is a sufficiently small constant.
\end{theorem}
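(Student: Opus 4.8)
The plan is to view $\wh\w$ as the minimizer of the empirical pre-treatment loss $L_{T_0}$ over $\calH$ and $\w^{\text{opt}}_{T_1}$ as the minimizer of the population post-treatment risk $R_{T_1}$ over $\calH$, and to run a ``basic inequality plus quadratic growth'' argument with every bound tracked explicitly in $J$, $\xi_{T_0}$ and $\xi_{T_1}$. Throughout write $R_{T_0}(\w)=\E L_{T_0}(\w)$, $u_t(\w)=y^N_{0,t}-\sum_{j=1}^J w_j y^N_{j,t}$, $r=\|\wh\w-\w^{\text{opt}}_{T_1}\|$, and decompose each loss into an outcome part (quadratic in $\w$) plus the deterministic covariate part $T_0^{-1}\|\Z_0-\sum_{j=1}^J w_j\Z_j\|^2$, which by Condition~\ref{as:covariates}~\eqref{Z} and the boundedness of $\calH$ is $O(J^2/T_0)$ uniformly on $\calH$. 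One may assume the asserted rate is $o(1)$ (otherwise $r=O_p(1)$ holds trivially since $\calH$ is bounded), so in particular $T_0^{-1/2}J^2=o(1)$ and the deterministic gap $\Delta_{T_0}:=O(T_0^{-1/2}J^2)+o(\xi_{T_0})$ supplied by Condition~\ref{as:risk bound} is $o(1)$.

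First I would establish quadratic growth of $R_{T_1}$ at its interior minimizer. Because $\w^{\text{opt}}_{T_1}$ is a relative interior point of $\calH$ minimizing $R_{T_1}$ subject only to the affine restriction $\sum_j w_j=1$, the gradient $\nabla R_{T_1}(\w^{\text{opt}}_{T_1})$ is proportional to the all-ones vector, hence $\nabla R_{T_1}(\w^{\text{opt}}_{T_1})^\top(\w-\w^{\text{opt}}_{T_1})=0$ for every $\w\in\calH$; since $R_{T_1}$ is quadratic, $R_{T_1}(\w)-\xi_{T_1}=(\w-\w^{\text{opt}}_{T_1})^\top\bSig_{T_1}(\w-\w^{\text{opt}}_{T_1})$ with $\bSig_{T_1}=T_1^{-1}\sum_{t\in\calT_1}\E(\y^N_{c,t}\y^N_{c,t}{}^\top)$, $\y^N_{c,t}=(y^N_{1,t},\dots,y^N_{J,t})^\top$. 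Condition~\ref{as:Y_c} lower-bounds the eigenvalues of the analogous pre-treatment matrix $\bSig$ by $\kappa_1$, and Condition~\ref{as:risk bound}, being a uniform $O(\Delta_{T_0})$ bound between two quadratics in $\w$, transfers this by taking second differences along a feasible segment $\w^{\text{opt}}_{T_1}\pm\varepsilon\v$ of fixed length, giving $|\v^\top\bSig_{T_1}\v-\v^\top\bSig\v|=O(\Delta_{T_0})=o(1)$ for unit vectors $\v$ with $\sum_j v_j=0$; hence $\lambda_{\min}(\bSig_{T_1})\ge\kappa_1/2$ eventually and $R_{T_1}(\wh\w)-\xi_{T_1}\ge(\kappa_1/2)r^2$.

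Next comes the basic inequality. From $L_{T_0}(\wh\w)\le L_{T_0}(\w^{\text{opt}}_{T_1})$ and $R_{T_1}(\w^{\text{opt}}_{T_1})=\xi_{T_1}$, putting $G(\w):=L_{T_0}(\w)-R_{T_1}(\w)$ gives $R_{T_1}(\wh\w)-\xi_{T_1}\le G(\w^{\text{opt}}_{T_1})-G(\wh\w)$; split $G=[L_{T_0}-R_{T_0}]+[R_{T_0}-R_{T_1}]$. The deterministic piece $R_{T_0}-R_{T_1}$ is a quadratic bounded by $\Delta_{T_0}$ on $\calH$ (Condition~\ref{as:risk bound} plus the covariate term), so its increment between $\wh\w$ and $\w^{\text{opt}}_{T_1}$ is $O(\Delta_{T_0})\,r$ up to the diameter of $\calH$. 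For the stochastic piece, $L_{T_0}(\w)-R_{T_0}(\w)$ is a quadratic form $2\tilde\w^\top\A_{T_0}\tilde\w+\tilde\w^\top\B_{T_0}\tilde\w$ in $\tilde\w=(1,-w_1,\dots,-w_J)^\top$, where $\A_{T_0},\B_{T_0}$ are mean-zero $(J+1)\times(J+1)$ matrices of centered time-averages over $\calT_0$ of outcome cross-products; using $\lambda_{\max}(\bSig)\le\kappa_2$ and a law of large numbers for such quadratic functionals (as in the proof of Theorem~\ref{th:con}), their operator norms are $O_p(T_0^{-1/2+\nu}J)$, while the particular vectors $\A_{T_0}\tilde\w^{\text{opt}}_{T_1}$ and $\B_{T_0}\tilde\w^{\text{opt}}_{T_1}$, which involve $u_t(\w^{\text{opt}}_{T_1})$ with mean-square over $\calT_0$ at most $R_{T_0}(\w^{\text{opt}}_{T_1})\le\xi_{T_1}+\Delta_{T_0}$, have norms $O_p(T_0^{-1/2+\nu}J^{1/2}(\xi_{T_1}^{1/2}+\Delta_{T_0}^{1/2}))$. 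Expanding $G(\w^{\text{opt}}_{T_1})-G(\wh\w)$ around $\w^{\text{opt}}_{T_1}$, the terms quadratic in $r$ carry $o(1)$ coefficients and are absorbed by $(\kappa_1/2)r^2$, whereas the terms linear in $r$ have coefficient $O_p\big(T_0^{\nu}\xi_{T_0}^{1/2}+T_0^{\nu}\xi_{T_1}^{1/2}+T_0^{-1/4+\nu}J\big)$ (the $\xi_{T_1}^{1/2}$ piece from the residual scale at the optimum, the $T_0^{-1/4+\nu}J$ piece from $\sqrt{O(T_0^{-1/2}J^2)}$ and the $J$-dimensional fluctuation, the $\xi_{T_0}^{1/2}$ piece absorbing $\Delta_{T_0}^{1/2}$ and the pre-treatment fit scale). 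Combining with the previous step, $(\kappa_1/2)r^2\le o_p(1)\,r^2+O_p\big(T_0^{\nu}\xi_{T_0}^{1/2}+T_0^{\nu}\xi_{T_1}^{1/2}+T_0^{-1/4+\nu}J\big)\,r$, which rearranges to the claimed rate; the arbitrarily small $\nu>0$ merely absorbs the slack from the maximal inequality over the $(J+1)^2$ entries and the supremum over $\calH$.

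The hard part will be the stochastic step: the empirical-process bound must be carried out \emph{locally} at $\w^{\text{opt}}_{T_1}$, with the $J$-dependence and the $\xi$-scales tracked \emph{jointly}, because it is precisely the localization that, after dividing the basic inequality through by $r$, replaces a crude $O_p(\xi_{T_1})$ by the sharp $O_p(\xi_{T_1}^{1/2})$, and the careful $J$-accounting that keeps the fluctuation term at $T_0^{-1/4+\nu}J$. A secondary subtlety is the eigenvalue transfer in the quadratic-growth step: it succeeds only because $R_{T_1}$ and the pre-treatment population risk are genuinely quadratic in $\w$, so that a uniform $o(1)$-bound between them forces their Hessians to agree up to $o(1)$ along feasible directions, an argument that would fail for a general (non-quadratic) $o(1)$-uniform perturbation.
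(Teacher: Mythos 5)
Your architecture --- basic inequality plus quadratic growth of $R_{T_1}$ --- is genuinely different from the paper's, which runs a Fan--Peng-type local argument directly on the empirical pretreatment loss: there the curvature comes from the \emph{sample} pretreatment Gram matrix $T_0^{-1}\boldsymbol{\mathcal{Y}}_c^\top\boldsymbol{\mathcal{Y}}_c$ (whose eigenvalues are taken to lie in $[\kappa_1,\kappa_2]$ with probability approaching one via Condition~\ref{as:Y_c}), and the linear term is bounded \emph{without centering} by Cauchy--Schwarz together with $\E\|\Y_0-\boldsymbol{\mathcal{Y}}_c\w^{\text{opt}}_{T_1}\|^2\le T_0R_{T_0}(\w^{\text{opt}}_{T_1})$ and $R_{T_0}(\w^{\text{opt}}_{T_1})-\xi_{T_1}=O(T_0^{-1/2}J^2)+o(\xi_{T_0})$ from Condition~\ref{as:risk bound}; the $T_0^{\nu}$ slack in the target rate then makes the quadratic term dominate. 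Your route has two gaps. The first is the eigenvalue transfer. From the uniform bound in Condition~\ref{as:risk bound} you can only deduce, using midpoints (which stay in $\calH$ by convexity), that $\left|(\w_1-\w_0)^\top(\bSig_{T_1}-\bSig)(\w_1-\w_0)\right|=O(\Delta_{T_0})$ for feasible $\w_0,\w_1$ --- an \emph{additive} error, not a relative perturbation of $\lambda_{\min}$. Your fixed-length segments $\w^{\text{opt}}_{T_1}\pm\varepsilon\v$ need not exist: for $\calH_{\text{orig}}$ (i.e.\ $C_\text{L}=0$, which the paper explicitly covers) the width of $\calH$ in some unit directions $\v$ with $\sum_j v_j=0$ is of order $J^{-1/2}$, so $\varepsilon$ must shrink with $J$ and the second-difference bound inflates by a factor $J$ or worse; the conclusion $\lambda_{\min}(\bSig_{T_1})\ge\kappa_1/2$ does not follow. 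What you can legitimately claim is $R_{T_1}(\wh\w)-\xi_{T_1}\ge\kappa_1 r^2-O(\Delta_{T_0})$, which still delivers the rate because $\Delta_{T_0}^{1/2}=O(T_0^{-1/4}J)+o(\xi_{T_0}^{1/2})$ is within the target, but the step must be restated in this additive form.

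The second gap is more serious. Absorbing the quadratic-in-$r$ stochastic term into $(\kappa_1/2)r^2$ requires the centered sample Gram $T_0^{-1}\boldsymbol{\mathcal{Y}}_c^\top\boldsymbol{\mathcal{Y}}_c-\bSig$ to have $o_p(1)$ operator norm (your claimed $O_p(T_0^{-1/2+\nu}J)$ bound). That is a concentration statement needing fourth moments and weak dependence of the outcomes, i.e.\ Conditions~\ref{as:data mixing relax dgp}--\ref{as:error term relax dgp}, which are \emph{not} hypotheses of Theorem~\ref{th:con relax dgp}; the ``law of large numbers for quadratic functionals'' you invoke lives in the proof of Theorem~\ref{th:opt}, not of Theorem~\ref{th:con}. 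Even granting the paper's own reading of Condition~\ref{as:Y_c} (sample eigenvalues in $[\kappa_1,\kappa_2]$ with probability approaching one), the centered Gram may have operator norm of order $\kappa_2-\kappa_1$, which cannot be absorbed. The repair is to drop the centering entirely: from $L_{T_0}(\wh\w)\le L_{T_0}(\w^{\text{opt}}_{T_1})$, expand the outcome part exactly to get $T_0^{-1}\|\boldsymbol{\mathcal{Y}}_c(\wh\w-\w^{\text{opt}}_{T_1})\|^2\le 2T_0^{-1}(\Y_0-\boldsymbol{\mathcal{Y}}_c\w^{\text{opt}}_{T_1})^\top\boldsymbol{\mathcal{Y}}_c(\wh\w-\w^{\text{opt}}_{T_1})+O(T_0^{-1}J^2)$, lower-bound the left side by $\kappa_1 r^2$ via the sample Gram, and bound the right side by Cauchy--Schwarz and Markov exactly as the paper does; then neither the operator-norm concentration nor the post-treatment Hessian is needed, and only Conditions~\ref{as:covariates}~\eqref{Z}, \ref{as:Y_c} and \ref{as:data stochastic relax dgp}--\ref{as:risk bound} are used.
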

This theorem is a restatement of the convergence of SC weights (Theorem~\ref{th:con}) without assuming a linear factor model as the DGP. 

Next, we provide additional conditions needed for the asymptotic optimality in a model-free setup. 
\begin{condition}
	\label{as:data mixing relax dgp}
	For any $i\in\{0,1,\ldots,J\}$, $\left\{y^N_{i,t}\right\}$ is either $\alpha$-mixing with the mixing coefficient $\alpha=-r/(r-2)$ or $\phi$-mixing with the mixing coefficient $\phi=-r/(2r-1)$ for $r\geq 2$.
\end{condition}
Denote $e_{t,y^N}^{(i)}=y^N_{0,t}-y^N_{i,t}$ for $i\in\{1,\ldots,J\}$ and $t\in\calT_0\cup\calT_1$.
\begin{condition}\
	\begin{enumerate}[(i)]
		\item \label{as:moment of error term relax dgp}
		There exists a constant $C_4$ such that $\E \left(y^{N}_{i,t}\right)^4\leq C_4<\infty$ for $i\in\{0,1,\ldots,J\}$ and $t\in\calT_0\cup\calT_1$.
		\item \label{as:var of pre error term relax dgp}
		There exists a constant $C_5$ such that $\var\left(T_0^{-1/2}\sumt  e_{t,y^N}^{(i)}e_{t,y^N}^{(j)}\right)\geq C_5>0$ for all $T_0$ sufficiently large and any $i, j\in\{1,\ldots,J\}$.
		\item \label{as:var of post error term relax dgp}
		There exists a constant $C_6$ such that  $\var\left(T_1^{-1/2}\sumT  e_{t,y^N}^{(i)}e_{t,y^N}^{(j)}\right)\geq C_6>0$ for all $T_1$ sufficiently large and any $i, j\in\{1,\ldots,J\}$.
	\end{enumerate}
	\label{as:error term relax dgp}	
\end{condition}

Conditions~\ref{as:data mixing relax dgp} and~\ref{as:error term relax dgp} generalize the factor-model-based Conditions~\ref{as:data mixing} and~\ref{as:error term}, respectively.

\begin{theorem} \label{th:opt relax dgp}
	If $T_1$ is finite, then under Conditions~\ref{as:covariates}~\eqref{Z}, \ref{as:xi1}, \ref{as:data stochastic relax dgp}--\ref{as:data mixing relax dgp} and
	\ref{as:error term relax dgp}~\eqref{as:moment of error term relax dgp}--\eqref{as:var of pre error term relax dgp}, we have~\eqref{opt0}; 
	If $T_1$ diverges at rate $O(T_0)$, then under Conditions~\ref{as:covariates}~\eqref{Z}, \ref{as:xi1} and \ref{as:data stochastic relax dgp}--\ref{as:error term relax dgp}, we have \eqref{opt0} and \eqref{opt1}; 
	Furthermore, if $\xi_{T_1}^{-1}\left\{L_{T_1}(\wh\w)-\xi_{T_1}\right\}$ is uniformly integrable, then we have~\eqref{opt2}. 
\end{theorem}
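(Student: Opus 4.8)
The statement restates Theorems~\ref{th:con} and~\ref{th:opt} once the linear factor model is dropped; since the convergence part is Theorem~\ref{th:con relax dgp}, here I only need the optimality part, and I would follow the argument behind Theorem~\ref{th:opt}. The key simplification is that, with the regularity conditions now imposed directly on the untreated potential outcomes, there is no need to split an approximation error into a factor-structure component and an idiosyncratic component: everything can be phrased in terms of the differences $e_{t,y^N}^{(i)}=y^N_{0,t}-y^N_{i,t}$. The plan is to prove \eqref{opt0} first (valid for finite or divergent $T_1$) and then deduce \eqref{opt1} and \eqref{opt2} from it when $T_1\to\infty$.

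\textbf{Step 1: uniform deviation of the objective from its risk.} For $t\in\calT_0$ the realized and untreated outcomes coincide and $\sumj w_j=1$, so $y_{0,t}-\sumj w_j y_{j,t}=\sumj w_j e_{t,y^N}^{(j)}$; since the covariate term is deterministic under Condition~\ref{as:data stochastic relax dgp}, $L_{T_0}(\w)-R_{T_0}(\w)=\sum_{i=1}^J\sum_{j=1}^J w_iw_j a_{ij}^{T_0}$ with $a_{ij}^{T_0}=T_0^{-1}\sumt\{e_{t,y^N}^{(i)}e_{t,y^N}^{(j)}-\E e_{t,y^N}^{(i)}e_{t,y^N}^{(j)}\}$. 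Bounding this quadratic form by $\|\w\|^2$ times the spectral norm of $(a_{ij}^{T_0})$, using $\|\w\|^2=O(J)$ on $\calH$ (from $\sumj w_j=1$ and boundedness of the weights), the fourth-moment bound of Condition~\ref{as:error term relax dgp}~\eqref{as:moment of error term relax dgp} and the mixing of Condition~\ref{as:data mixing relax dgp} to get each $a_{ij}^{T_0}=O_p(T_0^{-1/2})$ via a central limit theorem for mixing sequences (hence $\E\sum_{i,j}(a_{ij}^{T_0})^2=O(J^2/T_0)$), one obtains $\supw|L_{T_0}(\w)-R_{T_0}(\w)|=O_p(T_0^{-1/2}J^2)$; dividing by $\xi_{T_0}$ and invoking Condition~\ref{as:xi1} gives $\supw|L_{T_0}(\w)-R_{T_0}(\w)|/\xi_{T_0}\cp0$. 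The identical argument with $T_1$ in place of $T_0$, with Condition~\ref{as:error term relax dgp}~\eqref{as:var of post error term relax dgp} supplying the nondegeneracy for the post-treatment central limit theorem, yields $\supw|L_{T_1}(\w)-R_{T_1}(\w)|/\xi_{T_1}\cp0$ \emph{when $T_1\to\infty$}; this is the one place divergence of $T_1$ is needed, which is why \eqref{opt1}--\eqref{opt2} are confined to that regime.

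\textbf{Step 2: linking the two risks, and proof of \eqref{opt0}--\eqref{opt2}.} By Condition~\ref{as:risk bound}, and since the deterministic covariate term is $O(J^2/T_0)=o(\xi_{T_0})$ under Conditions~\ref{as:covariates}~\eqref{Z} and~\ref{as:xi1}, $\supw|R_{T_0}(\w)-R_{T_1}(\w)|=o(\xi_{T_0})$; evaluating this at the pre- and post-treatment risk minimizers gives $\xi_{T_1}/\xi_{T_0}\to1$, so $\xi_{T_0}$ and $\xi_{T_1}$ are interchangeable inside $o(\cdot)$ terms. Using $L_{T_0}(\wh\w)\le L_{T_0}(\w^{\text{opt}}_{T_1})$ and chaining Steps 1--2,
\[
R_{T_1}(\wh\w)\le R_{T_0}(\wh\w)+o(\xi_{T_0})\le L_{T_0}(\wh\w)+o_p(\xi_{T_0})\le L_{T_0}(\w^{\text{opt}}_{T_1})+o_p(\xi_{T_0})\le R_{T_1}(\w^{\text{opt}}_{T_1})+o_p(\xi_{T_0})=\xi_{T_1}\{1+o_p(1)\},
\]
which together with $R_{T_1}(\wh\w)\ge\xi_{T_1}$ yields \eqref{opt0}; only the pre-treatment deviation was used, so this holds for finite $T_1$. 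For divergent $T_1$, the post-treatment deviation of Step 1 together with \eqref{opt0} gives $L_{T_1}(\wh\w)=\xi_{T_1}\{1+o_p(1)\}$ and, via the same deviation bound evaluated at $\w^{\text{opt}}_{T_1}$, $\inf_{\w\in\calH}L_{T_1}(\w)=\xi_{T_1}\{1+o_p(1)\}$, which is \eqref{opt1}. Finally \eqref{opt1} reads $\xi_{T_1}^{-1}\{L_{T_1}(\wh\w)-\xi_{T_1}\}\cp0$; under the stated uniform integrability this upgrades to $L^1$ convergence, so $\xi_{T_1}^{-1}\E L_{T_1}(\wh\w)\to1$, which is \eqref{opt2} since $\inf_{\w\in\calH}R_{T_1}(\w)=\xi_{T_1}$.

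\textbf{Main obstacle.} The crux is Step 1: making the stochastic objective converge to its risk uniformly over the weight set $\calH$, whose dimension $J$ may itself grow, and at a rate that is negligible relative to the (possibly vanishing) pre-treatment fit $\xi_{T_0}$. This is precisely what Condition~\ref{as:xi1} (balancing $\xi_{T_0}$, $T_0$ and $J$) together with the mixing, fourth-moment and nondegeneracy conditions are for: the central limit theorem for dependent arrays must be pushed through uniformly over the $O(J^2)$ cross-products $e_{t,y^N}^{(i)}e_{t,y^N}^{(j)}$, so that the $J^2$ counting of terms is dominated by the $T_0^{-1/2}$ decay after rescaling by $\xi_{T_0}$. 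The rest is bookkeeping around the minimization inequality $L_{T_0}(\wh\w)\le L_{T_0}(\w)$ and the pre/post risk comparison.
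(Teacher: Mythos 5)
Your proof is correct and follows essentially the same route as the paper: a uniform deviation bound for $L_{T_0}-R_{T_0}$ (and $L_{T_1}-R_{T_1}$ when $T_1\to\infty$) over $\calH$ at rate $O_p(T_0^{-1/2}J^2)$ via the mixing CLT applied to the cross-products $e_{t,y^N}^{(i)}e_{t,y^N}^{(j)}$, the pre/post risk link from Condition~\ref{as:risk bound}, and then Condition~\ref{as:xi1} to make both negligible relative to $\xi_{T_0}$. The only cosmetic differences are that you bound the quadratic form via a norm inequality rather than the elementwise triangle inequality, and you carry out the final chaining around $L_{T_0}(\wh\w)\le L_{T_0}(\w^{\text{opt}}_{T_1})$ explicitly where the paper invokes Lemma~1 of \cite{gao2019}.
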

This theorem is a generalized version of Theorem~\ref{th:opt}, relaxing the linear factor model assumption.
\section{Simulation}\label{sec:simulation}
In this section, we verify the theory via simulation. We first examine the convergence of the SC weight and then compare the SC estimators with popular competing methods to verify their asymptotic optimality.

\subsection{Simulation design}\label{sec:design}
We follow \cite{hsiao2019} to generate the data from the following pure factor model:
\be
y_{i,t}^N=\gamma_{1,i}f_{1,t}+\gamma_{2,i}f_{2,t}+u_{i,t},  
\n
\ee
where the common factors $f_{s,t}$ and the factor loadings $\gamma_{s,i}$, $s \in\{1,2\}$ are both drawn independently from $\N(0,1)$. The idiosyncratic shock $u_{i,t}$ is weakly cross-sectionally dependent, generated by 
\be
u_{i,t}&=&(1+b^2)v_{i,t}+bv_{i+1,t}+bv_{i-1,t},\n
\ee
where $v_{i,t}\sim \text{i.i.d.} \N(0,\sigma_i^2)$, $b=1$ and $\sigma_i^2$ are drawn independently from $0.5\left(\chi^2(1)+1\right)$ for all $i$.
As above, we set unit 0 as the only treated unit, and the remaining $\{1,\ldots,J\}$ units are the control.
Since the value of $\alpha_{i,t}$ does not influence the estimation
and evaluation procedure based on \eqref{SC weight} and \eqref{LT1},
we follow the literature on the in SCM not assigning $\alpha_{i,t}$ a value.
We set $J\in\{30, 50\}$, the number of pretreatment periods $T_0\in\{50, 100, 200, 400\}$ and the number of posttreatment periods $T_1=10$. The number of replications is $R=1000$.

\subsection{Simulation results on weight convergence}\label{sec:result}
To investigate the convergence of the SC weight, we 
need to know the limit of the SC weight, which further requires the knowledge of $R_{T_1}(\w)$. If we denote $\bet_{t}=\left(y_{0,t}^N,\ldots,y_{J,t}^N\right)^\top$, then $\bet_{t}$ follows $\N\left(\bmu_{\bet_t},\bSig_{\bet_t}\right)$, where $\bmu_{\bet_t}$ is a $(J+1)\times1$ vector with its $(i+1)$-th element being $\mu_{\bet_t,i+1}=\gamma_{1,i}f_{1,t}+\gamma_{2,i}f_{2,t}$ and $\bSig_{\bet_t}$ is a $(J+1)\times(J+1)$ matrix with its element in the $(i+1)$-th row and $(j+1)$-th column being
$$
\bSig_{\bet_t,i+1,j+1}=\left\{
\begin{array}{ll}
	\left(1+b^2\right)^2\sigma_0^2+b^2\sigma_1^2 & if\ i=j=0,\\
	\left(1+b^2\right)^2\sigma_i^2+b^2\left(\sigma_{i-1}^2+\sigma_{i+1}^2\right) & if\ 1\leq i=j\leq J-1,\\
	\left(1+b^2\right)^2\sigma_J^2+b^2\sigma_{J-1}^2 & if\ i=j=J,\\
	b\left(1+b^2\right)\left(\sigma_i^2+\sigma_j^2\right) & if\ |i-j|=1,\\
	b^2\sigma_{(i+j)/2}^2 & if\ |i-j|=2,\\
	0 & \text{otherwise}.
\end{array}
\right.
$$ 
Note further that $\left\{\bet_{t}\mid t\in\mathcal{T}_0\cup\mathcal{T}_1\right\}$ are independent over $t$. 
With $\bmu_{\bet_t}$ and $\bSig_{\bet_t}$ at hand, one can compute $R_{T_1}(\w)$ as
\be
R_{T_1}(\w)&=&\E L_{T_1}(\w)  \n\\
&=&\frac{1}{T_1}\sumT\E\left(y_{0,t}^N-\sumj w_j y_{j,t}^N \right)^2  \n\\
&=&\frac{1}{T_1}\sumT\E\left\{ \bet_t^\top\left(1,-\w^\top\right)^\top\left(1,-\w^\top\right)\bet_t\right\}  \n\\
&=&\frac{1}{T_1}\sumT\bmu_{\bet_t}^\top\left(1,-\w^\top\right)^\top\left(1,-\w^\top\right)\bmu_{\bet_t}+\tr\left\{\left(1,-\w^\top\right)^\top\left(1,-\w^\top\right)\bSig_{\bet_t}\right\}.
\label{simu:R}
\ee
When an intercept is allowed for, the generalized version of $R_{T_1}(\w,d)$ can be written as
\be
R_{T_1}(\w,d)
&=&\E L_{T_1}(\w,d)=\frac{1}{T_1}\sumT\E\left(y_{0,t}^N-\sumj w_j y_{j,t}^N-d \right)^2\n\\
&=&\frac{1}{T_1}\sumT\left(\bmu_{\bet_t}^\top,1\right)\left(1,-\w^\top,-d\right)^\top\left(1,-\w^\top,-d\right)\left(\bmu_{\bet_t}^\top,1\right)^\top
\n\\
&&+\tr\left\{\left(1,-\w^\top\right)^\top\left(1,-\w^\top\right)\bSig_{\bet_t}\right\}.
\n
\ee

In the simulation, we search for the weight in the original weight set $\mathcal{H}_{\text{orig}}$, i.e., setting $C_{\text{L}}=0$ and $C_{\text{U}}=1$. The SC weight is estimated by  $$\wh\w=\underset{\w\in\mathcal{H}_{\text{orig}}}{\arg\min} \frac{1}{T_0}\sum_{t\in\calT_0}\left(y_{0,t}-\sumj w_j y_{j,t}\right)^2,$$ and its limit is obtained by $\w^{\text{opt}}_{T_1}=\underset{\w\in\calH_{\text{orig}}}{\arg\min} R_{T_1}(\w)$ with $R_{T_1}(\w)$ computed from~\eqref{simu:R}. Allowing for negative weights does not qualitatively change the results.

Figure~\ref{fig:weight} plots the vector norm of the difference
between the SC weight and its limit,
$\left\|\wh\w-\w^{\text{opt}}_{T_1}\right\|$, averaged over the 1000
replications, as $T_{0}$ increases. The solid line indicates that $J=30$, and the dashed line is when $J=50$. 
Under both sample sizes, $\left\|\wh\w-\w^{\text{opt}}_{T_1}\right\|$ is monotonically decreasing as $T_0$ increases, which confirms the convergence result in Theorem~\ref{th:con}. Comparing the values obtained under different numbers of control units, we find that $\wh\w$ converges faster when $J=30$ than $J=50$, which again confirms that the rate of convergence slows when $J$ increases as stated in Theorem~\ref{th:con}.

\begin{figure}[htbp]
	\centering
	\includegraphics[scale=0.45]{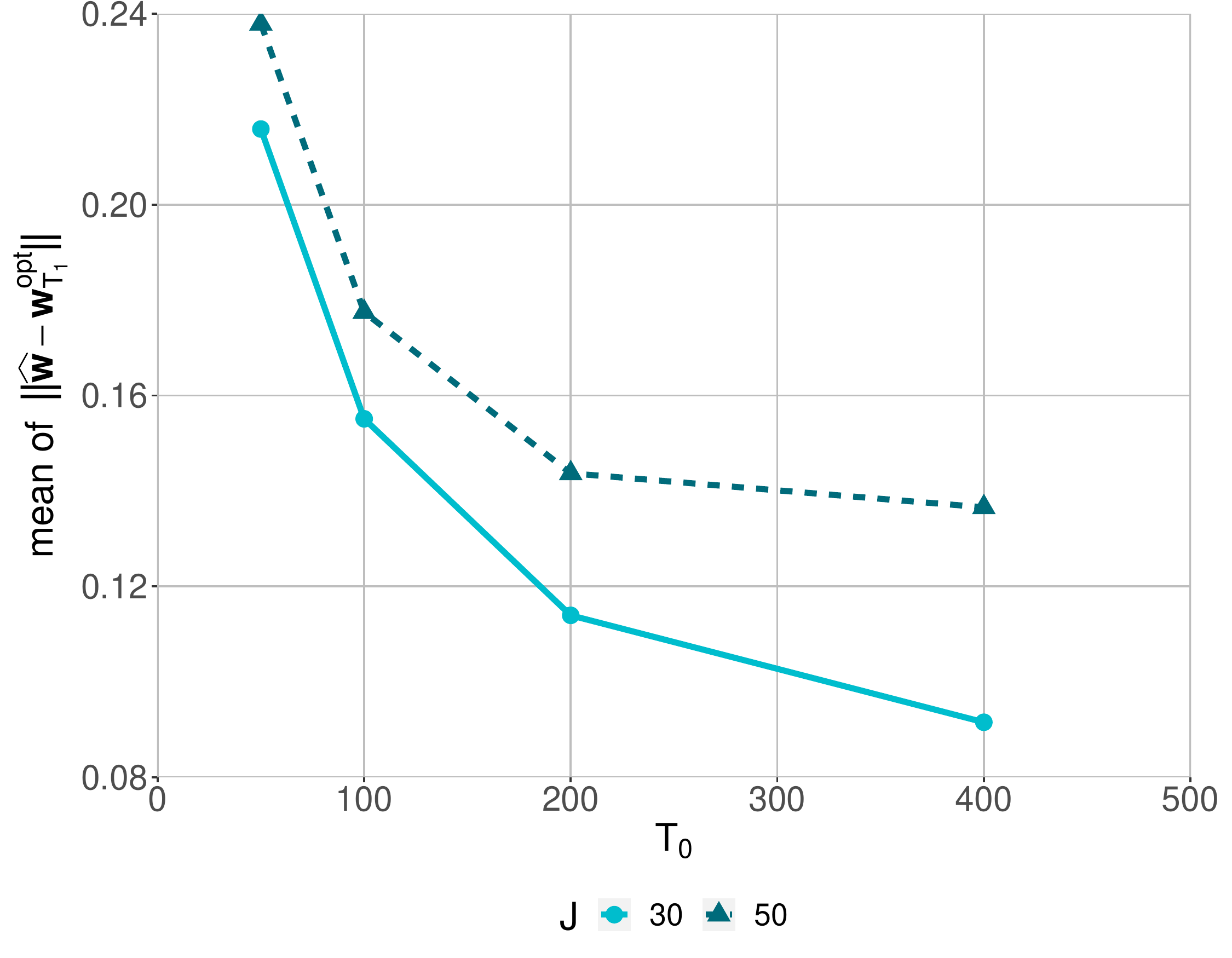}
	\caption{Average of $\left\|\wh\w-\w^{\text{opt}}_{T_1}\right\|$ over 1000 replications}
	\label{fig:weight}
\end{figure}

\subsection{Simulation results on asymptotic optimality}
To examine the asymptotic optimality of the SC estimator, we compare it with alternative estimators that also construct counterfactual outcomes by averaging control units. We consider two versions of SCM, the standard SCM and the SCM with an intercept (denoted as DSC). We compare them with five alternative methods, i.e., propensity score matching (PSM), IPW, equal-weight averaging (Equal), best control selection (Sel) and DID. 

The standard SCM and DSC are implemented as \eqref{SC weight} and \eqref{SC weight with intercept}, respectively.
PSM is one of the most popular matching methods, first proposed by \cite{rosenbaum1983}. It is based on the propensity score, $\wh\pi(\X_i)$ estimated from a logistic regression with $\X_i$ as regressors. 
The weights are computed by~\eqref{eq:matching}, where we set $K=1$ and 
$$
\calJ_K(0)=\left\{j\in\{1,\ldots,J\} \mid \sum_{k=1}^J \mathbb{I}\Big(\left| \wh\pi\left(\X_k\right)-\wh\pi\left(\X_0\right)\right| \leq \left| \wh\pi\left(\X_j\right)-\wh\pi\left(\X_0\right)\right|\Big)\leq K \right\},
$$
with $\mathbb{I}(\cdot)$ being an indicator function.  
IPW computes the weight from~\eqref{eq:ipw}, also using the estimated propensity score $\wh\pi(\X_i)$ as above.
Equal-weight averaging constructs the counterfactuals by using simple average of all control units. The best control selection selects a single control unit based on minimizing the in-sample mean squared error. 
It can be viewed as a special case of best subset selection; see \cite{doudchenko2016} for further explanation. Furthermore, it can also 
be viewed as a special case of averaging controls, since the weight
equals 1 for the selected best control and zeros for the remaining controls, i.e.,
$$
\wh w^{\text{Sel}}_j=\left\{
\begin{array}{ll}
	1 & \text{if}\ j={\arg\min}_{ m_0\in\{1,\ldots,J\}}{T_0}^{-1}\sumt\left(y_{0,t}^N-y_{m_0,t}^N\right)^2,\\
	0 & \text{otherwise}.
\end{array}
\right.
$$
We evaluate all methods by $R_{T_1}(\w)$ computed from~\eqref{simu:R}.

Figure~\ref{fig:Rwd} plots the ratio of risk,
$R_{T_1}(\w)/\inf_{\w^{*} \in \calH}R_{T_1}(\w^{*})$ for the methods
without intercepts and $R_{T_1}(\w,d)/\inf_{\w^{*} \in \calH,
  d^{*}\in\Ra}R_{T_1}(\w^{*},d^{*})$ for the methods with intercepts,
averaged over replications. SCM and DSC generally perform quite
similarly, and both methods exhibit clear superiority as their ratios
of the risk are much lower than those of the other methods for all
$T_0$ and $J$. Moreover, the curves of SCM and DSC both monotonically
decrease toward 1 as $T_0$ increases, implying that the risk of the
SCM and DSC estimators converges to the lowest possible risk as the
number of pretreatment periods increases. This result precisely
coincides with the asymptotic optimality stated in Theorems~\ref{th:opt} and~\ref{th:opt with intercept}. 
Regarding the competing methods, we note that PSM and Sel produce a
very similar ratio of risk because they typically select the same
control unit to construct the counterfactual. The performance of IPW,
Equal and DID are also similar. The similarity between IPW and Equal
can be explained by the fact that our DGP generates all $J$ control
units from the same distribution, resulting in
$\wh\w^{\text{IPW}}\approx 1/J$, and the similarity between Equal and
DID is also expected because both impose equal weights with the only
difference being in the intercept. Figure~\ref{fig:Rwd} shows that the
ratios of all competing methods do not converge, implying that none of
these methods achieves asymptotic optimality in terms of risk.

\begin{figure}[htbp]
	\centering
	\subfigure{
		\includegraphics[scale=0.33]{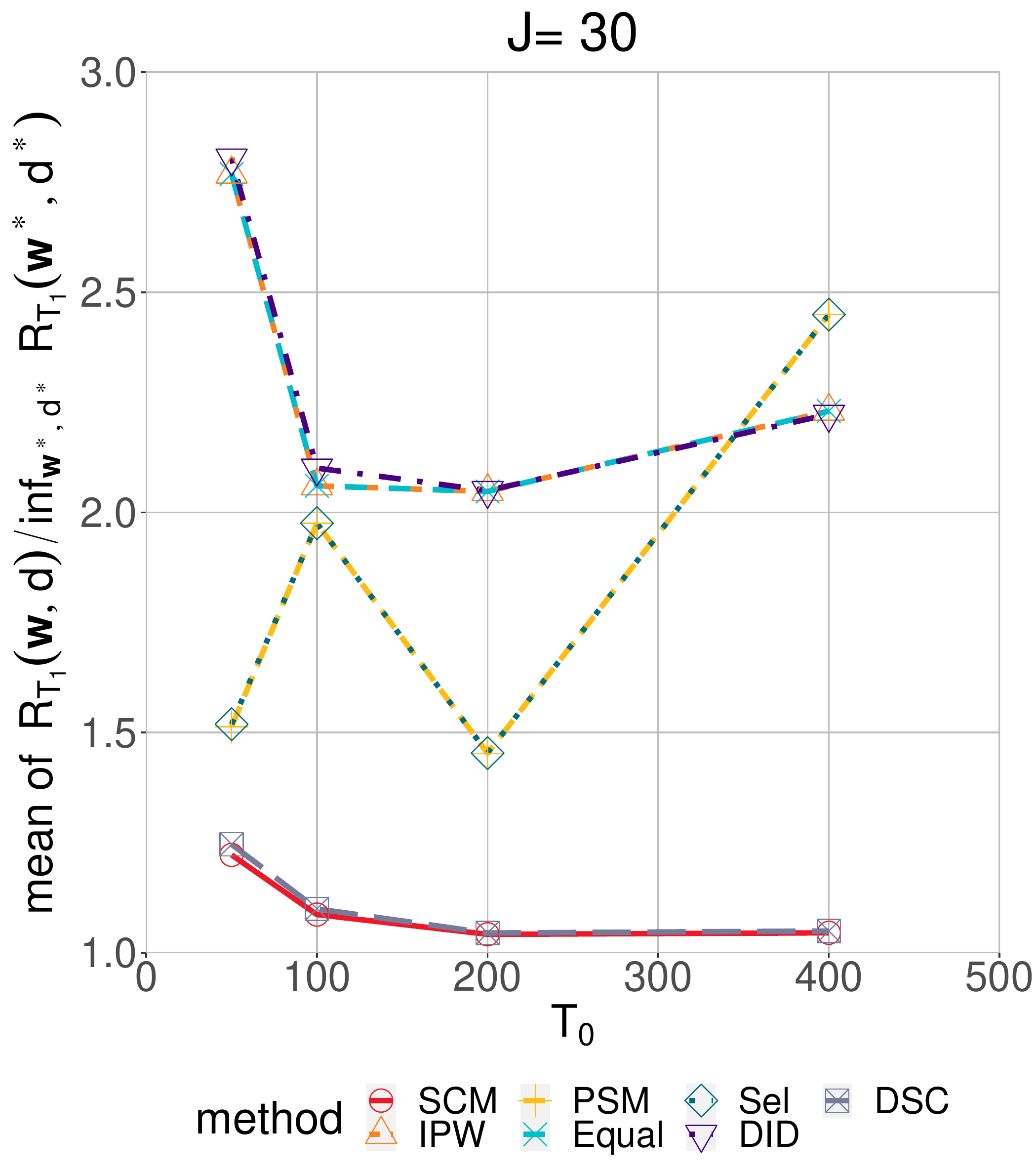}
	}
	\subfigure{
		\includegraphics[scale=0.33]{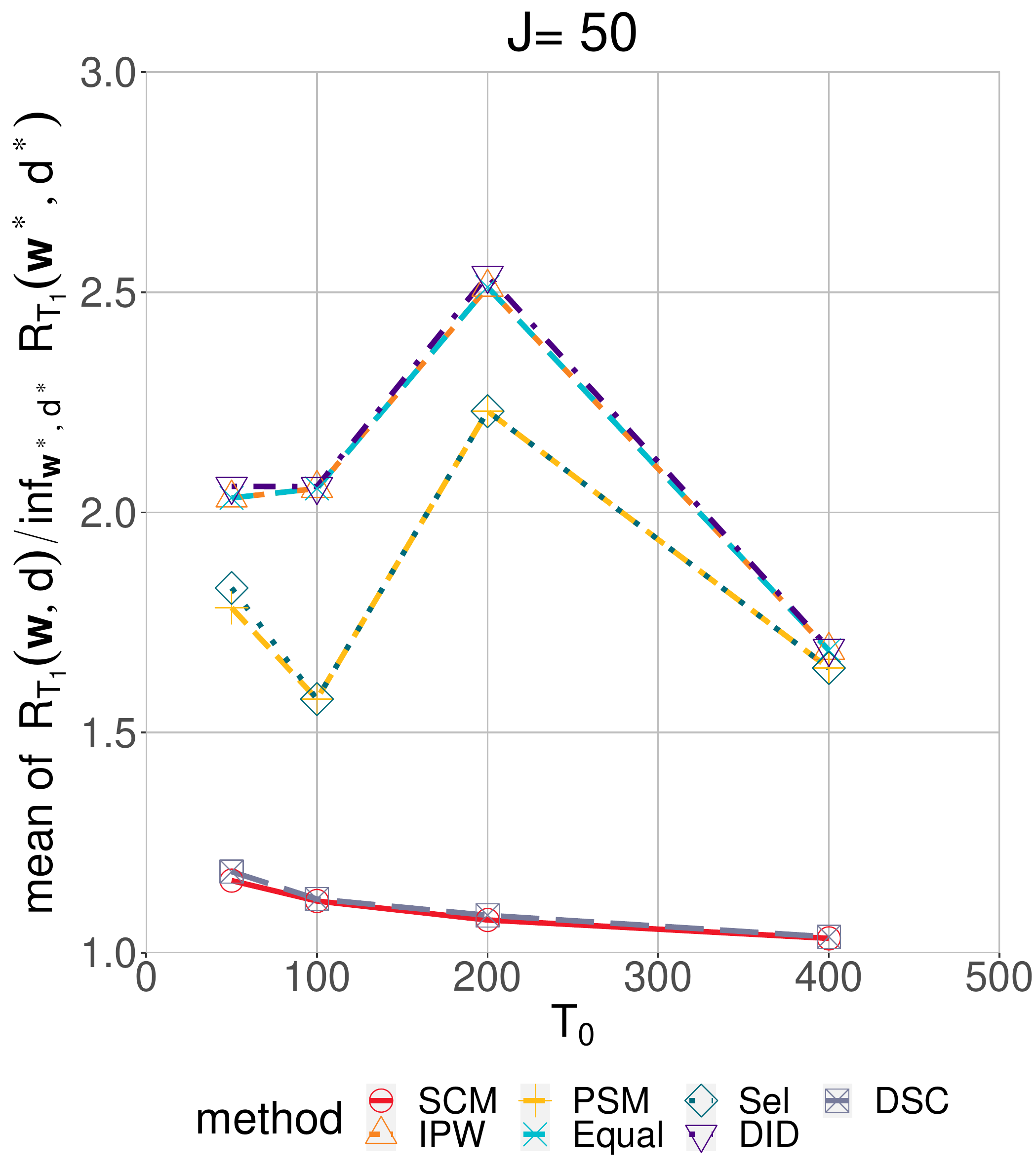}
	}
	\caption{Average of $R_{T_1}(\w,d)/\inf_{\w^* \in \calH,d^*\in\Ra}R_{T_1}(\w^*,d^*)$ over 1000 replications}
	\label{fig:Rwd}
\end{figure}

\section{Conclusion}\label{sec:discussion}
This paper investigates the asymptotic properties of the SCM as the number of the pretreatment periods diverges. 
We show that the SC weight converges to the limit that may not recover the factor structure but minimizes the (expected) mean squared prediction error. 
We quantify the rate of convergence, from
which one can better understand how the number of control units and
the pre- and posttreatment fit influence the convergence rate. Our convergence results also verify under which conditions the SC weight can dilute and reconstruct the factor structure as the number of controls increases, so that the unbiasedness of the SC estimator can be achieved. 
	
Furthermore, we establish the asymptotic optimality of both the
standard SCM and DSC estimators. We provide conditions under which the
two versions of SC estimators are asymptotically optimal among the
class of estimators that construct counterfactual outcomes using an
average of control units. Our asymptotic optimality suggests that if
there is no weight to produce a perfect pretreatment fit, then the
(expected) squared prediction error of the SC estimator converges to
the lowest possible risk, implying that the SC estimator asymptotically dominates other estimators in this class, such as matching, IPW and DID. This result justifies the use of the SC estimator even when there are unobserved confounders, pretreatment fit is not perfect, and the number of control units is finite. These properties are also free of the underlying assumptions on the DGP of potential outcomes.

Our theoretical analysis suggests that the SCM provides theoretical
guarantees and can be used in a wide
range of applications, regardless of whether pretreatment fit is perfect. When pretreatment fit is perfect, the SCM estimator can be interpreted as an unbiased estimator of the treatment effect; in the presence of imperfect pretreatment fit, the SCM can still be applied as an asymptotically minimum-MSPE estimator of the treatment effect. 

\baselineskip=16pt
\bibliographystyle{abbrvnat}
\bibliography{reference_Properties_of_SCM}


\setcounter{equation}{0} \renewcommand{\theequation}{\thesection.\arabic{equation}}

\newpage \appendix
\singlespace
\setcounter{table}{0} \renewcommand{\thetable}{\thesection\arabic{table}} %
\setcounter{figure}{0} \renewcommand{\thefigure}{\thesection\arabic{figure}}

\section*{APPENDIX}
\section{Proof of Theorem~\ref{th:con}}
\label{sec:pf2}
Denote $\tau_{T_0}= T_0^{\nu}\xi_{T_0}^{1/2}+T_0^{\nu}\xi_{T_1}^{1/2}+T_0^{-1/4+\nu}J$, 
and let $\u=(u_1,\ldots,u_J)^\top$.  According to \cite{fan2004} and \cite{lu2015}, to prove Theorem~\ref{th:con}, it suffices to show that, for any $\varepsilon>0$, there exists a constant $C_\varepsilon$, such that 
	\be
	\Pr\left\{
	\inf_{\left\|\u\right\|=C_\varepsilon,\left(\w^{\text{opt}}_{T_1}+\tau_{T_0}\u\right)\in\calH}
	L_{T_0}(\w^{\text{opt}}_{T_1}+\tau_{T_0}\u)\geq L_{T_0}(\w^{\text{opt}}_{T_1})
	\right\}>1-\varepsilon,
	\label{a2.1}
	\ee
	under a given $T_1$ and any sufficient large $T_0$.
\eqref{a2.1} implies that there exists a $\w^{\star}$ in the bounded closed domain set $\left\{\w^{\text{opt}}_{T_1}+\tau_{T_0}\u \mid  \left(\w^{\text{opt}}_{T_1}+\tau_{T_0}\u\right)\in\calH, \left\|\u\right\|\leq C_\varepsilon\right\}$, such that it (locally) minimizes $L_{T_0}(\w)$ and $\left\| \w^{\star} - \w^{\text{opt}}_{T_1} \right\|=O_p(\tau_{T_0})$. From the convexity of $L_{T_0}(\w)$ and $\calH$, $\w^{\star}$ is also the unique global minimizer, i.e., $\w^{\star}=\wh\w$.

Define $D(\u)=L_{T_0}(\w^{\text{opt}}_{T_1}+\tau_{T_0}\u)-L_{T_0}(\w^{\text{opt}}_{T_1})$. Then, we can decompose $D(\u)$ as
\be
D(\u)
&=&\frac{1}{T_0} \left\| \Y_0-{\boldsymbol{\mathcal{Y}}_c}(\w^{\text{opt}}_{T_1}+\tau_{T_0}\u) \right\|^2-\frac{1}{T_0}\left\| \Y_0-{\boldsymbol{\mathcal{Y}}_c}\w^{\text{opt}}_{T_1} \right\|^2
\n\\
&=&-\frac{{2}\tau_{T_0}}{T_0} \left(\Y_0-{\boldsymbol{\mathcal{Y}}_c}\w^{\text{opt}}_{T_1}\right)^\top{\boldsymbol{\mathcal{Y}}_c}\u+\frac{\tau_{T_0}^2}{T_0} \left\|{\boldsymbol{\mathcal{Y}}_c}\u\right\|^2
\n\\
&\equiv&\Delta_{1}+\Delta_{2},
\n
\ee
where $\Delta_{1}=-{2}\tau_{T_0}T_0^{-1} \left(\Y_0-{\boldsymbol{\mathcal{Y}}_c}\w^{\text{opt}}_{T_1}\right)^\top{\boldsymbol{\mathcal{Y}}_c}\u$ and $\Delta_{2}=\tau_{T_0}^2T_0^{-1} \left\|{\boldsymbol{\mathcal{Y}}_c}\u\right\|^2$. We show that $\Delta_{2}$ is the dominant term of $D(\u)$ as follows.

We first consider $\Delta_{2}$. From Condition~\ref{as:Y_c}, we have that, with probability approaching 1,
\be
\kappa_1\leq \lambda_{min}\left(\frac{1}{T_0}{\boldsymbol{\mathcal{Y}}_c}^\top{\boldsymbol{\mathcal{Y}}_c}\right)\leq\lambda_{max}\left(\frac{1}{T_0}{\boldsymbol{\mathcal{Y}}_c}^\top{\boldsymbol{\mathcal{Y}}_c}\right)\leq\kappa_2.
\label{a2.2}
\ee
This further implies that, with probability approaching 1,
\be
\Delta_{2}
\geq
\frac{\tau_{T_0}^2}{T_0}\lambda_{min}\left({\boldsymbol{\mathcal{Y}}_c}^\top{\boldsymbol{\mathcal{Y}}_c}\right)\left\|\u\right\|^2
\geq
\kappa_1\tau_{T_0}^2 \left\|\u\right\|^2. 
\label{a2.3}
\ee

Next, we consider $\Delta_{1}$. To simplify the notation, denote $\wt\blam_{t}=(\delta_t,1,\bthe_t^\top,\blam_{t}^\top)^\top$ for $t\in\calT_0\cup\calT_1$ and $\wt\bmu_i=(1,c_i,\Z_i^\top,\bmu_{i}^\top)^\top$ for $i\in\{0,\ldots,J\}$. 
Let $\e_{\wt\bmu}^{(i)}=\wt\bmu_{0}-\wt\bmu_{i}$ for $i\in\{0,\ldots,J\}$.
Under our linear factor structure~\eqref{DGP-factor model} and Condition~\ref{as:data stochastic}, we have that
\be
&&\supw\left| R_{T_0}(\w)-R_{T_1}(\w) \right|
\n\\
&=&\supw\left| 
\frac{1}{T_0}\sumt\left(\sumj w_j\wt\blam_t^\top \e_{\wt\bmu}^{(j)} \right)^2+\frac{1}{T_0}\sumt\E\left(\sumj w_j  e_{t,\epsilon}^{(j)}\right)^2+\frac{1}{T_0}\left\|\Z_0-\sumj w_j\Z_j\right\|^2\right.
\n\\
&&\left.-\frac{1}{T_1}\sumT\left(\sumj w_j\wt\blam_t^\top \e_{\wt\bmu}^{(j)} \right)^2-\frac{1}{T_1}\sumT\E\left(\sumj w_j  e_{t,\epsilon}^{(j)}\right)^2
\right|
\n\\
&\leq&\supw\left| 
\frac{1}{T_0}\sumt\left(\sumj w_j\wt\blam_t^\top \e_{\wt\bmu}^{(j)} \right)^2-\frac{1}{T_1}\sumT\left(\sumj w_j\wt\blam_t^\top \e_{\wt\bmu}^{(j)} \right)^2\right|
+\frac{1}{T_0}\supw\left\|\Z_0-\sumj w_j\Z_j\right\|^2
\n\\
&&+\supw\left| 
\frac{1}{T_0}\sumt\E\left(\sumj w_j  e_{t,\epsilon}^{(j)}\right)^2-\frac{1}{T_1}\sumT\E\left(\sumj w_j  e_{t,\epsilon}^{(j)}\right)^2\right|
\n\\
&\equiv& {\mathbb{I}}_1+{\mathbb{I}}_2+{\mathbb{I}}_3.
\label{a2.4}
\ee 
We analyze ${\mathbb{I}}_1$, ${\mathbb{I}}_2$ and ${\mathbb{I}}_3$ in turn. First, we analyze ${\mathbb{I}}_1$. From Conditions~\ref{as:factor bound}--\ref{as:covariates}, we have 
\be
\frac{1}{T_1}\sumT\left(\frac{1}{T_0}\sum_{k\in\calT_0}\wt\blam_k^\top\wt\blam_k-\wt\blam_t^\top\wt\blam_t\right)=O(T_0^{-1/2}),
\n
\ee
and the components of $\wt\bmu_i$ are bounded; hence,
\be
{\mathbb{I}}_1
&=&\supw\left| \sumi\sumj w_i w_j
\left\{ \frac{1}{T_0}\sumt\wt\blam_t^\top \e_{\wt\bmu}^{(i)}\e_{\wt\bmu}^{(j)\top}\wt\blam_t-\frac{1}{T_1}\sumT\wt\blam_t^\top \e_{\wt\bmu}^{(i)}\e_{\wt\bmu}^{(j)\top}\wt\blam_t
\right\} \right|
\n\\
&\leq&\supw \sumi\sumj |w_i| |w_j|\left|
\frac{1}{T_0}\sumt\wt\blam_t^\top \e_{\wt\bmu}^{(i)}\e_{\wt\bmu}^{(j)\top}\wt\blam_t-\frac{1}{T_1}\sumT\wt\blam_t^\top \e_{\wt\bmu}^{(i)}\e_{\wt\bmu}^{(j)\top}\wt\blam_t
\right|
\n\\
&\leq&\left[\max\{C_\text{L}, C_\text{U}\}\right]^2\sumi\sumj \left|
\frac{1}{T_0}\sumt\wt\blam_t^\top \e_{\wt\bmu}^{(i)}\e_{\wt\bmu}^{(j)\top}\wt\blam_t-\frac{1}{T_1}\sumT\wt\blam_t^\top \e_{\wt\bmu}^{(i)}\e_{\wt\bmu}^{(j)\top}\wt\blam_t
\right|
\n\\
&=&\left[\max\{C_\text{L}, C_\text{U}\}\right]^2 \sumi\sumj \left| \tr\left(
\frac{1}{T_0}\sumt \e_{\wt\bmu}^{(i)}\e_{\wt\bmu}^{(j)\top}\wt\blam_t\wt\blam_t^\top\right)-\tr\left(
\frac{1}{T_1}\sumT \e_{\wt\bmu}^{(i)}\e_{\wt\bmu}^{(j)\top}\wt\blam_t\wt\blam_t^\top\right)
\right|
\n\\
&=&\left[\max\{C_\text{L}, C_\text{U}\}\right]^2 \sumi\sumj \left| \tr\left\{\left(
\frac{1}{T_0}\sumt\wt\blam_t\wt\blam_t^\top-\frac{1}{T_1}\sumT \wt\blam_t\wt\blam_t^\top\right)\e_{\wt\bmu}^{(i)}\e_{\wt\bmu}^{(j)\top}\right\}
\right|
\n\\
&=&O(T_0^{-1/2}J^2).
\label{a2.5}
\ee
Here, the last equality is satisfied due to the fixed $r$ and $F$. We then examine $\mathbb{I}_2$. From Condition~\ref{as:covariates}~\eqref{Z}, we have that 
\be
\mathbb{I}_2
&\leq&\frac{1}{T_0}\supw\left\{\sumj\left\|w_j\left(\Z_0-\Z_j\right)\right\|\right\}^2
\n\\
&\leq&\left[\max\{C_\text{L}, C_\text{U}\}\right]^2T_0^{-1}\left\{\sumj\left\|\Z_0-\Z_j\right\|\right\}^2
\n\\
&=&O_p(T_0^{-1}J^2).
\label{a2.6}
\ee 
Finally, we consider $\mathbb{I}_3$. From 
Condition~\ref{as:post error term bias}, it is obvious that $\mathbb{I}_3=o(\xi_{T_0})$. Combining with Inequalities~\eqref{a2.4}--\eqref{a2.6}, it implies that
\be
\supw\left| R_{T_0}(\w)-R_{T_1}(\w) \right|=O(T_0^{-1/2}J^2)+o(\xi_{T_0}).
\label{a2.7}
\ee
The above equation holds for $\w^{\text{opt}}_{T_1}$, and note that $R_{T_1}(\w^{\text{opt}}_{T_1})=\xi_{T_1}$. Thus, we have that
	\be
	R_{T_0}(\w^{\text{opt}}_{T_1})-\xi_{T_1} =O(T_0^{-1/2}J^2)+o(\xi_{T_0}).
	\label{a2.8}
	\ee
	Since $\E\left\|\Y_0-{\boldsymbol{\mathcal{Y}}_c}\w^{\text{opt}}_{T_1}\right\|^2\leq T_0R_{T_0}(\w^{\text{opt}}_{T_1})$, using \eqref{a2.8}, we can obtain that
\be
\left\|\Y_0-{\boldsymbol{\mathcal{Y}}_c}\w^{\text{opt}}_{T_1}\right\|=O_p(T_0^{1/2}\xi_{T_1}^{1/2})+O_p(T_0^{1/4}J)+o_p(T_0^{1/2}\xi_{T_0}^{1/2}).
\label{a2.9}
\ee
Therefore, we have that
\be
\left|\Delta_{1}\right|&\leq&
\frac{{2}\tau_{T_0}}{T_0}\left\|\Y_0-{\boldsymbol{\mathcal{Y}}_c}\w^{\text{opt}}_{T_1}\right\|
\left\|{\boldsymbol{\mathcal{Y}}_c}\u\right\|
\n\\
&\leq&\frac{{2}\tau_{T_0}}{T_0}\left\|\Y_0-{\boldsymbol{\mathcal{Y}}_c}\w^{\text{opt}}_{T_1}\right\|\sqrt{\lambda_{max}\left({\boldsymbol{\mathcal{Y}}_c}^\top{\boldsymbol{\mathcal{Y}}_c}\right)} \left\|\u\right\|
\n\\
&=&
O_p(\tau_{T_0}\xi_{T_1}^{1/2})\left\|\u\right\|+O_p(\tau_{T_0}T_0^{-1/4}J) \left\|\u\right\|+o_p(\tau_{T_0}\xi_{T_0}^{1/2})\left\|\u\right\|,\n
\ee
where the second inequality is due to~\eqref{a2.2} and the last equality is due to~\eqref{a2.9}.
This equation together with~\eqref{a2.3} shows that $\Delta_{2}$ asymptotically dominates $\Delta_{1}$. 
Therefore, $D(\u)\geq 0$ in probability for any $\u$ that satisfies $\left\|\u\right\|=C_\varepsilon$ and $\left(\w^{\text{opt}}_{T_1}+\tau_{T_0}\u\right)\in\calH$.
This completes the proof of~\eqref{a2.1}, and thus Theorem~\ref{th:con} holds.

\section{Proof of Theorem~\ref{th:opt}}
\label{sec:pf1}
\setcounter{equation}{0} 

We first prove \eqref{opt0} in Theorem~\ref{th:opt}. To this end, we decompose $L_{T_0}(\w)$ as
	\be
	L_{T_0}(\w)=R_{T_1}(\w)+L_{T_0}(\w)-R_{T_0}(\w)+R_{T_0}(\w)-R_{T_1}(\w).
	\n
	\ee 
	By Lemma~1 in \cite{gao2019}, it suffices to show that  
	\be
	\supw \left| \frac{L_{T_0}(\w)-R_{T_0}(\w)}{ R_{T_0}(\w)} \right|=o_p(1)
	\label{a0.1}
	\ee
	and
	\be
	\supw \left| \frac{R_{T_0}(\w)-R_{T_1}(\w)}{ R_{T_0}(\w)} \right|=o(1).
	\label{a0.2}
	\ee
	
	We first verify~\eqref{a0.1}. Note that
	\be
	&&\supw \left| \frac{L_{T_0}(\w)-R_{T_0}(\w)}{ R_{T_0}(\w)} \right|
	\n\\
	&\leq&
	\xi_{T_0}^{-1}\supw \left| L_{T_0}(\w)-R_{T_0}(\w) \right|
	\n\\
	&=&
	\xi_{T_0}^{-1}\supw\left| \frac{1}{T_0}\sumt \left\{\left(y_{0,t}-\sumj w_j y_{j,t}\right)^2-\E\left(y_{0,t}-\sumj w_j y_{j,t}\right)^2 \right\}\right|
	\n\\
	&=&
	\xi_{T_0}^{-1}\supw\left| \frac{1}{T_0}\sumt\sumi\sumj w_i w_j \left\{ \left(y_{0,t}-y_{i,t}\right)\left(y_{0,t}-y_{j,t}\right)-\E\left(y_{0,t}-y_{i,t}\right)\left(y_{0,t}-y_{j,t}\right) \right\} \right|
	\n\\
	&\leq&
	\xi_{T_0}^{-1}\supw\sumi\sumj |w_i| |w_j|\left| \frac{1}{T_0}\sumt \left\{ \left(y_{0,t}-y_{i,t}\right)\left(y_{0,t}-y_{j,t}\right)-\E\left(y_{0,t}-y_{i,t}\right)\left(y_{0,t}-y_{j,t}\right) \right\}
	\right|
	\n\\
	&\leq&
	\left[\max\{C_\text{L}, C_\text{U}\}\right]^2\xi_{T_0}^{-1}\sumi\sumj \left| \frac{1}{T_0}\sumt \left\{ \left(y_{0,t}-y_{i,t}\right)\left(y_{0,t}-y_{j,t}\right)-\E\left(y_{0,t}-y_{i,t}\right)\left(y_{0,t}-y_{j,t}\right) \right\}
	\right|
	\n\\
	&=&
	\left[\max\{C_\text{L}, C_\text{U}\}\right]^2\xi_{T_0}^{-1}T_0^{-1/2}\sumi\sumj\Psi_{T_0}(i,j),
	\label{a0.1.1}
	\ee
	where 
	\be
	\Psi_{T_0}(i,j)=\left| \frac{1}{\sqrt{T_0}}\sumt \left\{ \left(y_{0,t}-y_{i,t}\right)\left(y_{0,t}-y_{j,t}\right)-\E\left(y_{0,t}-y_{i,t}\right)\left(y_{0,t}-y_{j,t}\right) \right\}
	\right|.
	\n
	\ee
	Under the linear factor structure~\eqref{DGP-factor model} and Condition~\ref{as:data stochastic}, we can rewrite $\Psi_{T_0}(i,j)$ as a function of idiosyncratic shocks as 
		\be
		\Psi_{T_0}(i,j)=\left| \frac{1}{\sqrt{T_0}}\sumt \left\{  e_{t,\epsilon}^{(i)}e_{t,\epsilon}^{(j)}-\E  e_{t,\epsilon}^{(i)}e_{t,\epsilon}^{(j)} \right\}
		\right|,
		\n
		\ee
	where we recall that $e_{t,\epsilon}^{(i)}=\epsilon_{0,t}-\epsilon_{i,t}$.
	From Condition~\ref{as:data mixing} and Theorem~3.49 in \cite{white1984}, $e_{t,\epsilon}^{(i)}e_{t,\epsilon}^{(j)}$ for $i$, $j\in\{1,\ldots,J\}$ is either an $\alpha$-mixing sequence with the mixing coefficient $\alpha=-r/(r-2)$ or a $\phi$-mixing sequence with the mixing coefficient $\phi=-r/(2r-1)$, $r\geq 2$. Moreover,  $\var\left(e_{t,\epsilon}^{(i)}e_{t,\epsilon}^{(j)}\right)$ can be uniformly bounded, as a result of Condition~\ref{as:error term}~\eqref{as:moment of error term}.
	Using Theorem~5.20 in \cite{white1984}, such properties of $e_{t,\epsilon}^{(i)}e_{t,\epsilon}^{(j)}$ together with Condition~\ref{as:error term}~\eqref{as:var of pre error term} imply that
	\be
	\sumi\sumj \Psi_{T_0}(i,j)=O_p(J^2).
	\label{a0.1.2}
	\ee
	Combining~\eqref{a0.1.1}, \eqref{a0.1.2} and Condition~\ref{as:xi1}, we can obtain~\eqref{a0.1}. 
	
	Then, we show that~\eqref{a0.2} holds.
Condition~\ref{as:data stochastic} implies~\eqref{a2.4}, and Conditions~\ref{as:factor bound}--\ref{as:covariates} imply Inequalities~\eqref{a2.5} and \eqref{a2.6}. With~\eqref{a2.4}--\eqref{a2.6} and  Condition~\ref{as:post error term bias}, we have that
	\be
	&&\supw\left| \frac{R_{T_0}(\w)-R_{T_1}(\w)}{ R_{T_0}(\w)} \right|
	\n\\
	&\leq&\xi_{T_0}^{-1}\supw\left| R_{T_0}(\w)-R_{T_1}(\w) \right|
	\n\\
	&=&O(\xi_{T_0}^{-1}T_0^{-1/2}J^2)+o(1).\n
	\ee
	This equation, combined with Condition~\ref{as:xi1}, leads to~\eqref{a0.2}.

	Note that the proof above holds regardless of whether $T_1$ is finite or divergent, as does~\eqref{opt0} in Theorem~\ref{th:opt}.

Next, we prove~\eqref{opt1} in Theorem~\ref{th:opt} when $T_1$ diverges at rate $O(T_0)$.
Since 
\be
L_{T_0}(\w)=L_{T_1}(\w)+L_{T_0}(\w)-R_{T_0}(\w)+R_{T_0}(\w)-R_{T_1}(\w)+R_{T_1}(\w)-L_{T_1}(\w)
\n
\ee
and \eqref{a0.1}--\eqref{a0.2} hold when $T_1$ diverges at rate $O(T_0)$,
by Lemma~1 in \cite{gao2019}, it suffices to show that for a divergent $T_1$ at rate $O(T_0)$,
\be
\supw \left| \frac{L_{T_1}(\w)-R_{T_1}(\w)}{R_{T_0}(\w)} \right|=o_p(1).
\label{a1.3}
\ee
Similar to the arguments of~\eqref{a0.1.1}--\eqref{a0.1.2}, under Conditions~\ref{as:data stochastic}, \ref{as:data mixing}, \ref{as:error term}~\eqref{as:moment of error term} and \ref{as:error term}~\eqref{as:var of post error term} and a divergent $T_1$ at rate $O(T_0)$, we can obtain that
\be
\supw \left| \frac{L_{T_1}(\w)-R_{T_1}(\w)}{ R_{T_0}(\w)} \right|=O_p(\xi_{T_0}^{-1}T_0^{-1/2}J^2).
\n
\ee
This equation together with Condition~\ref{as:xi1} leads to~\eqref{a1.3}, which further verifies~\eqref{opt1}.

Finally, we prove \eqref{opt2} in Theorem~\ref{th:opt} when $T_1$ is divergent at rate $O(T_0)$. 
We can bound $\xi_{T_1}^{-1}L_{T_1}(\wh\w)$ from both above and below as 
\be
\label{a1.4}
&&\xi_{T_1}^{-1}L_{T_1}(\wh\w)
\n\\
&=&\supw\left| \frac{L_{T_1}(\wh\w)}{L_{T_1}(\w)}\frac{L_{T_1}(\w)}{R_{T_0}(\w)} \frac{R_{T_0}(\w)}{R_{T_1}(\w)} \right|
\n\\
&\leq&\supw\left| \frac{L_{T_1}(\wh\w)}{L_{T_1}(\w)}\right|
\supw\left|\frac{L_{T_1}(\w)}{ R_{T_0}(\w)}\right|\supw\left| \frac{R_{T_0}(\w)}{R_{T_1}(\w)}\right|
\n\\
&\leq&\supw\left| \frac{L_{T_1}(\wh\w)}{L_{T_1}(\w)}\right|
\left\{ 1+\supw\left|\frac{L_{T_1}(\w)-R_{T_1}(\w)}{R_{T_0}(\w)}\right| \right\}\supw\left| \frac{R_{T_0}(\w)}{R_{T_1}(\w)}\right|
\\
&=&\supw\left| \frac{L_{T_1}(\wh\w)}{L_{T_1}(\w)}\right|
\left\{ 1+\supw\left|\frac{L_{T_1}(\w)-R_{T_1}(\w)}{R_{T_0}(\w)}\right| \right\}\left\{\infw\left| \frac{R_{T_1}(\w)}{R_{T_0}(\w)}\right|\right\}^{-1}
\n\\
&\leq&\supw\left| \frac{L_{T_1}(\wh\w)}{L_{T_1}(\w)}\right|
\left\{ 1+\supw\left|\frac{L_{T_1}(\w)-R_{T_1}(\w)}{R_{T_0}(\w)}\right| \right\}\left\{1-\supw\left| \frac{R_{T_0}(\w)-R_{T_1}(\w)}{R_{T_0}(\w)}\right|\right\}^{-1},\n
\ee
and
\be
\label{a1.5}
&&\xi_{T_1}^{-1}L_{T_1}(\wh\w)
\n\\
&=&\supw\left| \frac{L_{T_1}(\wh\w)}{L_{T_1}(\w)}\frac{L_{T_1}(\w)}{R_{T_0}(\w)} \frac{R_{T_0}(\w)}{R_{T_1}(\w)} \right|
\n\\
&\geq&\supw\left| \frac{L_{T_1}(\wh\w)}{L_{T_1}(\w)}\right|
\infw\left|\frac{L_{T_1}(\w)}{ R_{T_0}(\w)}\right|\infw\left|\frac{R_{T_0}(\w)}{R_{T_1}(\w)}\right|
\n\\
&\geq&\supw\left| \frac{L_{T_1}(\wh\w)}{L_{T_1}(\w)}\right|
\left\{ 1-\supw\left|\frac{L_{T_1}(\w)-R_{T_1}(\w)}{R_{T_0}(\w)}\right| \right\}\infw\left|\frac{R_{T_0}(\w)}{R_{T_1}(\w)}\right|
\\
&=&\supw\left| \frac{L_{T_1}(\wh\w)}{L_{T_1}(\w)}\right|
\left\{ 1-\supw\left|\frac{L_{T_1}(\w)-R_{T_1}(\w)}{R_{T_0}(\w)}\right| \right\}\left\{\supw\left|\frac{R_{T_1}(\w)}{R_{T_0}(\w)}\right|\right\}^{-1}
\n\\
&\geq&\supw\left| \frac{L_{T_1}(\wh\w)}{L_{T_1}(\w)}\right|
\left\{ 1-\supw\left|\frac{L_{T_1}(\w)-R_{T_1}(\w)}{R_{T_0}(\w)}\right| \right\}\left\{1+\supw\left|\frac{R_{T_0}(\w)-R_{T_1}(\w)}{R_{T_0}(\w)}\right|\right\}^{-1}.\n
\ee
Based on~\eqref{opt1}, \eqref{a0.2} and \eqref{a1.3}, these bounds suggest that $\xi_{T_1}^{-1}L_{T_1}(\wh\w)=1+o_p(1)$, which further leads to $\xi_{T_1}^{-1}\left\{L_{T_1}(\wh\w)-\xi_{T_1}\right\} = o_p(1)$.
Hence, based on the uniform integrability of  $\xi_{T_1}^{-1}\left\{L_{T_1}(\wh\w)-\xi_{T_1}\right\}$, we can show  \eqref{opt2} in Theorem~\ref{th:opt} for a divergent $T_1$ at rate $O(T_0)$. This completes the proof of Theorem~\ref{th:opt}. 
\hqed

\section{Proof of Theorem~\ref{th:opt with intercept}}
\label{sec:pf3}
\setcounter{equation}{0} 
The proof of~\eqref{opt0 with intercept} in Theorem~\ref{th:opt with intercept} resembles that of \eqref{opt0} in Theorem~\ref{th:opt}, namely, we need to show 
\be
\supwd \left| \frac{L_{T_0}(\w,d)-R_{T_0}(\w,d)}{ R_{T_0}(\w,d)} \right|=o_p(1),
\label{s0.1}
\ee
and
\be
\supwd \left| \frac{R_{T_0}(\w,d)-R_{T_1}(\w,d)}{ R_{T_0}(\w,d)} \right|=o(1).
\label{s0.2}
\ee
We first show~\eqref{s0.1}. We use the same notations 
	$\wt\blam_{t}$, $\wt\bmu_i$, $\e_{\wt\bmu}^{(i)}$ and $e_{t,\epsilon}^{(i)}$ defined as above. Under the linear factor structure~\eqref{DGP-factor model}, we have
\be
&&L_{T_0}(\w,d)\n\\
&=&\frac{1}{T_0}
\left\{\sum_{t\in\calT_0}\left(y_{0,t}-\sumj w_j y_{j,t}-d\right)^2+\left\|\Z_0-\sumj w_j\Z_j-d\biota_{r}\right\|^2\right\}
\n\\
&=&L_{T_0}(\w)+d^2
-\frac{2d}{T_0}\sumt\left(y_{0,t}-\sumj w_j y_{j,t}\right)
 +\frac{rd^2}{T_0}-\frac{2d}{T_0}\biota_{r}^\top\left( \Z_0-\sumj w_j\Z_j \right) 
\n\\
&=&L_{T_0}(\w)+\frac{d^2(T_0+r)}{T_0}-\frac{2d}{T_0}\sumt\sumj w_j\left\{\wt\blam_{t}^\top\e_{\wt\bmu}^{(j)}+e_{t,\epsilon}^{(j)} \right\} 
-\frac{2d}{T_0}\biota_{r}^\top\left( \Z_0-\sumj w_j\Z_j \right),
\n
\ee 
for any $\w\in\calH$ and $d\in\Ra$.
Then under Condition~\ref{as:data stochastic}, we have
\be
&&R_{T_0}(\w,d)
\n\\
&=&\E L_{T_0}(\w,d)
\n\\
&=&R_{T_0}(\w)+\frac{d^2(T_0+r)}{T_0}-\frac{2d}{T_0}\sumt\sumj w_j \wt\blam_{t}^\top\e_{\wt\bmu}^{(j)}
-\frac{2d}{T_0}\biota_{r}^\top\left( \Z_0-\sumj w_j\Z_j \right).
\label{s0.pre.Rwd}
\ee
This further suggests that 
\be
&&\supwd \left| \frac{L_{T_0}(\w,d)-R_{T_0}(\w,d)}{ R_{T_0}(\w,d)} \right| 
\n\\
&\leq&
\wt\xi_{T_0}^{-1}\supwd \left| L_{T_0}(\w,d)-R_{T_0}(\w,d) \right|
\n\\
&=&\wt\xi_{T_0}^{-1}\supwd\left|
L_{T_0}(\w)-R_{T_0}(\w)-\frac{2d}{T_0}\sumt\sumj w_j  e_{t,\epsilon}^{(j)}
\right|
\n\\
&\leq&\wt\xi_{T_0}^{-1}\supw\left|
L_{T_0}(\w)-R_{T_0}(\w)\right|+2C_d\wt\xi_{T_0}^{-1}\supw\left|\frac{1}{T_0}\sumt\sumj w_j  e_{t,\epsilon}^{(j)}\right|
\n\\
&\leq&\wt\xi_{T_0}^{-1}\supw\left|
L_{T_0}(\w)-R_{T_0}(\w)\right|+2C_d  \wt\xi_{T_0}^{-1}\supw\sumj |w_j|\left|\frac{1}{T_0}\sumt  e_{t,\epsilon}^{(j)}\right|
\n\\
&\leq&\wt\xi_{T_0}^{-1}\supw\left|
L_{T_0}(\w)-R_{T_0}(\w)\right|+2C_d\max\{C_\text{L}, C_\text{U}\}\wt\xi_{T_0}^{-1} T_0^{-1/2}\sumj \Phi_{T_0}(j),
\label{s0.1.1}
\ee
where  
$\Phi_{T_0}(j)=\left|T_0^{-1/2}\sumt  e_{t,\epsilon}^{(j)}\right|$.
Under Condition~\ref{as:data mixing} and Theorem~3.49 in \cite{white1984}, $\left\{e_{t,\epsilon}^{(j)}\right\}$ for $j\in\{1,\ldots,J\}$ is either an $\alpha$-mixing sequence with the mixing coefficient $\alpha=-r/(r-2)$ or a $\phi$-mixing sequence with the mixing coefficient $\phi=-r/(2r-1)$, $r\geq 2$.
Moreover,  $\var\left\{e_{t,\epsilon}^{(j)}\right\}$ is uniformly bounded for $j\in\{1,\ldots,J\}$, based on Condition~\ref{as:error term}~\eqref{as:moment of error term}.
Furthermore, using Condition~\ref{as:error term with intercept}~\eqref{as:var of pre error term with intercept} and Theorem~5.20 in \cite{white1984}, we can obtain that 
\be
\sumi \Phi_{T_0}(j)=O_p(J).\n
\ee
Due to Conditions~\ref{as:data stochastic}, \ref{as:data mixing}, and \ref{as:error term}~\eqref{as:moment of error term}--\eqref{as:expection of error term}, Formulas~\eqref{a0.1.1}--\eqref{a0.1.2} hold, and thus we have
\be
\wt\xi_{T_0}^{-1}\supw\left|
L_{T_0}(\w)-R_{T_0}(\w)\right|=O_p(\wt\xi_{T_0}^{-1}T_0^{-1/2}J^2).
\label{s0.1.3}
\ee
Formulas~\eqref{s0.1.1} and~\eqref{s0.1.3} together with Condition~\ref{as:xi1 with intercept} imply that~\eqref{s0.1} holds.

We then show~\eqref{s0.2}. Similar to the statement in \eqref{a2.6}, we can obtain that
\be
\supw\left|\biota_{r}^\top\left( \Z_0-\sumj w_j\Z_j \right)\right|
\leq\supw\left\| \biota_{r} \right\| \left\|  \Z_0-\sumj w_j\Z_j \right\|
=O_p\left(J^2\right),
\label{s0.1.4-}
\ee
where the last equality is guaranteed by Condition~\ref{as:covariates}~\eqref{Z} and that $r$ is a fixed number.
Similar to the analysis of \eqref{a2.4}--\eqref{a2.7}, we can show that
\be
\supw\left|
R_{T_0}(\w)-R_{T_1}(\w)\right|=O_p(T_0^{-1/2}J^2)+o(\wt\xi_{T_0}),
\label{s0.1.4}
\ee
under Conditions~\ref{as:data stochastic}--\ref{as:covariates} and \ref{as:post error term bias with intercept}.
Similar to \eqref{s0.pre.Rwd}, Conditions~\ref{as:factor loading} and \ref{as:factor bound with intercept} imply that
\be
R_{T_1}(\w,d)
=R_{T_1}(\w)+d^2-\frac{2d}{T_1}\sumT\sumj w_j \wt\blam_{t}^\top\e_{\wt\bmu}^{(j)}.
\n
\ee
Thus, we obtain that
\be
&&\supwd \left| \frac{R_{T_0}(\w,d)-R_{T_1}(\w,d)}{ R_{T_0}(\w,d)} \right|
\n\\
&\leq&\wt\xi_{T_0}^{-1}\supwd \left| R_{T_0}(\w,d)-R_{T_1}(\w,d) \right|
\n\\
&\leq& 2C_d\wt\xi_{T_0}^{-1}\supw\left|\sumj w_j\left(\frac{1}{T_0}\sumt\wt\blam_{t}-\frac{1}{T_1}\sumT\wt\blam_{t}\right)^\top\e_{\wt\bmu}^{(j)}\right|\n\\
&&+\wt\xi_{T_0}^{-1}\supw \left| R_{T_0}(\w)-R_{T_1}(\w) \right|
+{rC_d^2}\wt\xi_{T_0}^{-1}{T_0}^{-1}+{2C_d}\wt\xi_{T_0}^{-1}{T_0}^{-1}\supw\left|\biota_{r}^\top\left( \Z_0-\sumj w_j\Z_j \right)\right|
\n\\
&\leq& 2C_d\max\{C_{\text{L}},C_{\text{U}}\}\wt\xi_{T_0}^{-1}\sumj\left| \left(\frac{1}{T_0}\sumt\wt\blam_{t}-\frac{1}{T_1}\sumT\wt\blam_{t}\right)^\top\e_{\wt\bmu}^{(j)}\right|
\n\\
&&+\wt\xi_{T_0}^{-1}\supw \left| R_{T_0}(\w)-R_{T_1}(\w) \right|
+{rC_d^2}\wt\xi_{T_0}^{-1}{T_0}^{-1}+2C_d\wt\xi_{T_0}^{-1}{T_0}^{-1}\supw\left|\biota_{r}^\top\left( \Z_0-\sumj w_j\Z_j \right)\right|
\n\\
&=&
O\left(\wt\xi_{T_0}^{-1}T_0^{-1/2}J^2\right)+o(1),
\label{s0.2.1}
\ee
where the last equality holds due to \eqref{s0.1.4-}--\eqref{s0.1.4} together with Conditions~\ref{as:factor loading} and \ref{as:factor bound with intercept}~\eqref{lambda with intercept}.
Combining \eqref{s0.2.1} with Condition~\ref{as:xi1 with intercept}, we can obtain~\eqref{s0.2}. Thus,~\eqref{opt0 with intercept} holds.

Next, we show \eqref{opt1 with intercept} in Theorem~\ref{th:opt with intercept}. It suffices to show that
\be
\supwd \left| \frac{L_{T_1}(\w,d)-R_{T_1}(\w,d)}{R_{T_0}(\w,d)} \right|=o_p(1).
\label{s1.3}
\ee
This equation holds because when $T_1$ is divergent at rate $O(T_0)$, 
\be
\supwd \left| \frac{L_{T_1}(\w,d)-R_{T_1}(\w,d)}{ R_{T_0}(\w,d)} \right|=O_p(\wt\xi_{T_0}^{-1}T_0^{-1/2}J^2),\n
\ee
under Conditions~\ref{as:data stochastic}, \ref{as:data mixing}, \ref{as:error term}~\eqref{as:moment of error term}, \ref{as:error term}~\eqref{as:var of post error term} and \ref{as:error term with intercept}~\eqref{as:var of post error term with intercept}, similar to the analyses of \eqref{s0.1.1}--\eqref{s0.1.3}. 

Finally, we show~\eqref{opt2 with intercept} in Theorem~\ref{th:opt with intercept}. Similar to the arguments of~\eqref{a1.4}--\eqref{a1.5}, we can show that
\be
\wt\xi_{T_1}^{-1}L_{T_1}(\wh\w^{\text{DSC}},\wh d^{\text{DSC}})
&\leq&\supwd\left| \frac{L_{T_1}(\wh\w^{\text{DSC}},d)}{L_{T_1}(\w,d)}\right|
\left\{ 1+\supwd\left|\frac{L_{T_1}(\w,d)-R_{T_1}(\w,d)}{R_{T_0}(\w,d)}\right| \right\}
\n\\
&&\left\{1-
\supwd\left| \frac{R_{T_0}(\w,d)-R_{T_1}(\w,d)}{R_{T_0}(\w,d)}\right|\right\}^{-1},
\label{s1.4}
\ee
and
\be
\wt\xi_{T_1}^{-1}L_{T_1}(\wh\w^{\text{DSC}},\wh d^{\text{DSC}})
&\geq&\supwd\left| \frac{L_{T_1}(\wh\w^{\text{DSC}},d)}{L_{T_1}(\w,d)}\right|
\left\{ 1-\supwd\left|\frac{L_{T_1}(\w,d)-R_{T_1}(\w,d)}{R_{T_0}(\w,d)}\right| \right\}
\n\\
&&\left\{1+\supwd\left|\frac{R_{T_0}(\w,d)-R_{T_1}(\w,d)}{R_{T_0}(\w,d)}\right|\right\}^{-1}.
\label{s1.5}
\ee
By \eqref{opt1 with intercept}, \eqref{s0.2} and \eqref{s1.3}--\eqref{s1.5}, 
we obtain that $\wt\xi_{T_1}^{-1}L_{T_1}(\wh\w^{\text{DSC}},\wh d^{\text{DSC}})=1+o_p(1)$, which further implies that $\wt\xi_{T_1}^{-1}\left\{L_{T_1}(\wh\w^{\text{DSC}},\wh d^{\text{DSC}})-\wt\xi_{T_1}\right\} = o_p(1)$.
Hence, by the uniform integrability of  $\wt\xi_{T_1}^{-1}\left\{L_{T_1}(\wh\w^{\text{DSC}},\wh d^{\text{DSC}})-\wt\xi_{T_1}\right\}$, we show \eqref{opt2 with intercept} in Theorem~\ref{th:opt with intercept} when $T_1$ is divergent at rate  $O(T_0)$. This completes the proof of Theorem~\ref{th:opt with intercept}.
\hqed

\newpage
\begin{center}
	{\huge Online Appendix to \\\vspace*{1cm}
Asymptotic Properties of the Synthetic Control Method}\ \\
\end{center}
\vspace*{2cm}
\renewcommand{\thesection}{S.\arabic{section}}
\renewcommand{\thesubsection}{S.\arabic{section}.\arabic{subsection}}
\graphicspath{{figure/}}

\renewcommand{\thefigure}{S.\arabic{figure}}
\renewcommand{\thetable}{S.\arabic{table}}

\renewcommand{\theequation}{S.\arabic{equation}}

This file contains additional theoretical discussions and proofs. Specifically, Section~\ref{sec:pf4} provides further discussion on the sufficient condition of Assumption~3.2 in \cite{ferman2021}. Section~\ref{sec:relaxing DGP} relaxes the assumption of linear factor DGPs  and proves the convergence of SC weight and the asymptotic optimality of SCM in a model-free setup. 

All limiting processes in the theoretical results below correspond to $n\to \infty$ unless stated otherwise.

\renewcommand{\thecondition}{S\arabic{section}.\arabic{condition}}
\setcounter{condition}{0}
\setcounter{theorem}{3}
\setcounter{section}{0}

\section{Explanation of the sufficient condition of Assumption~3.2 in \cite{ferman2021}}
\label{sec:pf4}
\setcounter{equation}{0} 
In this section, we prove that \eqref{bound of MM} and \eqref{bound of Q} are sufficient to guarantee Assumption~3.2 of \citet{ferman2021}, i.e., $\left\| \bmu_{0}-\M\w_{T_1}^\text{opt} \right\|\to0$  and $\left\| \w_{T_1}^\text{opt} \right\|\to0$.

Denote $\{\Lambda_1,\ldots,\Lambda_J\}$ as the singular value of matrix $\M^\top\Q\M$, and use $\zeta_{min}(\cdot)$ and $\zeta_{max}(\cdot)$ to represent the minimum and maximum singular value of a matrix. From \eqref{bound of MM} and \eqref{bound of Q}, we have
\be
&& \min\{\Lambda_1,\ldots,\Lambda_J\} \n\\
&=& J\zeta_{min}\left({J}^{-1}\M^\top\Q\M\right) \n\\
&\geq& J\zeta_{min}\left({J}^{-1}\M\M^\top\right)\zeta_{min}(\Q) \n\\
&=& J\lambda_{min}\left({J}^{-1}\M\M^\top\right)\lambda_{min}(\Q)  \n\\
&\geq& c_1c_3J
\label{bound of Lambda}
\ee 
and
\be
\zeta_{max}(\M)=\sqrt{J\lambda_{max}\left(J^{-1}\M^\top\M\right)}=O(J^{1/2}).
\label{bound of M}
\ee
With the analytical solution of $\w_{T_1}^\text{opt}$, we can obtain that
\be
\bmu_{0}-\M\w_{T_1}^\text{opt}&=&\left\{\I_{F}- \M\left(\M^\top\Q\M+\sigma_{\epsilon}^2\I_{J}\right)^{-1}\M^\top\Q \right\}\bmu_{0}
\n\\
&&-\frac{1}{2}\M\left(\M^\top\Q\M+\sigma_{\epsilon}^2\I_{J}\right)^{-1}\left(\brho_1-\rho_2\biota_{J}\right).
\n
\label{miu-Mw_opt}
\ee
Let $\P= \left\{\I_{F}- \M\left(\M^\top\Q\M+\sigma_{\epsilon}^2\I_{J}\right)^{-1}\M^\top\Q \right\}$.

We first consider the case where $\brho_1-\rho_2\biota_{J}=0$. Using \eqref{bound of Lambda} and Condition~\ref{as:factor loading}, we have that
\be
&&\left\| \bmu_{0}-\M\w_{T_1}^\text{opt} \right\| \n\\
&=&\left\|\P\bmu_{0} \right\|	\n\\
&\leq& F^{1/2} C_0  \zeta_{max}\left(\P\right) \n\\
&\leq& F^{1/2} C_0 \left\{ 1-\zeta_{min}\left(\M\left(\M^\top\Q\M+\sigma_{\epsilon}^2\I_{J}\right)^{-1}\M^\top\Q \right) \right\} \n\\
&=& F^{1/2} C_0 \left\{ 1-\zeta_{min}\left(\left(\M^\top\Q\M+\sigma_{\epsilon}^2\I_{J}\right)^{-1}\M^\top\Q\M \right) \right\} \n\\
&=& F^{1/2} C_0 \left\{ 1-\frac{\min\{\Lambda_1,\ldots,\Lambda_J\}}{\min\{\Lambda_1,\ldots,\Lambda_J\}+\sigma_{\epsilon}^2} \right\} \n\\
&\leq&  \frac{F^{1/2} C_0\sigma_{\epsilon}^2}{c_1c_3J+\sigma_{\epsilon}^2}.
\label{bound of mu0-Mw}
\ee
Thus, $\left\| \bmu_{0}-\M\w_{T_1}^\text{opt} \right\|\to0$ holds as $J$ diverges. 
Moreover, $\brho_1-\rho_2\biota_{J}=0$ also implies that
$$
\w_{T_1}^\text{opt}  = \sigma_{\epsilon}^{-2} \M^\top\Q \left(\bmu_{0}-\M\w_{T_1}^\text{opt} \right).
$$
Thus, along with  \eqref{bound of Q}, \eqref{bound of M} and \eqref{bound of mu0-Mw}, we obtain that
\be
\left\|\w_{T_1}^\text{opt}\right\| 
\leq \sigma_{\epsilon}^{-2} \zeta_{max}\left( \M \right) \zeta_{max}\left(\Q\right) \left\|\bmu_{0}-\M\w_{T_1}^\text{opt} \right\|
=O\left( J^{-1/2}  \right).
\n
\ee
This implies that $\left\|\w_{T_1}^\text{opt}\right\| \to0$ as $J$ diverges. 

Next, we consider $\brho_1-\rho_2\biota_{J}\neq0$. In this case, one may need stronger conditions to guarantee $\left\|\w^{\text{opt}}_{T_1}\right\|\to0$ and $\left\|\bmu_0-\M\w^{\text{opt}}_{T_1}\right\|\to0$. For example, $\left\|\w^{\text{opt}}_{T_1}\right\|$ and $\left\|\bmu_0-\M\w^{\text{opt}}_{T_1}\right\|$ both converge to zero when $\lambda_{min}\left( \M^\top\Q\M+\sigma_{\epsilon}^2\I_{J} \right)$ diverges faster than $\zeta_{max}(\M)\left\|\brho_1-\rho_2\biota_{J}\right\|$ as $J$ increases, and this condition can be interpreted as an increasing amount of information in the factors and loadings as $J$ diverges. Intuitively, this condition means that the factor structure can be recovered if the newly added control units provide additional information on the factors and loadings.

\section{Relaxing the linear factor model DGP}
\label{sec:relaxing DGP}
In this section, we examine the asymptotic behavior of SC weights and  asymptotic optimality of SC estimators without assuming the DGP to be a linear factor model. 
For coherence, we list the theorems again.

\subsection{Convergence rate of the SC weight}
%
%
%
%

\begin{theorem} \label{th:con relax dgp}	
	Given any $T_1$, if $\w^{\text{opt}}_{T_1}$ is an interior point of $\calH$ and Conditions~\ref{as:covariates}~\eqref{Z}, \ref{as:Y_c} and \ref{as:data stochastic relax dgp}--\ref{as:risk bound} hold, {then} 
	\begin{equation}
		\left\|\wh\w-\w^{\text{opt}}_{T_1}\right\|=O_p\left( T_0^{\nu}\xi_{T_0}^{1/2}+T_0^{\nu}\xi_{T_1}^{1/2}+T_0^{-1/4+\nu}J\right),
		\n
	\end{equation}
	where $\nu>0$ is a sufficiently small constant, $\xi_{T_0}=\inf_{\w \in \calH}R_{T_0}(\w)$, and $\xi_{T_1}=\inf_{\w \in \calH}R_{T_1}(\w)$.
\end{theorem}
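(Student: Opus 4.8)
The plan is to follow the proof of Theorem~\ref{th:con} almost line for line, observing that the linear factor structure entered that argument only through the bound $\supw|R_{T_0}(\w)-R_{T_1}(\w)|=O(T_0^{-1/2}J^2)+o(\xi_{T_0})$, so that replacing the \emph{derivation} of this bound by Condition~\ref{as:risk bound} is all that is needed. First I would set $\tau_{T_0}=T_0^{\nu}\xi_{T_0}^{1/2}+T_0^{\nu}\xi_{T_1}^{1/2}+T_0^{-1/4+\nu}J$ and recall that, since $L_{T_0}(\w)=T_0^{-1}\|\Y_0-{\boldsymbol{\mathcal{Y}}_c}\w\|^2+T_0^{-1}\|\Z_0-\sumj w_j\Z_j\|^2$ is a convex quadratic in $\w$ and $\calH$ is convex, it suffices (by the argument of \citet{fan2004} and \citet{lu2015}) to show that for every $\varepsilon>0$ there is a constant $C_\varepsilon$ with
$$\Pr\Bigl\{\inf_{\|\u\|=C_\varepsilon,\ \w^{\text{opt}}_{T_1}+\tau_{T_0}\u\in\calH}L_{T_0}(\w^{\text{opt}}_{T_1}+\tau_{T_0}\u)\geq L_{T_0}(\w^{\text{opt}}_{T_1})\Bigr\}>1-\varepsilon;$$
the (local) minimizer this yields is the global minimizer $\wh\w$ by convexity, and it lies within $O_p(\tau_{T_0})$ of $\w^{\text{opt}}_{T_1}$, which is the claim.

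Next I would expand $D(\u)\equiv L_{T_0}(\w^{\text{opt}}_{T_1}+\tau_{T_0}\u)-L_{T_0}(\w^{\text{opt}}_{T_1})=\Delta_1+\Delta_2$ exactly as in the proof of Theorem~\ref{th:con}, with $\Delta_2=\tau_{T_0}^2T_0^{-1}\|{\boldsymbol{\mathcal{Y}}_c}\u\|^2$ and $\Delta_1=-2\tau_{T_0}T_0^{-1}(\Y_0-{\boldsymbol{\mathcal{Y}}_c}\w^{\text{opt}}_{T_1})^\top{\boldsymbol{\mathcal{Y}}_c}\u$ (the covariate part of $L_{T_0}$ contributes an analogous nonnegative quadratic, which only helps, together with a cross term absorbed as a lower-order contribution via Condition~\ref{as:covariates}\eqref{Z}). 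Condition~\ref{as:Y_c} gives $\kappa_1\leq\lambda_{min}(T_0^{-1}{\boldsymbol{\mathcal{Y}}_c}^\top{\boldsymbol{\mathcal{Y}}_c})\leq\lambda_{max}(T_0^{-1}{\boldsymbol{\mathcal{Y}}_c}^\top{\boldsymbol{\mathcal{Y}}_c})\leq\kappa_2$ with probability approaching one, hence $\Delta_2\geq\kappa_1\tau_{T_0}^2\|\u\|^2$, while Cauchy--Schwarz and the same eigenvalue bound give $|\Delta_1|\leq 2\tau_{T_0}T_0^{-1}\|\Y_0-{\boldsymbol{\mathcal{Y}}_c}\w^{\text{opt}}_{T_1}\|\,\sqrt{\lambda_{max}({\boldsymbol{\mathcal{Y}}_c}^\top{\boldsymbol{\mathcal{Y}}_c})}\,\|\u\|$. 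So the whole matter reduces to controlling $\|\Y_0-{\boldsymbol{\mathcal{Y}}_c}\w^{\text{opt}}_{T_1}\|$.

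This is the single place where Condition~\ref{as:risk bound} substitutes for the factor-model computation. Since $R_{T_1}(\w)=T_1^{-1}\sumT\E(y^N_{0,t}-\sumj w_j y^N_{j,t})^2$ and $R_{T_0}(\w)=T_0^{-1}\sumt\E(y^N_{0,t}-\sumj w_j y^N_{j,t})^2+T_0^{-1}\|\Z_0-\sumj w_j\Z_j\|^2$, Condition~\ref{as:risk bound} combined with the uniform bound $T_0^{-1}\supw\|\Z_0-\sumj w_j\Z_j\|^2=O(T_0^{-1}J^2)$ (which follows from $\sumj w_j=1$, $|w_j|\le\max\{C_\text{L},C_\text{U}\}$ and Condition~\ref{as:covariates}\eqref{Z}) yields $\supw|R_{T_0}(\w)-R_{T_1}(\w)|=O(T_0^{-1/2}J^2)+o(\xi_{T_0})$. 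Evaluating at $\w=\w^{\text{opt}}_{T_1}$ and using $R_{T_1}(\w^{\text{opt}}_{T_1})=\xi_{T_1}$ gives $R_{T_0}(\w^{\text{opt}}_{T_1})=\xi_{T_1}+O(T_0^{-1/2}J^2)+o(\xi_{T_0})$; then $\E\|\Y_0-{\boldsymbol{\mathcal{Y}}_c}\w^{\text{opt}}_{T_1}\|^2\le T_0R_{T_0}(\w^{\text{opt}}_{T_1})$ and Markov's inequality give $\|\Y_0-{\boldsymbol{\mathcal{Y}}_c}\w^{\text{opt}}_{T_1}\|=O_p(T_0^{1/2}\xi_{T_1}^{1/2})+O_p(T_0^{1/4}J)+o_p(T_0^{1/2}\xi_{T_0}^{1/2})$. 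Substituting into the bound for $\Delta_1$ shows $|\Delta_1|=\{O_p(\tau_{T_0}\xi_{T_1}^{1/2})+O_p(\tau_{T_0}T_0^{-1/4}J)+o_p(\tau_{T_0}\xi_{T_0}^{1/2})\}\|\u\|$, which is of strictly smaller order than $\Delta_2\asymp\tau_{T_0}^2\|\u\|^2$, because $\tau_{T_0}$ dominates each of $\xi_{T_1}^{1/2}$, $T_0^{-1/4}J$ and $\xi_{T_0}^{1/2}$ by the factor $T_0^{\nu}$. Hence, for $\|\u\|=C_\varepsilon$ with $C_\varepsilon$ large enough, $D(\u)\ge0$ with probability approaching one, which establishes the displayed inequality and the theorem. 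The only real work is bookkeeping: checking that no step of the proof of Theorem~\ref{th:con} used the factor structure beyond the bound just discussed and Conditions~\ref{as:covariates}\eqref{Z} and~\ref{as:Y_c}, and that Condition~\ref{as:data stochastic relax dgp} supplies the fixed/stochastic split that Condition~\ref{as:data stochastic} provided there; a direct inspection confirms both.
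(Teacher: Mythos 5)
Your proposal is correct and follows essentially the same route as the paper's own proof: the reduction to the local-minimizer probability statement, the decomposition $D(\u)=\Delta_1+\Delta_2$, the use of Condition~\ref{as:Y_c} to lower-bound $\Delta_2$, and the substitution of Condition~\ref{as:risk bound} (plus the $O(T_0^{-1}J^2)$ covariate bound from Condition~\ref{as:covariates}~\eqref{Z}) for the factor-model derivation of $\supw|R_{T_0}(\w)-R_{T_1}(\w)|$ are all exactly the steps the paper takes. Your explicit remark that the covariate part of $L_{T_0}$ contributes a nonnegative quadratic and an absorbable cross term is, if anything, slightly more careful than the paper's presentation, which suppresses that term in the display for $D(\u)$.
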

This theorem generalizes Theorem~\ref{th:con} in a model-free setup. 
\begin{proof}[Proof of Theorem~\ref{th:con relax dgp}]
	We follow the proof of Theorem~\ref{th:con} to verify Theorem~\ref{th:con relax dgp}. 
	Thus, it suffices to show that for any $\varepsilon>0$, there exists a constant $C_\varepsilon$, such that  \eqref{a2.1} holds, under a given $T_1$ and any sufficient large $T_0$, without assuming the DGP to be a linear factor structure. 
	Recall that $\tau_{T_0}= T_0^{\nu}\xi_{T_0}^{1/2}+T_0^{\nu}\xi_{T_1}^{1/2}+T_0^{-1/4+\nu}J$, 
	and $\u=(u_1,\ldots,u_J)^\top$ satisfying $\left\|\u\right\|=C_\varepsilon$ and $\left(\w^{\text{opt}}_{T_1}+\tau_{T_0}\u\right)\in\calH$.
	Based on the analysis in Appendix~\ref{sec:pf2}, we need to investigate the order of $\Delta_{1}=-2\tau_{T_0}T_0^{-1} \left(\Y_0-{\boldsymbol{\mathcal{Y}}_c}\w^{\text{opt}}_{T_1}\right)^\top{\boldsymbol{\mathcal{Y}}_c}\u$ and $\Delta_{2}=\tau_{T_0}^2T_0^{-1} \left\|{\boldsymbol{\mathcal{Y}}_c}\u\right\|^2$ and then verify that $\Delta_{2}$ asymptotically dominates $\Delta_{1}$.
	
	First, we consider the order of $\Delta_{2}$. Based on Condition~\ref{as:Y_c},
	we can obtain that \eqref{a2.2}--\eqref{a2.3} hold with probablity approaching to 1, without specifying the model for the DGP.
	Next, we consider the order of $\Delta_{1}$. From Conditions~\ref{as:covariates}~\eqref{Z} and \ref{as:data stochastic relax dgp}, we can also obtain \eqref{a2.6}. Then, with  \eqref{a2.6} and Condition~\ref{as:risk bound}, we obtain that
	\be
	&&\supw\left| R_{T_0}(\w)-R_{T_1}(\w) \right| \n\\
	&=&\supw\left| \frac{1}{T_0}\sumt\E\left(y^N_{0,t}-\sumj w_j y^N_{j,t}\right)^2 +\frac{1}{T_0} \left\| \Z_0-\sumj w_j\Z_j \right\|^2-R_{T_1}(\w) \right|
	\n\\
	&=& \supw\left| \frac{1}{T_0}\sumt\E\left(y^N_{0,t}-\sumj w_j y^N_{j,t}\right)^2 +\frac{1}{T_0} \left\| \Z_0-\sumj w_j\Z_j \right\|^2-R_{T_1}(\w) \right| \n\\
	&\leq& \supw\left| \frac{1}{T_0}\sumt\E\left(y^N_{0,t}-\sumj w_j y^N_{j,t}\right)^2 -R_{T_1}(\w) \right| + \supw \frac{1}{T_0} \left\| \Z_0-\sumj w_j\Z_j \right\|^2
	\n\\
	&=& O(T_0^{-1/2}J^2)+o(\xi_{T_0}).
	\label{a4.1}
	\ee
	Similar to the analyses of \eqref{a2.8}--\eqref{a2.9}, combined with \eqref{a2.2}, we can derive the order of $\Delta_{1}$ as
	\be
	\left|\Delta_{1}\right|&\leq&
	\frac{2\tau_{T_0}}{T_0}\left\|\Y_0-{\boldsymbol{\mathcal{Y}}_c}\w^{\text{opt}}_{T_1}\right\|
	\left\|{\boldsymbol{\mathcal{Y}}_c}\u\right\|
	\n\\
	&=&
	O_p(\tau_{T_0}\xi_{T_1}^{1/2})\left\|\u\right\|+O_p(\tau_{T_0}T_0^{-1/4}J) \left\|\u\right\|+o_p(\tau_{T_0}\xi_{T_0}^{1/2})\left\|\u\right\|.
	\n
	\ee
	This formula, together with~\eqref{a2.3}, indicates that $\Delta_{2}$ asymptotically dominates $\Delta_{1}$. 
	This completes the proof of Theorem~\ref{th:con relax dgp}.
\end{proof}

\subsection{Asymptotic optimality of SCM without an intercept}

\begin{theorem} \label{th:opt relax dgp}
	If $T_1$ is finite, then under Conditions~\ref{as:covariates}~\eqref{Z}, \ref{as:xi1}, \ref{as:data stochastic relax dgp}--\ref{as:data mixing relax dgp} and
	\ref{as:error term relax dgp}~\eqref{as:moment of error term relax dgp}--\eqref{as:var of pre error term relax dgp}, we have 	
	\begin{equation}
		\frac{ R_{T_1}(\wh \w) }
		{\inf_{\w \in \calH} { R_{T_1}(\w)} } \overset{p}{\rightarrow}1.
		\label{opt0 relax dgp}
	\end{equation}
	If $T_1$ diverges at rate $O(T_0)$, then under Conditions~\ref{as:covariates}~\eqref{Z}, \ref{as:xi1} and \ref{as:data stochastic relax dgp}--\ref{as:error term relax dgp}, we have \eqref{opt0 relax dgp} and
	\begin{equation}
		\frac{ L_{T_1}(\wh \w) }
		{\inf_{\w \in \calH} { L_{T_1}(\w)} } \overset{p}{\rightarrow}1;
		\label{opt1 relax dgp}
	\end{equation}
	furthermore, if $\xi_{T_1}^{-1}\left\{L_{T_1}(\wh\w)-\xi_{T_1}\right\}$ is uniformly integrable, then
	\begin{equation}
		\frac{\E L_{T_1}(\wh \w) }
		{\inf_{\w \in \calH} { R_{T_1}(\w)} } \rightarrow 1.
		\label{opt2 relax dgp}
	\end{equation}
\end{theorem}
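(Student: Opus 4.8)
The plan is to mirror the proof of Theorem~\ref{th:opt} in Appendix~\ref{sec:pf1} step for step, replacing every place that invoked the linear factor structure~\eqref{DGP-factor model} by a direct argument on $y^N_{i,t}$. Concretely, Condition~\ref{as:risk bound} will play the role of the decomposition~\eqref{a2.4}--\eqref{a2.7} that was previously assembled from Conditions~\ref{as:factor bound}--\ref{as:post error term bias}, and Conditions~\ref{as:data mixing relax dgp} and~\ref{as:error term relax dgp} take over for Conditions~\ref{as:data mixing} and~\ref{as:error term}.

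First I would establish~\eqref{opt0 relax dgp}. Writing $L_{T_0}(\w)=R_{T_1}(\w)+\{L_{T_0}(\w)-R_{T_0}(\w)\}+\{R_{T_0}(\w)-R_{T_1}(\w)\}$ and invoking Lemma~1 in \cite{gao2019}, it suffices to show $\supw|L_{T_0}(\w)-R_{T_0}(\w)|/R_{T_0}(\w)=o_p(1)$ and $\supw|R_{T_0}(\w)-R_{T_1}(\w)|/R_{T_0}(\w)=o(1)$. For the first, bound the ratio by $\xi_{T_0}^{-1}\supw|L_{T_0}(\w)-R_{T_0}(\w)|$ and expand the quadratic in $\w$ as in~\eqref{a0.1.1}; since $y_{0,t}-y_{i,t}=e_{t,y^N}^{(i)}$ for $t\in\calT_0$, this reduces to controlling $\xi_{T_0}^{-1}T_0^{-1/2}\sumi\sumj\Psi_{T_0}(i,j)$ with $\Psi_{T_0}(i,j)=|T_0^{-1/2}\sumt\{e_{t,y^N}^{(i)}e_{t,y^N}^{(j)}-\E e_{t,y^N}^{(i)}e_{t,y^N}^{(j)}\}|$, and no factor-structure rewriting is needed, unlike in Appendix~\ref{sec:pf1}. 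By Condition~\ref{as:data mixing relax dgp} and Theorem~3.49 in \cite{white1984}, $e_{t,y^N}^{(i)}e_{t,y^N}^{(j)}$ is a mixing sequence; Condition~\ref{as:error term relax dgp}~\eqref{as:moment of error term relax dgp} bounds its variance uniformly and Condition~\ref{as:error term relax dgp}~\eqref{as:var of pre error term relax dgp} keeps that variance nondegenerate, so Theorem~5.20 in \cite{white1984} gives $\sumi\sumj\Psi_{T_0}(i,j)=O_p(J^2)$, and Condition~\ref{as:xi1} closes the bound. For the second, I would combine the covariate estimate~\eqref{a2.6} (valid under Conditions~\ref{as:covariates}~\eqref{Z} and~\ref{as:data stochastic relax dgp}) with Condition~\ref{as:risk bound} to obtain $\supw|R_{T_0}(\w)-R_{T_1}(\w)|=O(T_0^{-1/2}J^2)+o(\xi_{T_0})$ exactly as in~\eqref{a4.1}, and divide by $\xi_{T_0}$ using Condition~\ref{as:xi1}. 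This argument is insensitive to whether $T_1$ is finite.

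Next, for $T_1$ diverging at rate $O(T_0)$, I would prove~\eqref{opt1 relax dgp} by the same device: decompose $L_{T_0}(\w)$ around $L_{T_1}(\w)$, apply Lemma~1 in \cite{gao2019}, and verify the extra term $\supw|L_{T_1}(\w)-R_{T_1}(\w)|/R_{T_0}(\w)=o_p(1)$, which is the posttreatment analogue of the first uniform bound above and now uses Condition~\ref{as:error term relax dgp}~\eqref{as:var of post error term relax dgp} for nondegeneracy; it is $O_p(\xi_{T_0}^{-1}T_0^{-1/2}J^2)$, hence $o_p(1)$ by Condition~\ref{as:xi1}. Finally,~\eqref{opt2 relax dgp} follows by sandwiching $\xi_{T_1}^{-1}L_{T_1}(\wh\w)$ between upper and lower bounds built from the three ratios exactly as in~\eqref{a1.4}--\eqref{a1.5}; together with~\eqref{opt1 relax dgp} and the two uniform bounds this yields $\xi_{T_1}^{-1}L_{T_1}(\wh\w)=1+o_p(1)$, so $\xi_{T_1}^{-1}\{L_{T_1}(\wh\w)-\xi_{T_1}\}=o_p(1)$, and the assumed uniform integrability upgrades this to $\E L_{T_1}(\wh\w)/\xi_{T_1}\to1$.

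The main obstacle is the one already present in the factor-model proof: controlling the uniform-in-$\w$ fluctuation of the sample MSPE over the (possibly high-dimensional) set $\calH$, i.e.\ showing that the $O(T_0^{-1/2}J^2)$ aggregate of cross terms is negligible relative to the pretreatment fit $\xi_{T_0}$ --- precisely what Condition~\ref{as:xi1} is calibrated to deliver. In the model-free setting this step is in fact somewhat cleaner, because Condition~\ref{as:risk bound} bundles the entire pre/posttreatment discrepancy into a single hypothesis, so there is no need to separately track factor, loading, and covariate approximation errors; the crux collapses to the mixing central limit theorem applied to $e_{t,y^N}^{(i)}e_{t,y^N}^{(j)}$ together with the rate bookkeeping enforced by Condition~\ref{as:xi1}.
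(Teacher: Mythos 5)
Your proposal is correct and follows essentially the same route as the paper's own proof in the Online Appendix: the same decomposition of $L_{T_0}(\w)$, the same appeal to Lemma~1 of \cite{gao2019}, the observation that \eqref{a0.1.1} holds without factor-structure rewriting with $\Psi_{T_0}(i,j)$ expressed via $e_{t,y^N}^{(i)}e_{t,y^N}^{(j)}$, the mixing CLT (Theorems~3.49 and 5.20 of \cite{white1984}) under Conditions~\ref{as:data mixing relax dgp} and~\ref{as:error term relax dgp}, Condition~\ref{as:risk bound} substituting for \eqref{a2.4}--\eqref{a2.7}, and the sandwich \eqref{a1.4}--\eqref{a1.5} plus uniform integrability for \eqref{opt2 relax dgp}. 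No gaps.
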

The results in~\eqref{opt0 relax dgp}--\eqref{opt2 relax dgp} are the same in~\eqref{opt0}--\eqref{opt2} in Theorem~\ref{th:opt}, but relax the assumption on the linear factor model as the outcome process. 

\begin{proof}[Proof of Theorem~\ref{th:opt relax dgp}]
	We follow the proof of Theorem~\ref{th:opt} to verify Theorem~\ref{th:opt relax dgp}. Based on the analyses in Appendix~\ref{sec:pf1}, it suffices to show \eqref{a0.1}--\eqref{a0.2} in order to prove \eqref{opt0 relax dgp}. Hence, we first prove \eqref{a0.1}, without the assumption of a linear factor DGP. 
	Note that \eqref{a0.1.1} continues to hold regardless of whether the DGP is a linear factor model.
	With $e_{t,y^N}^{(i)}=y^N_{0,t}-y^N_{i,t}$ for $i\in\{1,\ldots,J\}$ and $t\in\calT_0\cup\calT_1$,
	we can rewrite $\Psi_{T_0}(i,j)$ in \eqref{a0.1.1} as
	\be
	\Psi_{T_0}(i,j)=\left| \frac{1}{\sqrt{T_0}}\sumt \left\{ e_{t,y^N}^{(i)} e_{t,y^N}^{(j)} - \E e_{t,y^N}^{(i)}e_{t,y^N}^{(j)} \right\}
	\right|.\n
	\ee
	Similar to the statements of \eqref{a0.1.2}, 
	from Condition~\ref{as:data mixing relax dgp} and Conditions~\ref{as:error term relax dgp}~\eqref{as:moment of error term relax dgp}--\eqref{as:var of pre error term relax dgp}, we can obtain that
	\be
	\sumi\sumj \Psi_{T_0}(i,j)=O_p(J^2).
	\label{a4.2.1}
	\ee
	Combining~\eqref{a0.1.1}, \eqref{a4.2.1} and Condition~\ref{as:xi1}, we can obtain~\eqref{a0.1}. 
	Next, we consider the proof of \eqref{a0.2}. With Conditions~\ref{as:covariates}~\eqref{Z}, \ref{as:data stochastic relax dgp} and~\ref{as:risk bound}, we obtain~\eqref{a4.1}; thus,  \eqref{a0.2} follows from~\eqref{a4.1} and Condition~\ref{as:xi1}.
	Note that the proof above is valid regardless of whether $T_1$ is finite or divergent. Therefore, we complete the proof of \eqref{opt0 relax dgp} under a fixed $T_1$, as well as when $T_1$ diverges.

	Next, we prove \eqref{opt1 relax dgp} when $T_1$ diverges at rate $O(T_0)$. From the arguments in the proof of~\eqref{opt1} in Appendix~\ref{sec:pf1}, we need to show \eqref{a1.3} without assuming the DGP to be a factor model. Similar to the arguments of~\eqref{a0.1.1} and \eqref{a4.2.1}, under Conditions~\ref{as:data mixing relax dgp}, \ref{as:error term relax dgp}~\eqref{as:moment of error term relax dgp} and \ref{as:error term relax dgp}~\eqref{as:var of post error term relax dgp} and a divergent $T_1$ at rate $O(T_0)$, we can obtain that
	\be
	\supw \left| \frac{L_{T_1}(\w)-R_{T_1}(\w)}{ R_{T_0}(\w)} \right|=O_p(\xi_{T_0}^{-1}T_0^{-1/2}J^2).\n
	\n
	\ee
	This equation together with Condition~\ref{as:xi1} leads to~\eqref{a1.3}, which further proves~\eqref{opt1 relax dgp}.

	Finally, we verify \eqref{opt2 relax dgp} when $T_1$ diverges at rate $O(T_0)$. As \eqref{a1.4}--\eqref{a1.5} continue to hold without a linear factor DGP,
	combined with~\eqref{a0.2} and \eqref{a1.3}, we can obtain that $\xi_{T_1}^{-1}L_{T_1}(\wh\w)=1+o_p(1)$, which further leads to $\xi_{T_1}^{-1}\left\{L_{T_1}(\wh\w)-\xi_{T_1}\right\} = o_p(1)$.
	Moreover, by the uniform integrability of  $\xi_{T_1}^{-1}\left\{L_{T_1}(\wh\w)-\xi_{T_1}\right\}$, we can prove  \eqref{opt2 relax dgp} for a divergent $T_1$ at rate $O(T_0)$. This completes the proof of Theorem~\ref{th:opt relax dgp}. 
\end{proof}

\subsection{Asymptotic optimality of SCM with an intercept}

\begin{condition}\ 
	\begin{enumerate}[(i)]
		\item \label{as:var of pre error term with intercept relax dgp}
		There exists a constant $\wt C_4$ such that $\var\left\{T_0^{-1/2}\sumt  e_{t,y^N}^{(j)}\right\}\geq\wt C_4>0$ for all $T_0$ sufficiently large and any $ j\in\{1,\ldots,J\}$.
		\item \label{as:var of post error term with intercept relax dgp}
		There exists a constant $\wt C_5$ such that  $\var\left\{T_1^{-1/2}\sumT  e_{t,y^N}^{(j)}\right\}\geq\wt C_5>0$ for all $T_1$ sufficiently large and any $j\in\{1,\ldots,J\}$.
	\end{enumerate}
	\label{as:error term with intercept relax dgp}	
\end{condition}

\begin{condition}
	\label{as:bound with intercept relax dgp}
	${T_1}^{-1}\sumT\left({T_0}^{-1}\sum_{k\in\calT_0}\E y^N_{i,k}-\E y^N_{i,t}\right)=O\left(T_0^{-1/2}\right)$  for $i\in\{0,1,\ldots,J\}$ uniformly.
\end{condition}

\begin{condition}
	\label{as:post error term bias with intercept relax dgp}
	$\wt\xi_{T_0}^{-1}\supw\left| {T_0}^{-1}\sumt\E\left(\sumj w_j  e_{t,y^N}^{(j)}\right)^2-{T_1}^{-1}\sumT\E\left(\sumj w_j  e_{t,y^N}^{(j)}\right)^2\right|=o(1)$.
\end{condition}

Condition~\ref{as:error term with intercept relax dgp}	guarantees the application of the central limit theorem for dependent processes, as Condition~\ref{as:error term with intercept}. Compared to Condition~\ref{as:error term with intercept}, this condition directly concerns the potential outcome $y^N_{i,t}$ to allow for a model-free setup. Condition~\ref{as:bound with intercept relax dgp} can be viewed as a stationary assumption that generalizes Conditions~\ref{as:covariates}~\eqref{Z} and \ref{as:factor bound with intercept}--\ref{as:post error term bias with intercept}. Condition~\ref{as:post error term bias with intercept relax dgp} is a generalized version of Condition~\ref{as:post error term bias with intercept}.

\begin{theorem} \label{th:opt with intercept relax dgp}
	If $T_1$ is finite and Conditions~\ref{as:covariates}~\eqref{Z}, \ref{as:intercept bound}, \ref{as:xi1 with intercept}--\ref{as:data stochastic relax dgp}, \ref{as:data mixing relax dgp}, \ref{as:error term relax dgp}~\eqref{as:moment of error term relax dgp}--\eqref{as:var of pre error term relax dgp}, \ref{as:error term with intercept relax dgp}~\eqref{as:var of pre error term with intercept relax dgp} and
	\ref{as:bound with intercept relax dgp}--\ref{as:post error term bias with intercept relax dgp} hold, then  
	\begin{equation}
		\frac{ R_{T_1}(\wh\w^{\emph{DSC}},\wh d^{\emph{DSC}}) }
		{\inf_{\w \in \calH,d\in\Ra} { R_{T_1}(\w,d)} } \overset{p}{\rightarrow}1.
		\label{opt0 with intercept relax dgp}
	\end{equation}
	If $T_1$ diverges at rate $O(T_0)$ and
	Conditions~\ref{as:covariates}~\eqref{Z}, \ref{as:intercept bound}, \ref{as:xi1 with intercept}--\ref{as:data stochastic relax dgp} and \ref{as:data mixing relax dgp}--\ref{as:post error term bias with intercept relax dgp} hold, then we have \eqref{opt0 with intercept relax dgp} and
	\begin{equation}
		\frac{ L_{T_1}(\wh \w^{\emph{DSC}},\wh d^{\emph{DSC}}) }
		{\inf_{\w \in \calH,d\in\Ra} { L_{T_1}(\w,d)} } \overset{p}{\rightarrow}1;
		\label{opt1 with intercept relax dgp}
	\end{equation}
	furthermore, if $\wt\xi_{T_1}^{-1}\left\{L_{T_1}(\wh\w^{\emph{DSC}},\wh d^{\emph{DSC}})-\wt\xi_{T_1}\right\}$ is uniformly integrable, then
	\begin{equation}
		\frac{\E L_{T_1}(\wh \w^{\emph{DSC}},\wh d^{\emph{DSC}}) }
		{\inf_{\w \in \calH,d\in\Ra} { R_{T_1}(\w,d)} } \rightarrow 1.
		\label{opt2 with intercept relax dgp}
	\end{equation}
\end{theorem}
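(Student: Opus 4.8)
The plan is to transcribe the proof of Theorem~\ref{th:opt with intercept} in Appendix~\ref{sec:pf3} line by line, replacing every use of the linear factor structure~\eqref{DGP-factor model} by the moment hypotheses imposed directly on $y^N_{i,t}$, exactly as the proof of Theorem~\ref{th:opt relax dgp} does for Theorem~\ref{th:opt}. The starting point is the algebraic identity, valid for every $\w\in\calH$ and $d\in\Ra$ and obtained using $\sumj w_j=1$ so that $y^N_{0,t}-\sumj w_j y^N_{j,t}=\sumj w_j e_{t,y^N}^{(j)}$,
\[
L_{T_0}(\w,d)=L_{T_0}(\w)+\frac{d^2(T_0+r)}{T_0}-\frac{2d}{T_0}\sumt\sumj w_j e_{t,y^N}^{(j)}-\frac{2d}{T_0}\biota_{r}^\top\left(\Z_0-\sumj w_j\Z_j\right),
\]
whence, taking expectations (recall $\Z_i$ and $\{\bmu_i\}$ are fixed) and subtracting,
\[
L_{T_0}(\w,d)-R_{T_0}(\w,d)=\big\{L_{T_0}(\w)-R_{T_0}(\w)\big\}-\frac{2d}{T_0}\sumt\sumj w_j\big(e_{t,y^N}^{(j)}-\E e_{t,y^N}^{(j)}\big),
\]
and the analogous identity with $\calT_1$, $T_1$ in place of $\calT_0$, $T_0$ (and with no $\Z$ term, since $L_{T_1}$ carries none). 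The only structural novelty relative to Appendix~\ref{sec:pf3} is that $e_{t,y^N}^{(j)}$ is no longer mean zero, so the centering $e_{t,y^N}^{(j)}-\E e_{t,y^N}^{(j)}$ must be carried explicitly.

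To prove~\eqref{opt0 with intercept relax dgp} I invoke Lemma~1 of \cite{gao2019} through the decomposition $L_{T_0}(\w,d)=R_{T_1}(\w,d)+\{L_{T_0}(\w,d)-R_{T_0}(\w,d)\}+\{R_{T_0}(\w,d)-R_{T_1}(\w,d)\}$, which reduces the task to the two analogues of~\eqref{s0.1}--\eqref{s0.2}, namely $\supwd|L_{T_0}(\w,d)-R_{T_0}(\w,d)|/R_{T_0}(\w,d)=o_p(1)$ and $\supwd|R_{T_0}(\w,d)-R_{T_1}(\w,d)|/R_{T_0}(\w,d)=o(1)$. For the first, bound the left side by $\wt\xi_{T_0}^{-1}\supwd|L_{T_0}(\w,d)-R_{T_0}(\w,d)|$, split it by the identity above, control $\supw|L_{T_0}(\w)-R_{T_0}(\w)|=O_p(T_0^{-1/2}J^2)$ exactly as in~\eqref{a0.1.1} and~\eqref{a4.2.1} (using Conditions~\ref{as:data mixing relax dgp}, \ref{as:error term relax dgp}\eqref{as:moment of error term relax dgp}--\eqref{as:var of pre error term relax dgp} and Theorems~3.49 and~5.20 of \cite{white1984}), and bound the intercept term by $2C_d\max\{C_\text{L},C_\text{U}\}T_0^{-1/2}\sumj|T_0^{-1/2}\sumt(e_{t,y^N}^{(j)}-\E e_{t,y^N}^{(j)})|=O_p(T_0^{-1/2}J)$, where the rate uses Condition~\ref{as:intercept bound}, Condition~\ref{as:error term with intercept relax dgp}\eqref{as:var of pre error term with intercept relax dgp} and the central limit theorem for mixing sequences; dividing by $\wt\xi_{T_0}$ and applying Condition~\ref{as:xi1 with intercept} yields $o_p(1)$. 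For the second, note that $\E(\sumj w_j e_{t,y^N}^{(j)})^2=\E(y^N_{0,t}-\sumj w_j y^N_{j,t})^2$, so Condition~\ref{as:post error term bias with intercept relax dgp} together with the covariate bound $T_0^{-1}\|\Z_0-\sumj w_j\Z_j\|^2=O(T_0^{-1}J^2)$ (Condition~\ref{as:covariates}\eqref{Z}) controls $\wt\xi_{T_0}^{-1}\supw|R_{T_0}(\w)-R_{T_1}(\w)|$, while the extra $d$-dependent pieces $d^2r/T_0$, $2d\sumj w_j\{T_1^{-1}\sumT\E e_{t,y^N}^{(j)}-T_0^{-1}\sumt\E e_{t,y^N}^{(j)}\}$ and $2dT_0^{-1}\biota_r^\top(\Z_0-\sumj w_j\Z_j)$ are $O(T_0^{-1})$, $O(T_0^{-1/2}J)$ and $O(T_0^{-1}J^2)$ by Conditions~\ref{as:intercept bound}, \ref{as:bound with intercept relax dgp} and~\ref{as:covariates}\eqref{Z}, each of which is $o(\wt\xi_{T_0})$ by Condition~\ref{as:xi1 with intercept}. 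This gives~\eqref{opt0 with intercept relax dgp}, and the argument is unaffected by whether $T_1$ is finite or divergent.

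When $T_1$ diverges at rate $O(T_0)$, I obtain~\eqref{opt1 with intercept relax dgp} by showing the analogue of~\eqref{s1.3}, i.e.\ $\supwd|L_{T_1}(\w,d)-R_{T_1}(\w,d)|/R_{T_0}(\w,d)=o_p(1)$; this repeats the first bound of the previous paragraph over $\calT_1$, now using Conditions~\ref{as:error term relax dgp}\eqref{as:var of post error term relax dgp} and~\ref{as:error term with intercept relax dgp}\eqref{as:var of post error term with intercept relax dgp} for the post-treatment central limit theorem, and combining with~\eqref{opt1} of the claimed ratio convergence. Then, exactly as in~\eqref{s1.4}--\eqref{s1.5}, I sandwich $\wt\xi_{T_1}^{-1}L_{T_1}(\wh\w^{\text{DSC}},\wh d^{\text{DSC}})$ between products of the four suprema already controlled, deduce $\wt\xi_{T_1}^{-1}\{L_{T_1}(\wh\w^{\text{DSC}},\wh d^{\text{DSC}})-\wt\xi_{T_1}\}=o_p(1)$, and upgrade this to~\eqref{opt2 with intercept relax dgp} using the assumed uniform integrability.

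The step I expect to be most delicate is the intercept bookkeeping in the model-free setting. Unlike in Appendix~\ref{sec:pf3}, $e_{t,y^N}^{(j)}$ is not centered, so one must keep $e_{t,y^N}^{(j)}-\E e_{t,y^N}^{(j)}$ throughout, verify that the newly appearing first-moment discrepancies between pre- and posttreatment periods are exactly what Condition~\ref{as:bound with intercept relax dgp} absorbs, and check that after dividing by $\wt\xi_{T_0}$ (which is no larger than $\xi_{T_0}$, so the demeaned rate conditions~\ref{as:post error term bias with intercept relax dgp} and~\ref{as:xi1 with intercept} are the right ones) every residual term is still $o_p(1)$ under the single rate restriction $\wt\xi_{T_0}^{-1}T_0^{-1/2}J^2=o(1)$. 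Everything else is a transcription of Appendix~\ref{sec:pf3} with the $\epsilon_{i,t}$-based quantities replaced by their $y^N_{i,t}$-based counterparts.
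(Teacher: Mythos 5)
Your proposal is correct and follows essentially the same route as the paper's own proof in Section~S.2.3: the same decomposition feeding Lemma~1 of \cite{gao2019}, the same algebraic identity for $L_{T_0}(\w,d)$ with the explicit centering $e_{t,y^N}^{(j)}-\E e_{t,y^N}^{(j)}$ (which the paper likewise carries via $\wt\Phi_{T_0}(j)$), Condition~\ref{as:bound with intercept relax dgp} absorbing the pre/post first-moment discrepancy exactly as in the paper's~\eqref{s1.2}, and the same sandwich plus uniform-integrability step for~\eqref{opt2 with intercept relax dgp}.
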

The results in~\eqref{opt0 with intercept relax dgp}--\eqref{opt2 with intercept relax dgp} generalize~\eqref{opt0 with intercept}--\eqref{opt2 with intercept} in Theorem~\ref{th:opt with intercept}, respectively, in a model-free setup. 

\begin{proof}[Proof of Theorem~\ref{th:opt with intercept relax dgp}]
	We follow the proof of Theorem~\ref{th:opt with intercept} to verify Theorem~\ref{th:opt with intercept relax dgp}. 
	Based on the analysis in Appendix~\ref{sec:pf3}, to prove  \eqref{opt0 with intercept relax dgp}, it suffices to show that \eqref{s0.1}--\eqref{s0.2} hold. 
	First, we give the proof of \eqref{s0.1} without assuming a factor model DGP. 
	Note that
	\be
	&&L_{T_0}(\w,d)
	\n\\
	&=&\frac{1}{T_0}
	\left\{\sum_{t\in\calT_0}\left(y^N_{0,t}-\sumj w_j y^N_{j,t}-d\right)^2+\left\|\Z_0-\sumj w_j\Z_j-d\biota_{r}\right\|^2\right\}
	\n\\
	&=&L_{T_0}(\w)+\frac{d^2(T_0+r)}{T_0}-\frac{2d}{T_0}\sumt\sumj w_j e_{t,y^N}^{(j)}
	-\frac{2d}{T_0}\biota_{r}^\top\left( \Z_0-\sumj w_j\Z_j \right),
	\n
	\ee 
	for any $\w\in\calH$ and $d\in\Ra$; thus, with Condition~\ref{as:data stochastic relax dgp}, 
	\be
	&&R_{T_0}(\w,d)
	\n\\
	&=&\E L_{T_0}(\w,d)
	\n\\
	&=& R_{T_0}(\w)+\frac{d^2(T_0+r)}{T_0}-\frac{2d}{T_0}\sumt\sumj w_j \E e_{t,y^N}^{(j)}
	-\frac{2d}{T_0}\biota_{r}^\top\left( \Z_0-\sumj w_j\Z_j \right).
	\label{s1.pre.Rwd}
	\ee 
	Similiar to \eqref{s0.1.1}, we have that
	\be
	&&\supwd \left| \frac{L_{T_0}(\w,d)-R_{T_0}(\w,d)}{ R_{T_0}(\w,d)} \right| 
	\n\\
	&\leq&\wt\xi_{T_0}^{-1}\supwd\left|
	L_{T_0}(\w)-R_{T_0}(\w)-\frac{2d}{T_0}\sumt\sumj w_j  \left\{e_{t,y^N}^{(j)}-\E e_{t,y^N}^{(j)} \right\}
	\right|
	\n\\
	&\leq&\wt\xi_{T_0}^{-1}\supw\left|
	L_{T_0}(\w)-R_{T_0}(\w)\right|+2C_d\max\{C_\text{L}, C_\text{U}\}\wt\xi_{T_0}^{-1} T_0^{-1/2}\sumj \wt \Phi_{T_0}(j),
	\label{s1.1.1}
	\ee
	where  
	$\wt\Phi_{T_0}(j)=\left|T_0^{-1/2}\sumt  \left\{e_{t,y^N}^{(j)}-\E e_{t,y^N}^{(j)} \right\} \right|$.
	Based on Conditions~\ref{as:data mixing relax dgp}, \ref{as:error term relax dgp}~\eqref{as:moment of error term relax dgp} and \ref{as:error term with intercept relax dgp}~\eqref{as:var of pre error term with intercept relax dgp}, we can obtain that 
	\be
	\sumj \wt \Phi_{T_0}(j)=O_p(J).
	\label{s1.1.2}
	\ee
	Note that \eqref{a0.1.1} still holds without assuming a linear factor DGP.
	With Conditions~\ref{as:data mixing relax dgp} and \ref{as:error term relax dgp}~\eqref{as:moment of error term relax dgp}--\eqref{as:var of pre error term relax dgp}, \eqref{a4.2.1} holds; thus, combined with \eqref{a0.1.1}, we can obtain \eqref{s0.1.3}. Formulas~\eqref{s0.1.3} and \eqref{s1.1.1}--\eqref{s1.1.2}, together with Condition~\ref{as:xi1 with intercept}, guarantee that \eqref{s0.1} holds without assuming a linear factor DGP.
	
	Then, we consider \eqref{s0.2}. 
	From Condition~\ref{as:bound with intercept relax dgp}, we have 
	\be
	&&\supw\left|\sumj w_j\left(\frac{1}{T_0}\sumt \E e_{t,y^N}^{(j)} -\frac{1}{T_1}\sumT\E e_{t,y^N}^{(j)} \right) \right|
	\n\\
	&=&\supw\left|\sumj w_j\left\{\left(\frac{1}{T_0}\sumt \E y^N_{0,t} -\frac{1}{T_1}\sumT\E y^N_{0,t} \right)-\left(\frac{1}{T_0}\sumt \E y^N_{j,t} -\frac{1}{T_1}\sumT\E y^N_{j,t} \right)\right\} \right|
	\n\\ 
	&\leq&\supw\left|\sumj w_j\left(\frac{1}{T_0}\sumt \E y^N_{0,t} -\frac{1}{T_1}\sumT\E y^N_{0,t} \right)\right|
	\n\\
	&&+\supw\left|\sumj w_j\left(\frac{1}{T_0}\sumt \E y^N_{j,t} -\frac{1}{T_1}\sumT\E y^N_{j,t} \right) \right|
	\n\\
	&=&O(T_0^{-1/2}).
	\label{s1.2}
	\ee
	Since $r$ is fixed, from Condition~\ref{as:covariates}~\eqref{Z}, we obtain \eqref{s0.1.4-}. 
	Similar to the arguments in the proof of \eqref{a4.1},
	we can obtain that
	\be
	\supw\left|
	R_{T_0}(\w)-R_{T_1}(\w)\right|=O_p(T_0^{-1/2}J^2)+o(\wt\xi_{T_0}),
	\label{s1.1.4}
	\ee
	under Conditions~\ref{as:covariates}~\eqref{Z}, \ref{as:data stochastic relax dgp} and \ref{as:post error term bias with intercept relax dgp}.
	Since
	\be
	R_{T_1}(\w,d)
	=R_{T_1}(\w)+d^2-\frac{2d}{T_1}\sumT\sumj w_j \E e_{t,y^N}^{(j)},
	\n
	\ee
	with \eqref{s0.1.4-}, \eqref{s1.pre.Rwd} and \eqref{s1.2}--\eqref{s1.1.4}, we have that
	\be
	&&\supwd \left| \frac{R_{T_0}(\w,d)-R_{T_1}(\w,d)}{ R_{T_0}(\w,d)} \right|
	\n\\
	&\leq&\wt\xi_{T_0}^{-1}\supwd \left| R_{T_0}(\w,d)-R_{T_1}(\w,d) \right|
	\n\\
	&\leq& 2C_d\wt\xi_{T_0}^{-1}\supw\left|\sumj w_j\left(\frac{1}{T_0}\sumt \E e_{t,y^N}^{(j)} -\frac{1}{T_1}\sumT\E e_{t,y^N}^{(j)} \right) \right|+\wt\xi_{T_0}^{-1}\supw \left| R_{T_0}(\w)-R_{T_1}(\w) \right|\n\\
	&&+{rC_d^2}\wt\xi_{T_0}^{-1}{T_0}^{-1}+{2C_d}\wt\xi_{T_0}^{-1}{T_0}^{-1}\supw\left|\biota_{r}^\top\left( \Z_0-\sumj w_j\Z_j \right)\right|
	\n\\
	&\leq& 2C_d\max\{C_{\text{L}},C_{\text{U}}\}\wt\xi_{T_0}^{-1}\sumj\left| \frac{1}{T_0}\sumt \E e_{t,y^N}^{(j)} -\frac{1}{T_1}\sumT\E e_{t,y^N}^{(j)} \right| +\wt\xi_{T_0}^{-1}\supw \left| R_{T_0}(\w)-R_{T_1}(\w) \right|
	\n\\
	&&+{rC_d^2}\wt\xi_{T_0}^{-1}{T_0}^{-1}+2C_d\wt\xi_{T_0}^{-1}{T_0}^{-1}\supw\left|\biota_{r}^\top\left( \Z_0-\sumj w_j\Z_j \right)\right|
	\n\\
	&=&O\left(\wt\xi_{T_0}^{-1}T_0^{-1/2}J^2\right)+o(1).
	\label{s1.2.1}
	\ee
	Combining \eqref{s1.2.1} with Condition~\ref{as:xi1 with intercept}, we can obtain \eqref{s0.2} without assuming a linear factor DGP. Thus, \eqref{opt0 with intercept relax dgp} holds.
	
	Next, we prove \eqref{opt1 with intercept relax dgp}.
	Based on the arguments in the proof of \eqref{opt1 with intercept} in Appendix~\ref{sec:pf3}, we need to show \eqref{s1.3} without assuming a linear factor DGP.	
	Similar to the arguments of \eqref{s1.1.1}--\eqref{s1.1.2}, when $T_1$ is divergent at rate $O(T_0)$, 
	\be
	\supwd \left| \frac{L_{T_1}(\w,d)-R_{T_1}(\w,d)}{ R_{T_0}(\w,d)} \right|=O_p(\wt\xi_{T_0}^{-1}T_0^{-1/2}J^2)
	\label{s2}
	\ee
	holds under Conditions~\ref{as:data mixing relax dgp}, \ref{as:error term relax dgp}~\eqref{as:moment of error term relax dgp}, \ref{as:error term relax dgp}~\eqref{as:var of post error term relax dgp} and \ref{as:error term with intercept relax dgp}~\eqref{as:var of post error term with intercept relax dgp}. Combining \eqref{s2} with Condition~\ref{as:xi1 with intercept}, we can obtain \eqref{s1.3}. Thus,  \eqref{opt1 with intercept relax dgp} is verified.

	Finally, we consider the proof of \eqref{opt2 with intercept relax dgp} when $T_1$ diverges at rate $O(T_0)$. 
	As \eqref{s1.4}--\eqref{s1.5} hold without assuming a linear factor DGP, combined with~\eqref{s0.2}, \eqref{s1.3} and \eqref{opt1 with intercept relax dgp},
	we obtain $\wt\xi_{T_1}^{-1}L_{T_1}(\wh\w^{\text{DSC}},\wh d^{\text{DSC}})=1+o_p(1)$, which further implies $\wt\xi_{T_1}^{-1}\left\{L_{T_1}(\wh\w^{\text{DSC}},\wh d^{\text{DSC}})-\wt\xi_{T_1}\right\} = o_p(1)$.
	Hence, by the uniform integrability of  $\wt\xi_{T_1}^{-1}\left\{L_{T_1}(\wh\w^{\text{DSC}},\wh d^{\text{DSC}})-\wt\xi_{T_1}\right\}$, we show \eqref{opt2 with intercept relax dgp} in Theorem~\ref{th:opt with intercept relax dgp} when $T_1$ is divergent at rate  $O(T_0)$. This completes the proof of Theorem~\ref{th:opt with intercept relax dgp}.
\end{proof}

\end{document}